\newcommand{\CI}{\mathrel{\perp\mspace{-10mu}\perp}}
\def\section{\@startsection{section}{1}
	\z@{.6\linespacing\@plus\linespacing}{.6\linespacing}{\Large}}
\def\subsection{\@startsection{subsection}{2}
	\z@{.4\linespacing\@plus.7\linespacing}{.4\linespacing}{\large}}
\def\subsubsection{\@startsection{subsubsection}{3}
	\z@{.4\linespacing\@plus.4\linespacing}{-.5em}{\normalfont\bfseries}}
\renewenvironment{proof}[1][\proofname]{\par
	\normalfont \topsep6\p@\@plus6\p@\relax
	\trivlist
	\item[\hskip\labelsep
		\textbf{#1\@addpunct{:}}]\ignorespaces
}{%
	$\blacksquare$\endtrivlist\@endpefalse
}
\renewcommand{\footnoterule}{%
  \kern-3pt
  \hrule width 2in height 0.4pt
  \kern 5.6pt
}
\DeclareMathOperator*{\argmin}{arg\,min}
\numberwithin{equation}{section}
\newtheorem{theorem}{Theorem}[section]
\newtheorem{proposition}{Proposition}[section]
\newtheorem{lemma}{Lemma}[section]
\theoremstyle{definition}
\newtheorem{definition}{Definition}[section]
\theoremstyle{definition}
\newtheorem{assumption}{Assumption}[section]
\theoremstyle{definition}
\newtheorem{example}{Example}[section]
\DeclareTextFontCommand{\bi}{%
	\fontseries\bfdefault 
	\itshape
}
\newcommand{\eqdef}{\coloneqq}
\title{}
\begin{document}
	\vspace*{5ex minus 1ex}
	\begin{center}
		\Large \textsc{Causal Inference with Groupwise Matching}
		\bigskip
	\end{center}
	
	\date{%
		\today%
	}

	\vspace*{7ex minus 1ex}
	\begin{center}
		Ratzanyel Rinc\'on and Kyungchul Song\\
		
		\textit{Vancouver School of Economics, University of British Columbia}
		\bigskip
		
		\today
	\end{center}
	
\begin{abstract}
{\footnotesize This paper examines methods of causal inference based on groupwise matching when we observe multiple large groups of individuals over several periods. We formulate causal inference validity through a generalized matching condition, generalizing the parallel trend assumption in difference-in-differences designs. We show that difference-in-differences, synthetic control, and synthetic difference-in-differences designs are distinguished by the specific matching conditions that they invoke. Through regret analysis, we demonstrate that difference-in-differences and synthetic control with differencing are complementary; the former dominates the latter if and only if the latter's extrapolation error exceeds the former's matching error up to a term vanishing at the parametric rate. The analysis also reveals that synthetic control with differencing is equivalent to difference-in-differences when the parallel trend assumption holds for both the pre-treatment and post-treatment periods. We develop a statistical inference procedure based on synthetic control with differencing and present an empirical application demonstrating its usefulness.}\bigskip

{\footnotesize \ }

{\footnotesize \noindent \textsc{Key words.} Causal Inference; Difference-in-Differences; Synthetic Control Methods; Synthetic Difference-in-Differences; Parallel Trend Assumption; Generalized Matching Conditions}\bigskip

{\footnotesize \noindent \textsc{JEL Classification: C01, C18, C21}}
\end{abstract}
\thanks{We thank Tom Chan, Hiro Kasahara, Yiqi Liu, Paul Schrimpf and participants in Econometrics Lunch Seminar at UBC for valuable comments. We also thank Chun Pang Chow for his excellent research assistance and outstanding work in building the \texttt{R} package implementing the SCD method proposed in this paper. The package is publicly available at \url{https://github.com/ratzanyelrincon/scd}. Song acknowledges that this research was supported by Social Sciences and Humanities Research Council of Canada. All errors are ours. Corresponding
	address: Kyungchul Song, Vancouver School of Economics, University of
	British Columbia, 6000 Iona Drive, Vancouver, BC, V6T 1L4, Canada. Email address: kysong@mail.ubc.ca.}
\maketitle

\pagebreak

\section{Introduction}
 
A recent stream of literature provides a systematic comparison between different causal inference designs, especially between the synthetic control (SC) and other designs such as difference-in-differences (DID) or matching (\cite{Doudchenko/Imbens:ARXIV:17}, \cite{Ferman/Pinto:QE:21}, \cite{Kellogg/Mogstad/Pouliot/Torgovitsky:21:JASA}, \cite{Arkhangelsky/Athey/Hirshberg/Imbens/Wager:AER:21} and \cite{Chen:Eca:23}). However, the comparison falls short of giving a full picture, because it assumes a data structure inspired by the SC methods. The data structure assumes cross-sectional units of similar or smaller magnitude than the time periods. Furthermore, it is not uncommon in this literature that the treatment occurs only for a single cross-sectional unit.

We take an opposite direction by studying the SC design and its variants from the DID perspective, assuming a data structure that involves multiple large groups of individuals observed over a short period of time. Recent advances in the literature of DID designs consider multiple untreated groups such as in settings with staggered adoption and heterogenous causal effects (see \cite{Callaway/SantAnna:JoE:21}, \cite{deChaisemartin/DHaultfoeulle:AER:20}, \cite{Goodman-Bacon:21:JOE}, \cite{Sun/Abraham:JoE:21} and surveys by \cite{deChaisemartin/DHaultfoeulle:EJ:23} and \cite{Roth/SantAnna/Bilinksi/Poe:JoE2023}). Thus, the SC approach naturally maps to this DID framework with multiple ``donor groups'', by matching a counterfactual untreated group mean $\mu_0$ to a weighted average of group means $\mu_j$ in the ``donor pool'':  
\begin{align}
    \label{GMC 0}
    \mu_0 = \sum_{j} \mu_{j} w_j.
\end{align}
We call such causal inference methods \textit{groupwise matching}.\footnote{There are works that use groupwise matching in the SC approach (see \cite{Robbins/Saunders/Kilmer:17:JASA}, \cite{Xu:PA:17}, and \cite{Sun/Xie/Zhang:arXiv:25}). See also \cite{Gunsilius:Eca:23} who use quantiles instead of means in groupwise matching.}

As we show in this paper, the DID design can be thought of as arising from groupwise matching like SC. The main difference lies in the choice of the weights $w_j$. In the SC approach, the weights are chosen to minimize the pre-treatment matching errors, whereas in the DID approach, as this paper shows, the weights are chosen based on the fraction of the group sizes in the donor pool. The difference originates from two distinct thoughts on how we extrapolate the observed untreated outcomes to the counterfactual untreated outcomes for the treated units. The DID method matches the counterfactual mean untreated outcome to a pre-specified surrogate control group, whereas the SC method relies on the \textit{stability of matching} as we move from the pre-treatment to the post-treatment periods.

In this paper, we formalize the complementarity of these two thoughts using a generalized version of the condition (\ref{GMC 0}) that we call Generalized Matching Condition (GMC). More specifically, let $\mu_{j,t}(0)$ be a within-group-differenced, mean untreated potential outcome for group $j$ at time $t$. For a choice of weights $w_j$, the population-level matching error from matching to target group 0 is defined as follows:
\begin{align}
    \label{matching error2}
    e_t(w) = \mu_{0,t}(0) - \sum_{j} \mu_{j,t}(0) w_j,
\end{align}
where the sum is over the groups in the donor pool. Then the GMC simply says that $e_t(w) = 0$ for all post-treatment periods $t$.\footnote{See \cite{Shi/Sridhar/Misra/Blei:22:AISTATS} for an investigation of primitive assumptions that yield this condition.} Recent advances in the SC literature inspire various causal inference methods in this groupwise matching setting, including the classic synthetic control (SC), synthetic difference-in-differences (SDID), and synthetic control with differencing (SCD), and as we show later, the GMC captures their key identifying assumptions.\footnote{The SDID design was proposed by \cite{Arkhangelsky/Athey/Hirshberg/Imbens/Wager:AER:21} and the SCD was considered in their comparison studies in \cite{Ferman/Pinto:QE:21} and \cite{Chen:Eca:23}. Like other methods, they are distinguished by the way the weights and within-group differencing method are chosen in the GMC. Details follow below.}

Within this GMC framework, we focus on the SCD design which applies the SC weights after performing within-group differencing to eliminate time-invariant individual heterogeneity in potential outcomes. DID assigns weights based on the relative sizes of groups within the donor pool. In contrast, SCD chooses weights that best match the weighted average of donor group outcomes to the untreated outcomes for the treated group, yet this matching occurs only on the pre-treatment outcomes, not on the post-treatment outcomes. Consequently, SCD suffers from extrapolation error when the weights that achieve the best pre-treatment match fail to provide an adequate post-treatment match. On the other hand, DID's reliance on group-size-based weights makes it vulnerable to matching error if the surrogate control group is misspecified. Therefore, the relative performance of SCD versus DID depends fundamentally on SCD's extrapolation error against DID's matching error.\footnote{Extrapolation in the SC literature usually refers to the use of a match lying outside the convex combination of the outcomes in the donor pool. On the other hand, extrapolation here refers to the use of the same weights obtained from the pre-treatment fit to produce a surrogate for the post-treatment counterfactual untreated mean outcome.}

We formalize this observation focusing on the setting with point-identified weights. Using the matching errors $e_{t}(w)$ in (\ref{matching error2}), we define the squared sum of matching errors: 
\begin{align*}
    \mathsf{SSME}_d(w) = \frac{1}{|\mathcal{T}_d|} \sum_{t \in \mathcal{T}_d} e_{t}^2(w), \quad d = 0,1,
\end{align*}
where $\mathcal{T}_0$ denotes the set of pre-treatment periods and $\mathcal{T}_1$ that of post-treatment periods. From this, we construct two quantities that are used to evaluate the choice of the weight vector $w$:
\begin{align*}
    \text{Matching Error in Regret: } \mathsf{MER}_d(w) &=  \mathsf{SSME}_d(w) - \inf_{\tilde w \in \Delta_{K-1}} \mathsf{SSME}_d(\tilde w), \text{ and }\\
    \text{Extrapolation Error: } \mathsf{\Delta MER}(w) &=  \mathsf{MER}_1(w) - \mathsf{MER}_0(w).
\end{align*}
Thus, $\mathsf{MER}_d(w)$ measures the matching error in regret form for the choice of weight $w$, whereas $\mathsf{\Delta MER}(w)$ measures how well the matching error in regret is extrapolated from the pre-treatment periods to the post-treatment periods. Let $w^{\mathsf{DID}}$ be the population-level weights specified by the DID design and $w^{\mathsf{SCD}}$ those by the SCD design. Our main result shows that
\begin{align*}
    \text{DID regret-dominates SCD, if }& \mathsf{\Delta MER}(w^{\mathsf{SCD}}) > \mathsf{MER}_1(w^{\mathsf{DID}}) + C \epsilon_n \text{ and }\\
    \text{SCD regret-dominates DID, if }& \mathsf{\Delta MER}(w^{\mathsf{SCD}}) < \mathsf{MER}_1(w^{\mathsf{DID}}) - C \epsilon_n,
\end{align*}
where $\epsilon_n$ is a term that vanishes at the parametric rate (with respect to the size of the cross-sectional units) and $C$ is a universal positive constant. Therefore, the domination of SCD over DID depends on the relative size of the matching error in regret to the extrapolation error.

One might wonder when the designs of DID and SCD are ``equivalent'', in the sense that 
\begin{align*}
    w^{\mathsf{DID}} = w^{\mathsf{SCD}}.
\end{align*}
We demonstrate that this equivalence holds when both pre-treatment and post-treatment parallel trend assumptions hold simultaneously. This latter condition is implicitly invoked in practice when researchers use pre-treatment parallel trend tests as supporting evidence for the post-treatment parallel trend assumption. Such usage assumes that satisfying the post-treatment parallel trend assumption necessarily implies satisfying the pre-treatment parallel trend assumption (\cite{Kahn-Lang/Lang:20:JBES}).\footnote{See \cite{Bilinski/Hatfield:19:arXiv} also for issues with the usual pre-treatment tests and new proposals of tests addressing them.} Under these conditions, our results show that DID and SCD employ identical weights and therefore rely on the same identifying assumption. Nevertheless, the finite-sample performance of estimates from these approaches may still differ.

Our complementarity result demonstrates that SCD emerges as a viable alternative to DID when the parallel trend assumption fails. Unlike approaches that robustify DID against the failure of the parallel trend assumption (see \cite{Manski/Pepper:18:ReStat} and \cite{Rambachan/Roth:23:ReStud}), SCD is inspired by the SC design and replaces the parallel trend assumption by the existence and stability of matching weights before and after the treatment.\footnote{There have been variants of DID that do not require parallel trend assumption. For example, \cite{Freyaldenhoven/Hansen/Shapiro:19:AER} considered a linear panel framework where the violation of parallel trends is permitted and identification is achieved by removing possible confounding through the use of covariates. \cite{Kwon/Roth:24:AEAPP} proposed an empirical Bayes approach.} Just as the plausibility of the parallel trend assumption has to be examined in the specific context of application, so does the stable matching weight assumption of SCD.

While SCD has already been considered in the literature (\cite{Ferman/Pinto:QE:21} and \cite{Chen:Eca:23}), the uniformly valid asymptotic inference for the SCD design for the groupwise matching setting has not been formally developed to the best of our knowledge. We fill this gap by applying the uniformly valid inference on the simplex-valued weights in \cite{Canen/Song:arXiv:25} and developing estimation and asymptotic inference methods for SCD. Our Monte Carlo simulations show how the complementarity between SCD and DID manifests in finite sample performance of the estimators.

To illustrate the usefulness of SCD as a causal inference method, we revisit the empirical setting analyzed in \cite{Bohn/Lofstrom/Raphael:TRES:14} and examine the impact of the 2007 Legal Arizona Workers Act (LAWA) on Arizona's internal composition. We use CPS data between January 1998 and December 2009 and exploit its cross-sectional dimension to provide valid confidence intervals for the treatment effects estimated by SCD. Following the authors, we include 46 states in Arizona's donor pool that did not implement any similar regulation during the period of analysis and focus on the population that is most likely to be affected by the policy change: non-citizen Hispanics. We find that Arizona's share of this demographic group declined by 2 percentage points after LAWA's enactment on average, consistent with the 1.5 percentage point reduction reported in \cite{Bohn/Lofstrom/Raphael:TRES:14}. The average decrease is 1.2 percentage points larger when looking at Arizona's proportion of low-educated non-citizen Hispanics among the population aged 15-45. These results are robust to alternative choices of the pre-treatment window and differencing parameters in the SCD design.
\medskip

\noindent \textbf{Related Literature} The literature of SC designs and DID designs is vast and fast growing. We refer the readers to the survey papers by \cite{Abadie:JEL2021} for the SC approaches, and \cite{deChaisemartin/DHaultfoeulle:EJ:23} and \cite{Roth/SantAnna/Bilinksi/Poe:JoE2023} for the DID designs. Here, we will briefly focus only on some recent studies that attempt to synthesize and/or compare the SC and DID designs.

\cite{Doudchenko/Imbens:ARXIV:17} presented a unifying framework that encompasses four major causal inference approaches (SC, DID, matching and regression). \cite{Kellogg/Mogstad/Pouliot/Torgovitsky:21:JASA} compared SC with matching methods in terms of extrapolation and interpolation bias and proposed a model average estimator of the two approaches. \cite{Arkhangelsky/Athey/Hirshberg/Imbens/Wager:AER:21} synthesized SC and DID into what they called SDID (synthetic difference-in-differences). \cite{Xu:PA:17} assumed a linear factor structure for untreated potential outcomes and proposed extrapolating the estimated factor loadings and factors to accommodate time-varying confounders. In contrast, our focus is on formalizing the complementarity between DID and SC under short panels without relying on a factor structure.

Our findings contrast with recent work by \cite{Ferman/Pinto:QE:21} and \cite{Chen:Eca:23}, both of which showed that SCD dominates DID. \cite{Ferman/Pinto:QE:21} employ a linear factor model to demonstrate that SCD's asymptotic mean-squared error dominates that of DID under large-$|\mathcal{T}_0|$ asymptotics. Our analysis differs fundamentally: we impose no linear factor structure and examine environments with many individuals over short time periods. \cite{Chen:Eca:23} compares SCD and DID through a regret analysis but assumes long time periods, rendering his results uninformative for our setting. Moreover, Chen's risk definition assumes treatment timing is drawn randomly from an approximately uniform distribution. Our analysis takes the opposite extreme, assuming fully known treatment timing.

Closely related is \cite{Sun/Xie/Zhang:arXiv:25}, who also consider short panels and propose identification and inference on the ATT accommodating both DID and SC settings. However, their parallel trend assumption is strong enough to identify the ATT under any design, whereas our comparison uses a weaker version sufficient for identification via DID but not necessarily SC. Also related is \cite{Liu:25:arXiv}, who introduces a synthetic parallel trend assumption similar to our generalized matching condition. Independently, she found that DID methods can be viewed as matching with group-size-based weights. Her focus, however, is on the identified set of counterfactuals and inference thereon, without further restrictions on weights other than that they sum up to one or simplex-constrained. By contrast, our generalized matching condition is parametrized by weights and time-differencing methods, as our goal is to characterize the core identifying scheme underlying both DID and SC, two designs that differ precisely in these choices. The resulting complementarity and equivalence results between DID and SCD are, to our knowledge, new.

The paper is organized as follows. In Section \ref{sec:causal inf groupwise matching}, we provide a basic set-up of causal inference with groupwise matching and the notion of GMC. In Section \ref{sec: causal inf generalized matching}, we show how GMC provides a unifying identification scheme that encompasses various causal inference designs. In Section \ref{sec: Comparison}, we present a regret analysis that shows how the designs of DID and SCD are complementary to each other. In Section \ref{sec: SCD}, we provide estimation and inference of the SCD methods and results on asymptotic theory, followed by the Monte Carlo simulation results. In Section \ref{sec: emp app}, we present an empirical application. In Section \ref{sec: conclusion}, the paper concludes. The mathematical proofs of the results in the paper are found in the Supplemental Note.

\section{Causal Inference with Groupwise Matching}\label{sec:causal inf groupwise matching}

\subsection{The Set-Up}
\label{sec: the model and the target parameter}

We consider a setting with the set $N$ of individuals $i$, divided into $K+1$ disjoint groups. Let $\mathcal{G}\eqdef \{0,1,...,K\}$ be a finite set of group indexes and denote $G_i = j \in \mathcal{G}$ if and only if the individual $i$ belongs to group $j$. The individuals are observed over time $t \in \mathcal{T} \eqdef \{1,2,...,T\}$. Each individual belongs either to the treatment group ($D_{i} = 1$) or the untreated group $(D_{i} = 0)$. All the groups stay untreated until time $t= T^* > 1$, and at time $T^*$, those individuals with $D_{i} = 1$ are treated. We partition $\mathcal{T}$ into $\mathcal{T}_{0}$ and $\mathcal{T}_{1}$, with
\begin{align*}
    \mathcal{T}_0 = \{1,...,T^*-1\}, \text{ and } \mathcal{T}_1 = \{T^*,...,T\}.
\end{align*}
The set $\mathcal{T}_0$ collects the time periods before treatment occurs and $\mathcal{T}_1$ the time periods following treatment. Hereafter, we call $\mathcal{T}_0$ the pre-treatment periods and $\mathcal{T}_1$ the post-treatment periods.  

The potential outcome of an individual $i$ in time $t$ when the individual is treated is denoted by $Y_{i,t}(1)$ and otherwise $Y_{i,t}(0)$. Define the average treatment effect on the treated in period $t$ as
\begin{align}
    \label{theta_t^*}
	\theta_t^* = \mathbf{E}\left[ Y_{i,t}(1) - Y_{i,t}(0) \mid  D_{i} = 1 \right].
\end{align}
The observed outcomes, $Y_{i,t}$, are defined as follows: 
\begin{align}
    \label{Y_{it}}
    Y_{i,t} = D_{i} Y_{i,t}(1) + (1-D_{i}) Y_{i,t}(0).
\end{align}

We introduce basic conditions maintained throughout the paper.

\begin{assumption} 
    \label{assump:basic}

	(i) $P\{D_i = 1\} >0$ and $P\{G_i=j,D_{i} = 0\}>0$, for all $j = 1,...,K$ and $i \in N$.

	(ii) For each $i \in N$ such that $D_i = 1$, we have $Y_{i,t}(1)=Y_{i,t}(0)$, whenever $t< T^*$.
\end{assumption}

Assumption \ref{assump:basic}(i) says that each group consists of a positive fraction of individuals in the population. Assumption \ref{assump:basic}(ii) supposes \textit{no anticipation} of treatment. It says that each individual's potential outcome at time $t$ before the treatment at time $s$ is the same as that when the person is never treated. As we show in the following example, this setting accommodates the DID with discrete covariates and the staggered adoption in the DID literature (\cite{Callaway/SantAnna:JoE:21}, \cite{deChaisemartin/DHaultfoeulle:AER:20}, and \cite{Sun/Abraham:JoE:21}).

\begin{example}[Difference-in-Differences with Discrete Covariates]
    Consider the two-period DID setting with covariates. Suppose that $T = 2$ and each individual $i \in N$ is endowed with the discrete covariate $X_i \in \{x_1,...,x_K\}$, and belongs to either the treated group ($D_i = 1$) or the untreated group ($D_i = 0$). The potential outcomes are given as $Y_{i,t}(1)$ and $Y_{i,t}(0)$ for $t=1,2$. This setting is mapped to the above general setting, by setting $G_i = j$ if and only if $X_i = x_j$. The treatment time $T^*$ is taken to be the second time, i.e., $T^*=2$. $\blacksquare$
\end{example}

\begin{example}[Difference-in-Differences with Staggered Adoption]
    \label{exmpl: staggered adoption}
In this example, each group may experience different treatment timing. First, we define $D_{j,t}$ as a binary variable equal to one if group $j$ is treated at time $t$ and zero otherwise. We decompose the group index set $\mathcal{G}$ as follows:
\begin{align*}
    \mathcal{G} = \{0\} \cup \mathcal{G}_{\mathsf{don}},
\end{align*}
where $\mathcal{G}_{\mathsf{don}} = \{1,2,...,K\}$. Our focus is on the effect of the first-time treatment of the target group $0$. Let $T^*$ be the first time that group 0 is treated. The groups in $\mathcal{G}_{\mathsf{don}}$ have not been treated until after $t = T$ and thus form a ``donor pool'' for the target group.

We assume that $T^*>1$. For each individual $i \in N$ such that $D_{G_i,t} = 1$ for some $t \in \mathcal{T}$, let 
\begin{align*}
    T_i = \min\{ t \in \mathcal{T}:D_{G_i,t}=1\}
\end{align*}
be the period in which individual $i$ is first treated. Hence, $T_i = T^*$ for all $i$ in the target group 0. We set $T_i=0$ for an individual who is never treated. We let $Y_{i,t}^*(s)$ be the potential outcome of individual $i$ at time $t$ when group $G_i$ is first treated at time $1 \le s \le T$. The quantity $Y_{i,t}^*(0)$ represents the potential outcome of the individual $i$ when never treated. 

Our parameter of interest is the average treatment effect on the target group $0$: 
\begin{align*}
    \theta_t^* = \mathbf{E}[Y_{i,t}^*(T^*) - Y_{i,t}^*(0) \mid G_i = 0], \text{ for } t \in \mathcal{T}_{1}.
\end{align*}
The observed outcomes are given as follows:
\begin{align}
    \label{Y_{it}2}
	Y_{i,t} = D_{G_i,t} Y_{i,t}^*(T_i) + (1 - D_{G_i,t}) Y_{i,t}^*(0).
\end{align}

We consider the following assumptions.\medskip

(i) $Y_{i,t}^*(s) = Y_{i,t}^*(0)$ for all $t < s$ with $t,s \in \mathcal{T}$ and $i \in N$.

(ii) $D_{j,1} = 0$ for all $j \in \mathcal{G}$, if $T \ge 2$.

(iii) $D_{j,t} \le D_{j,t+1}$ for all $j \in \mathcal{G}$ and $t = 1,...,T-1$.

(iv) (a) $T_i = T^*$ whenever $G_i = 0$, and (b) $T_i = 0$ whenever $G_i \in \mathcal{G}_{\mathsf{don}}$.

(v) For each $j \in \mathcal{G}$, $P\{G_i = j\}>0$.\medskip 

Condition (i) is the standard no-anticipation assumption. The next two conditions require that no group is treated at the initial period (ii) and that treatments arise in a staggered manner (iii). Furthermore, (iv) stipulates that $T^*$ is the first time group $0$ is treated and that the donor pool groups remain untreated through period $T$, while (v) ensures that each group has a positive membership probability. Note that under Condition (iii), $D_{G_i,t} = 0$ indicates that individual $i$ has not been treated by time $t$, whereas $D_{G_i,t} = 1$ indicates treatment at or before time $t$.

Now, we show how this setting maps to the general setting above. We define
\begin{align*}
    D_{i} = 1\{G_i = 0\}, \text{ for all } i \in N.
\end{align*}
In this case, individuals with $D_i = 1$ represent people belonging to group 0, while those with $D_i = 0$ correspond to people who belong to groups that remain untreated until after $T$. Then, we set $Y_{i,1}(1) = Y_{i,1}(0) = Y_{i,1}^*(0)$ for all $i \in N$. And, for $1 < t \in \mathcal{T}$, we set
\begin{align*}
    Y_{i,t}(1) = Y_{i,t}^*(T_i) \text{ and } Y_{i,t}(0) = Y_{i,t}^*(0), \text{ for all } i \in N.
\end{align*}
Hence, for $t \in \mathcal{T}_{1}$, we can write
\begin{align*}
    \theta_t^* = \mathbf{E}[Y_{i,t}(1) - Y_{i,t}(0) \mid D_i = 1].
\end{align*}
Furthermore, the observed outcomes, $Y_{i,t}$, defined in (\ref{Y_{it}2}) coincide with those defined in (\ref{Y_{it}}), for all $i \in N$ and $t \in \mathcal{T}$. It is not hard to see that Assumption \ref{assump:basic} is also satisfied. $\blacksquare$
\end{example}     

\subsection{Generalized Matching}

The causal effect of a treatment in an experimental setting is captured by the difference in outcomes between the treated and control groups. In a non-experimental setting, a control group is not available, which requires constructing a comparison group as a surrogate for the control group. This approach is valid only if the outcomes of the comparison group are ``matched'' to the counterfactual untreated outcomes of the treated group. 

To express this idea, for each $j \in \mathcal{G}_{\mathsf{don}}$, and $t \in \mathcal{T}$, define
\begin{align*}
    m_{0,t}(0) = \mathbf{E}\left[ Y_{i,t}(0) \mid D_i = 1 \right] \text{ and } m_{j,t}(0) = \mathbf{E}\left[ Y_{i,t}(0) \mid D_i = 0, G_i = j \right],
\end{align*}
and their observed counterparts: 
\begin{align*}
    m_{0,t} = \mathbf{E}\left[ Y_{i,t} \mid D_i = 1 \right] \text{ and } m_{j,t} = \mathbf{E}\left[ Y_{i,t} \mid D_i = 0, G_i = j \right].
\end{align*}

Let $\Delta_{|\mathcal{T}_0|-1} \subset \mathbf{R}^{|\mathcal{T}_0|}$ be the simplex in $\mathbf{R}^{|\mathcal{T}_0|}$. Given $\lambda = (\lambda_s)_{s \in \mathcal{T}_0} \in \Delta_{|\mathcal{T}_0|-1}$, and $j \in \mathcal{G}$, we introduce a \bi{within-group $\lambda$-differencing} of $m_{j,t}(0)$ and $m_{j,t}$ as follows:
\begin{align*}
	\mu_{j,t}(0;\lambda) = m_{j,t}(0) - \sum_{s \in \mathcal{T}_0} m_{j,s}(0)\lambda_s \text{ and } \mu_{j,t}(\lambda) = m_{j,t} - \sum_{s \in \mathcal{T}_0} m_{j,s} \lambda_s.
\end{align*}
One example is to subtract the most recent pre-treatment potential outcome so that
\begin{align}
    \label{stochastic differencing}
    \mu_{j,t}(0;\lambda^{\mathsf{DID}}) = m_{j,t}(0) - m_{j,T^*-1}(0),
\end{align}
where $\lambda_s^{\mathsf{DID}} = 1\{s = T^* - 1 \}$. This differencing is adopted in the DID designs of \cite{Callaway/SantAnna:JoE:21}, \cite{Sun/Abraham:JoE:21}, and \cite{deChaisemartin/DHaultfoeulle:AER:20}. Another example is the uniform differencing  
\begin{align*}
    \mu_{j,t}(0;\lambda^{\mathsf{unif}}) = m_{j,t}(0) - \frac{1}{|\mathcal{T}_0|} \sum_{s \in \mathcal{T}_0} m_{j,s}(0),
\end{align*}
where $\lambda_s^{\mathsf{unif}}=1/|\mathcal{T}_0|$ (see \cite{Wooldridge:21:WP} and \cite{Lee/Wooldridge:24:WP} for DID applications and \cite{Arkhangelsky/Athey/Hirshberg/Imbens/Wager:AER:21} and \cite{Chen:Eca:23} in the SDID and SCD designs).

For each $w \in \Delta_{K-1}$, $\lambda \in \Delta_{|\mathcal{T}_0|-1}$ and $t \in \mathcal{T}$, we define a \bi{between-group $w$-differencing} of the $\lambda$-differenced average potential outcomes as follows:
\begin{align*}
    e_t(\lambda,w) = \mu_{0,t}(0;\lambda) - \sum_{j=1}^K \mu_{j,t}(0;\lambda) w_j.
\end{align*}
We call the quantity $e_t(\lambda,w)$ the \bi{matching error} from matching $\mu_{0,t}(0;\lambda)$ with a weighted average of group means $\mu_{j,t}(0;\lambda)$ in the donor pool. 

\begin{definition} Let $\lambda \in \Delta_{|\mathcal{T}_0|-1}$ be given.
    
(i) For $w \in \Delta_{K-1}$, we say that \bi{Generalized Matching Condition (GMC) holds at $(\lambda,w)$}, if 
\begin{align*}
    e_t(\lambda,w) = 0, \text{ for all } t \in \mathcal{T}_1.
\end{align*}

(ii) We say that \bi{Stable Matching Condition (SMC) holds at $\lambda$}, if for some $w \in \Delta_{K-1}$,
\begin{align*}
    e_t(\lambda,w) = 0, \text{ for all } t \in \mathcal{T}.
\end{align*}
\end{definition}

Suppose that GMC holds at $(\lambda,w)$. This means that we can \textit{transfer} the $w$-weighted average of the expected untreated potential outcomes in the donor pool (after the within-group $\lambda$-differencing) to the corresponding counterfactual quantity in the target group. To see the role of GMC in identifying $\boldsymbol{\theta}^*$, we decompose the target parameter $\theta_t^*$ as follows:\footnote{The proof is simple and found in the Supplemental Note.}
\begin{align}
    \label{theta^* decomp}
    \theta_t^* = \theta_t(\lambda,w) - e_t(\lambda,w),
\end{align} 
where
\begin{align*}
    \theta_t(\lambda,w) = \mu_{0,t}(\lambda) - \sum_{j=1}^{K} \mu_{j,t}(\lambda) w_j.
\end{align*}
Once we invoke GMC at $(\lambda,w)$, we obtain the following identification:
\begin{align}
    \label{ident ATT}
    \boldsymbol{\theta}^* = \boldsymbol{\theta}(\lambda,w),
\end{align}
where 
\begin{align*}
    \boldsymbol{\theta}(\lambda,w) = [\theta_{T^*}(\lambda,w),...,\theta_{T}(\lambda,w)]'.
\end{align*}
As we will see later, many causal inference designs are distinguished by how $\lambda$ and $w$ are specified. Due to the use of groupwise matching, the estimand $\boldsymbol{\theta}(\lambda,w)$ depends on the individual-level observations only through the group averages $m_{j,t}$. Hence, the causal inference framework accommodates both repeated cross-sections and panel data.

\subsubsection{Generalized Matching Conditions under a Linear Factor Model}

The literature often specifies the potential outcomes as a linear factor model to analyze a causal inference method (see \cite{Abadie/Diamond/Hainmueller:10:JASA}, \cite{Xu:PA:17}, \cite{Ferman/Pinto:QE:21}, \cite{Arkhangelsky/Athey/Hirshberg/Imbens/Wager:AER:21}). While our results do not rely on a linear factor model, it is interesting to study the implication of this model for GMC. 

Consider the untreated potential outcomes specified as a factor model:
\begin{align}
    \label{factor model}
    Y_{i,t}(0) = \Lambda_{i}' F_t + \varepsilon_{i,t} \text{ and } Y_{i,t}(1) = Y_{i,t}(0) + \tau_{i,t},
\end{align} 
where $\Lambda_i \in \mathbf{R}^M$ denotes the factor loading of individual $i$, $F_t \in \mathbf{R}^M$, the factor at period $t$, $\varepsilon_{i,t}$, idiosyncratic components, and $\tau_{i,t}$ denotes the time-varying, heterogeneous treatment effects. We assume that $D_i = 1$ if and only if $G_i = 0$, so that there is a treated group $G_i = 0$ and all other groups are control groups. The number $M$ represents the number of factors. As for the factor model, we make the following assumption.

\begin{assumption}
    \label{assump: factor}
    (i) The factor loadings, $\Lambda_i$, $i \in N$, are i.i.d.

    (ii) The factors, $F_t$, $t \in \mathcal{T}$, are constants.
    
    (iii) For each $i \in N$ and $t \in \mathcal{T}$, $\mathbf{E}[\varepsilon_{i,t} \mid G_i] = 0.$ 
\end{assumption}

The condition (ii) is motivated by the data structure of our setting where our observations span over only a short period. Hence, the distribution of the factors is not consistently estimable even if the factors are observed (see \cite{Kuersteiner/Prucha:Eca:20}). The rest of the analysis carries over to the case of stochastic factors, once we replace probabilities and expectations by conditional probabilities and conditional expectations given the factors.

We introduce the $\lambda$-differenced versions of the factors:
\begin{align*}
    F_t(\lambda) = F_t - \sum_{s \in \mathcal{T}_0} F_s \lambda_{s},
\end{align*}
and collect them into a matrix, $\mathbf{F}(\lambda) = \left[ F_{T^*}(\lambda),...,F_{T}(\lambda) \right]$. Then, it follows that when $\mathbf{F}(\lambda)$ is full row rank, the GMC holds at some $(\lambda,w)$ if and only if the GMC holds at $(\tilde \lambda,w)$ for all $\tilde \lambda \in \Delta_{|\mathcal{T}_0|-1}$. Hence, the choice of $\lambda$ in the $\lambda$-differencing does not matter for identification, as long as the GMC holds at some $\lambda$. We formalize this into the following proposition.

\begin{proposition}
    \label{prop: GMC factor}
    Suppose that Assumption \ref{assump: factor} holds and let $\lambda \in \Delta_{|\mathcal{T}_0|-1}$ and $w \in \Delta_{K-1}$.\medskip
    
    (i) GMC holds at $(\lambda,w)$.
    
    (ii) GMC holds at $(\tilde \lambda,w)$ for all $\tilde \lambda \in \Delta_{|\mathcal{T}_0|-1}$.

    (iii)
    \begin{align}
        \label{factor loadings equality}
        \mathbf{E}[ \Lambda_i \mid G_i = 0] = \sum_{j=1}^K \mathbf{E}[ \Lambda_i \mid G_i = j] w_j.
    \end{align}
    Then, (iii) $\Rightarrow$ (ii) $\Rightarrow$ (i). If, furthermore, $\mathbf{F}(\lambda)$ is full row rank for some $\lambda \in \Delta_{|\mathcal{T}_0|-1}$, then the three statements are equivalent for all $w \in \Delta_{K-1}$.
\end{proposition}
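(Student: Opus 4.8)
The plan is to reduce everything to a single linear-algebraic identity for the matching error under the factor model (\ref{factor model}). First I would compute the group means. Writing $\beta_j \eqdef \mathbf{E}[\Lambda_i \mid G_i = j]$, Assumption \ref{assump: factor}(i) ensures this conditional mean depends only on the group label $j$, while parts (ii) and (iii) give $m_{j,t}(0) = \beta_j' F_t$ for every $j$ and $t$, since $\mathbf{E}[\varepsilon_{i,t}\mid G_i]=0$ and $F_t$ is a constant. In particular $D_i = 1 \iff G_i = 0$ gives $m_{0,t}(0) = \beta_0' F_t$.

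Next I would push this through the differencing operators. Linearity of the within-group $\lambda$-differencing gives $\mu_{j,t}(0;\lambda) = \beta_j' F_t(\lambda)$ with $F_t(\lambda) = F_t - \sum_{s\in\mathcal{T}_0} F_s\lambda_s$, and hence the between-group $w$-differencing yields
\begin{align*}
    e_t(\lambda,w) = \Big(\beta_0 - \sum_{j=1}^K \beta_j w_j\Big)' F_t(\lambda) = \delta(w)' F_t(\lambda), \quad \delta(w) \eqdef \beta_0 - \sum_{j=1}^K \beta_j w_j.
\end{align*}
The point of this identity is that condition (iii), equation (\ref{factor loadings equality}), is exactly $\delta(w) = 0$, and that the dependence of $e_t(\lambda,w)$ on $\lambda$ is routed entirely through $F_t(\lambda)$.

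The forward implications are then immediate. If $\delta(w)=0$ then $e_t(\tilde\lambda,w) = 0$ for every $t$ and every $\tilde\lambda \in \Delta_{|\mathcal{T}_0|-1}$, which is (ii); specializing $\tilde\lambda = \lambda$ gives (i), so (iii) $\Rightarrow$ (ii) $\Rightarrow$ (i). For the identification claim, (i) supplies $e_t(\lambda,w)=0$ on $\mathcal{T}_1$, so the decomposition (\ref{theta^* decomp}) collapses to $\theta_t^* = \theta_t(\lambda,w)$ for each $t \in \mathcal{T}_1$, that is $\boldsymbol{\theta}^* = \boldsymbol{\theta}(\lambda,w)$.

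The only real work is the converse under the rank condition, which I expect to be the main obstacle, largely because of the quantifier ``for some $\lambda$''. Stacking the matching errors over $t \in \mathcal{T}_1$, (i) reads $\delta(w)'\mathbf{F}(\lambda)=0$. When $\mathbf{F}(\lambda)$ has full row rank, $\mathbf{F}(\lambda)\mathbf{F}(\lambda)'$ is invertible, and post-multiplying by $\mathbf{F}(\lambda)'$ and then its inverse forces $\delta(w)=0$, i.e. (iii); this closes the cycle. The subtle point is that (i) is stated at the fixed $\lambda$, so the argument applies directly only when that $\lambda$ is itself the full-rank one. The cleaner route handling the quantifier is through (ii): since (ii) asserts GMC at every $\tilde\lambda$, it holds in particular at the $\lambda^*$ with $\mathbf{F}(\lambda^*)$ of full row rank, whence $\delta(w)'\mathbf{F}(\lambda^*)=0$ again yields $\delta(w)=0$ and hence (iii). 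I would note that full row rank of the $M \times |\mathcal{T}_1|$ matrix $\mathbf{F}(\lambda)$ presupposes $|\mathcal{T}_1| \ge M$, i.e. enough post-treatment periods relative to the number of factors, which is the substantive content of the rank requirement.
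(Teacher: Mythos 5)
Your proof is correct and follows essentially the same route as the paper's: both rest on the identity $e_t(\lambda,w) = \bigl(\mathbf{E}[\Lambda_i'\mid G_i=0]-\sum_{j=1}^K\mathbf{E}[\Lambda_i'\mid G_i=j]w_j\bigr)F_t(\lambda)$, from which the forward implications are immediate and the converse follows by right-multiplying by $\mathbf{F}(\lambda)'(\mathbf{F}(\lambda)\mathbf{F}(\lambda)')^{-1}$ under full row rank. Your additional remarks---the explicit derivation of the identity from the factor model and the observation that the ``for some $\lambda$'' quantifier is cleanest to handle by passing through (ii)---are sound refinements of what the paper leaves implicit.
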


The full row rank condition for $\mathbf{F}(\lambda)$ requires that $M \le |\mathcal{T}_1|$, that is, the number of the factors is less than the number of the post-treatment periods. This condition is immediately satisfied in the case of a single-factor model. The full rank condition is not required for identification of $\boldsymbol{\theta}^*$ once (\ref{factor loadings equality}) is satisfied for some $(\lambda,w)$. However, if the full rank condition holds, we have 
\begin{align*}
    \boldsymbol{\theta}^* = \boldsymbol{\theta}(\tilde \lambda,w) \text{ for all } \tilde \lambda \in \Delta_{|\mathcal{T}_0|-1},
\end{align*}
i.e., $\boldsymbol{\theta}^*$ is overidentified. For the identification, the researcher does not need to specify the within-group differencing, $\tilde \lambda$, that satisfies GMC.

\section{Causal Inference Methods using Generalized Matching}
\label{sec: causal inf generalized matching}

In this section, we show how GMC is used as key identifying restrictions in various causal inference designs. We classify them into two categories, one using GMC with weights based on group sizes and the other using GMC with weights based on pre-treatment fit.

\subsection{Matching with Weights Based on Group Sizes}

\subsubsection{Randomized Controlled Trials}

First, note that the design of randomized control trials (RCT) can be viewed as a degenerate example of GMC, where we do not have the initial period of no treatment, i.e., $|\mathcal{T}_0| = 0$. The design assumes that the potential outcomes are independent of the treatment status and yields the following form of GMC at $(0,1)$:
\begin{align*}
    e_1(0,1) = \mathbf{E}[Y_{i,1}(0) \mid D_i = 1] - \mathbf{E}[Y_{i,1}(0) \mid D_i = 0] = 0.
\end{align*}
The parameter $\theta_t^*$ is identified as 
\begin{align*}
    \theta_1^* = \theta_1(0,1) = \mathbf{E}[Y_{i,1} \mid D_i = 1] - \mathbf{E}[Y_{i,1} \mid D_i = 0].
\end{align*}

\subsubsection{Unconfoundedness Condition with Discrete Covariates}

Consider the unconfoundedness condition on discrete random vector $X_i$ with support $\{x_1,...,x_K\}$: 
\begin{align*}
    (Y_i(1),Y_i(0)) \CI D_i \mid X_i.
\end{align*}
Our object of interest is the ATT, $\theta_1^* = \mathbf{E}[Y_{i,1}(1) - Y_{i,1}(0) \mid D_i = 1]$. We take $G_i = j$ if and only if $X_i = x_j$. The unconfoundedness condition, together with the overlap condition, $P\{D_i = 1 \mid G_i = j\}>0$ for all $j$, yields the following: 
\begin{align*}
    0 &= \mathbf{E}[Y_{i,1}(0) \mid D_i = 1] - \mathbf{E}\left[\mathbf{E}[Y_{i,1}(0) \mid D_i = 0, G_i] \mid D_i = 1 \right]\\
    &= \mathbf{E}[Y_{i,1}(0) \mid D_i = 1] - \sum_{j=1}^K \mathbf{E}[Y_{i,1}(0) \mid D_i = 0, G_i = j] w_j^{\mathsf{C}},
\end{align*}
where 
\begin{align}
    \label{group size}
    w_j^{\mathsf{C}} = P\{G_i = j \mid D_i = 1\}.
\end{align}
The identifying assumption is essentially GMC at $(0,w^{\mathsf{C}})$ with $w^{\mathsf{C}} = (w_j^{\mathsf{C}})$.

\subsubsection{Difference-in-Differences}

First, consider the two-period setting $T = 2$ of the classic DID, where $Y_{i,t}(d)$ denotes the potential outcome at time $t = 1,2$ at the treatment state $d \in \{0,1\}$. We consider the following form of the RCT after within-group differencing:
\begin{align*}
    Y_{i,t}(0;\lambda) \CI D_i,
\end{align*}
where $Y_{i,t}(0;\lambda) = Y_{i,t}(0) - \sum_{s \in \mathcal{T}_0} \lambda_s Y_{i,s}(0)$ and $\lambda = 1$ (since $|\mathcal{T}_0| = 1$). This yields the following parallel trend assumption: 
\begin{align*}
    0 &= \mathbf{E}[Y_{i,t}(0;\lambda) \mid D_i = 1] - \mathbf{E}[Y_{i,t}(0;\lambda) \mid D_i = 0] \\
    &= e_t(\lambda,1) = e_t(1,1).
\end{align*}
Thus, the parallel trend assumption is nothing but the GMC at $(1,1)$.

\subsubsection{Difference-in-Differences with Discrete Covariates}

We consider the two-period setting as before, but consider the following form of the unconfoundedness condition instead:\footnote{The unconfoundedness condition for within-group differenced outcomes was studied by \cite{Heckman/Ichimura/Todd:TRES:97}. They showed the efficacy of the differencing using Job Training Program Act (JTPA) data. See \cite{Smith/Todd:JoE:05} for a similar observation using National Supported Work (NSW) data.} 
\begin{align*}
    Y_{i,t}(0;\lambda) \CI D_i \mid X_i,
\end{align*}
with $X_i \in \{x_1,...,x_K\}$ being a discrete random vector. As before, if we let $G_i = j$ if and only if $X_i = x_j$, this condition, together with the overlap condition, yields GMC at $(\lambda,w^{\mathsf{C}})$ as follows: 
\begin{align*}
    0 &= \mathbf{E}[Y_{i,t}(0;\lambda) \mid D_i = 1] - \mathbf{E}\left[\mathbf{E}[Y_{i,t}(0;\lambda) \mid D_i = 0, G_i] \mid D_i = 1\right]\\
     &=  e_t(\lambda,w^{\mathsf{C}}) = e_t(1,w^{\mathsf{C}}),
\end{align*}
where $w^{\mathsf{C}} = (w_j^{\mathsf{C}})$ with $w_j^{\mathsf{C}}$ defined in (\ref{group size}).

\subsubsection{Difference-in-Differences with Staggered Adoption}
\label{subsubsec: DiD SA}
We consider the staggered adoption setting of Example \ref{exmpl: staggered adoption}. We show that GMC characterizes the key identifying assumptions in the DID settings. Let us consider the parallel trend assumptions (PTA) used in the literature. Let $\Delta Y_{i,t}(0) = Y_{i,t}(0) - Y_{i,t-1}(0)$, for $t \in \{2,...,T\}$. Consider the two types of PTA as follows.\medskip

\textbf{PTA-I:} $\mathbf{E}[\Delta Y_{i,t}(0) \mid G_i = 0] = \mathbf{E}[\Delta Y_{i,t}(0) \mid G_i \in \mathcal{G}_{\mathsf{don}}]$, for all $t \in \mathcal{T}_1$.

\textbf{PTA-II:} $\mathbf{E}[\Delta Y_{i,t}(0) \mid G_i = 0] = \mathbf{E}[\Delta Y_{i,t}(0) \mid G_i  = j]$, for all $t \in \mathcal{T}_1$ and $j \in \mathcal{G}_{\mathsf{don}}$.\medskip

PTA-I states that the average of the untreated potential outcomes of group 0 and those in its donor pool would have evolved in parallel in the absence of treatment (\cite{Callaway/SantAnna:JoE:21}). PTA-II is a stronger version of PTA-I, imposing parallel trends of untreated outcomes across all groups (similar to the exogeneity condition in \cite{deChaisemartin/DHaultfoeulle:AER:20} and \cite{Sun/Abraham:JoE:21}).

The following result shows a close connection between the PTA and the GMC.\footnote{The connection between PTA and GMC can be viewed as an extension of the observation in \cite{Doudchenko/Imbens:ARXIV:17} to the setting of multiple donor groups. See \cite{Liu:25:arXiv} for a related result.}

\begin{proposition}
    \label{prop: PTA}
    Suppose that Assumption \ref{assump:basic} holds, and let 
    \begin{align*}
        w^{\mathsf{DID}} = [w_{1}^{\mathsf{DID}},...,w_{K}^{\mathsf{DID}}]', 
    \end{align*}
    where
    \begin{align}
    \label{DID}
        w_{j}^{\mathsf{DID}} = P\left\{G_i = j \mid G_i \in \mathcal{G}_{\mathsf{don}} \right\}.
    \end{align} 
    Then, the following statements hold.

    (i) PTA-I holds if and only if GMC holds at $(\lambda^{\mathsf{DID}},w^{\mathsf{DID}})$.

    (ii) PTA-II holds if and only if GMC holds at $(\lambda^{\mathsf{DID}},w)$ for all $w \in \Delta_{K-1}$.
\end{proposition}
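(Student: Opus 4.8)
\section*{Proof proposal}

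The plan is to reduce both equivalences to a single elementary fact: a sequence of partial sums indexed by $\mathcal{T}_1$ vanishes if and only if every increment vanishes. First I would use $D_i = 1\{G_i = 0\}$ to line up the conditioning events. For $j \in \mathcal{G}_{\mathsf{don}}$ the event $\{D_i = 0, G_i = j\}$ coincides with $\{G_i = j\}$, so that $m_{j,t}(0) = \mathbf{E}[Y_{i,t}(0) \mid G_i = j]$ and $m_{0,t}(0) = \mathbf{E}[Y_{i,t}(0) \mid G_i = 0]$. Since $\lambda^{\mathsf{DID}}_s = 1\{s = T^*-1\}$, the within-group differencing collapses to $\mu_{j,t}(0;\lambda^{\mathsf{DID}}) = m_{j,t}(0) - m_{j,T^*-1}(0)$, i.e. the cumulative change in the untreated group mean from the base period $T^*-1$ to $t$.

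The crucial bridge is a telescoping identity that rewrites this cumulative change as a partial sum of the one-period drifts appearing in the PTA. Setting $g_{j,s} \eqdef \mathbf{E}[\Delta Y_{i,s}(0) \mid G_i = j]$, for every $t \in \mathcal{T}_1$ I would write
\begin{align*}
    \mu_{j,t}(0;\lambda^{\mathsf{DID}}) = m_{j,t}(0) - m_{j,T^*-1}(0) = \sum_{s=T^*}^{t} g_{j,s}.
\end{align*}
Substituting into the matching error then yields
\begin{align*}
    e_t(\lambda^{\mathsf{DID}},w) = \sum_{s=T^*}^{t} \Big( g_{0,s} - \sum_{j=1}^K g_{j,s}\, w_j \Big),
\end{align*}
so that the matching error at $t$ is exactly the partial sum up to $t$ of the per-period matching discrepancies.

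For part (i), I would apply the tower property to the right-hand side of PTA-I, obtaining $\mathbf{E}[\Delta Y_{i,s}(0) \mid G_i \in \mathcal{G}_{\mathsf{don}}] = \sum_{j=1}^K g_{j,s}\, P\{G_i = j \mid G_i \in \mathcal{G}_{\mathsf{don}}\} = \sum_{j=1}^K g_{j,s}\, w_j^{\mathsf{DID}}$. Hence PTA-I asserts that the per-period discrepancy $h_s \eqdef g_{0,s} - \sum_{j=1}^K g_{j,s}\, w_j^{\mathsf{DID}}$ vanishes for every $s \in \mathcal{T}_1$, whereas GMC at $(\lambda^{\mathsf{DID}},w^{\mathsf{DID}})$ asserts that every partial sum $\sum_{s=T^*}^{t} h_s$ vanishes; the forward direction is immediate and the reverse follows by differencing consecutive partial sums to recover each $h_t$. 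For part (ii), I would note that demanding GMC at $(\lambda^{\mathsf{DID}},w)$ for \emph{all} $w \in \Delta_{K-1}$ is, by evaluating at the vertices of the simplex, equivalent to $\mu_{0,t}(0;\lambda^{\mathsf{DID}}) = \mu_{j,t}(0;\lambda^{\mathsf{DID}})$ for every $j \in \mathcal{G}_{\mathsf{don}}$ and $t \in \mathcal{T}_1$, the converse being linearity of the weighted average over the simplex together with $\sum_j w_j = 1$. The same telescoping-plus-differencing step then converts this equality of cumulative changes into the per-period equalities $g_{0,s} = g_{j,s}$ that define PTA-II.

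The main obstacle is conceptual rather than computational: PTA is phrased in consecutive first differences, while the GMC under $\lambda^{\mathsf{DID}}$ is phrased in cumulative differences relative to the single base period $T^*-1$. The argument hinges on recognizing that these two encodings carry identical information, which is precisely the statement that the partial-sum map and the increment map are mutually inverse on sequences indexed by $\mathcal{T}_1$. Once this is in hand both equivalences are routine; the only bookkeeping to watch is that $t = T^*$ must be treated as the first partial sum (a single term), and that the tower-property step in (i) relies on Assumption \ref{assump:basic}(i) to ensure all conditioning events carry positive probability.
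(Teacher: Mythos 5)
Your proof is correct and rests on the same essential ingredients as the paper's: the telescoping identity converting the $\lambda^{\mathsf{DID}}$-differenced means into partial sums of the one-period drifts, the tower property to identify the donor-pool average with the $w^{\mathsf{DID}}$-weighted average, and evaluation at the simplex vertices for part (ii). The only difference is organizational: the paper factors the argument through the intermediate conditions PTA($\lambda$) and PTA-U($\lambda$) (Lemmas A.2 and A.3 in the appendix), which buys it the more general statement of Proposition 4.1 for any $\lambda$ with $e^{\mathsf{DID}}_{T^*-1}(\lambda)=0$, whereas your direct partial-sum/increment formulation is specialized to $\lambda^{\mathsf{DID}}$ but is, for this proposition alone, a cleaner packaging of the same idea.
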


This proposition shows that the PTA is represented as GMC. Thus, the target parameter $\boldsymbol{\theta}^*$ is identified as $\boldsymbol{\theta}(\lambda^{\mathsf{DID}},w^{\mathsf{DID}})$ under either PTA-I or PTA-II. Instead of choosing the weight $w$ based on the pre-treatment matching of the potential outcomes as in the SC design, the DID design simply chooses the weight $w$ to be the group size-based one $w^{\mathsf{DID}}$ in (\ref{DID}). Under the stronger version PTA-II, the choice of the weight $w$ is irrelevant, as GMC holds for all weights.

It is interesting to note that the identification scheme (\ref{ident ATT}) is related to the proposal by \cite{Sun/Xie/Zhang:arXiv:25}.\footnote{\cite{Sun/Xie/Zhang:arXiv:25} also considered conditioning on covariates, and for estimation, proposed using an estimated weight $\hat w^{\mathsf{SC}}$ that is not restricted to the simplex $\Delta_{K-1}$. For brevity, we do not consider conditioning on covariates throughout the paper and focus on the main conceptual difference between the two approaches of DID and SCD.} The unconditional version of their model (without covariates) involves PTA-II, which is equivalent to GMC at $(\lambda^{\mathsf{DID}},w)$ for all $w \in \Delta_{K-1}$, and the SC assumption which is tantamount to SMC at $0$, with $w^{\mathsf{SC}}$ identified as the weight $w$ satisfying $e_t(0,w) = 0$ for all $t \in \mathcal{T}_0$. It is not hard to see that both PTA-II and the SC assumption imply GMC at $(\lambda^{\mathsf{DID}},w^{\mathsf{SC}})$. Hence, we can identify $\boldsymbol{\theta}^*$ as $\boldsymbol{\theta}(\lambda^{\mathsf{DID}},w^{\mathsf{SC}})$ when either PTA-II or the SC assumption holds. This is the essence of their doubly robust identification of the ATT in \cite{Sun/Xie/Zhang:arXiv:25}. However, PTA-II is stronger than PTA-I and the latter is enough to identify the ATT in the DID design.

\subsection{Matching with Weights Based on Pre-Treatment Fit}

The literature of SC inspires various causal inference methods in the groupwise matching setting. These methods are distinct from the previous methods, as they rely on GMC with weights based on the pre-treatment fit of the outcomes. For the following examples, we focus on the setting of staggered adoption in Example \ref{exmpl: staggered adoption} and Section \ref{subsubsec: DiD SA}.

\subsubsection{Synthetic Control}

The synthetic control method applied to the setting of groupwise matching identifies 
\begin{align*}
    \theta_t^* = \mathbf{E}[Y_{i,t} \mid G_i = 0] - \sum_{j=1}^K \mathbf{E}[Y_{i,t} \mid G_i = j] w_j^*(0),
\end{align*}
where $w^*(0) = [w_1^*(0),...,w_K^*(0)]'$ is a minimizer of $Q(w)$ over $w = [w_1,...,w_K]' \in \Delta_{K-1}$, with 
\begin{align*}
    Q(w) = \sum_{t =1}^{T^*-1} \left( m_{0,t} - \sum_{j=1}^{K} m_{j,t} w_j \right)^2.
\end{align*}
This identification scheme relies on GMC holding at $(0,w^*(0))$. 

The choice of $w^*(0)$ is motivated as follows. First, consider the weight $w_j^*(0)$ that gives the population-level perfect pre-treatment matching: for all $t \in \mathcal{T}_0$,
\begin{align}
    \label{pre-treatment matching}
    \mathbf{E}[Y_{i,t} \mid G_i = 0] = \sum_{j=1}^K \mathbf{E}[Y_{i,t} \mid G_i = j] w_j^*(0).
\end{align}
Then, we assume that the same weight $w_j^*(0)$ yields the perfect post-treatment matching as well, i.e., (\ref{pre-treatment matching}) holds for $t \in \mathcal{T}_1$. In other words, the SC design relies on SMC at $0$.

\subsubsection{Synthetic Control with Differencing}

The synthetic control with differencing (SCD) applies the SC design after applying a within-group $\lambda$-differencing of the potential outcomes (see \cite{Chen:Eca:23} and references therein). Let $\lambda$ be a researcher-chosen differencing method. For example, one may choose $\lambda = \lambda^{\mathsf{DID}} \text{ or } \lambda^{\mathsf{unif}}.$ Define $Q: \Delta_{|\mathcal{T}_0|-1} \times \Delta_{K-1} \rightarrow \mathbf{R}$ as follows:
\begin{align}
    \label{Q(lambda,w)}
	Q(\lambda,w) = \sum_{t =1}^{T^*-1} \left( \mu_{0,t}(\lambda) - \sum_{j=1}^{K} \mu_{j,t}(\lambda) w_j \right)^2,
\end{align}
and choose $w^*(\lambda)$ as a minimizer of $Q(\lambda,w)$: 
\begin{align}
    \label{w^*}
    w^*(\lambda) \in \argmin_{w \in \Delta_{K-1}} Q(\lambda,w).
\end{align}
Then the SCD design invokes GMC at $(\lambda,w^*(\lambda))$ and identifies $\theta_t^*$ as follows: 
\begin{align*}
    \theta_t^* = \theta_{t}(\lambda,w^*(\lambda)) = \mu_{0,t}(\lambda) - \sum_{j=1}^{K} \mu_{j,t}(\lambda) w_j^*(\lambda).
\end{align*}
The GMC at $(\lambda,w^*(\lambda))$ requires that the weight $w^*(\lambda)$ that achieves the optimal pre-treatment matching delivers the perfect post-treatment matching. 

Again, the choice of $w^*$ can be motivated in terms of SMC. We first consider $w_j^*(\lambda)$ such that 
\begin{align*}
    \mu_{0,t}(\lambda) = \sum_{j=1}^K \mu_{j,t}(\lambda)w_j^*(\lambda),
\end{align*}
for $t \in \mathcal{T}_0$. Then, the SCD design assumes that this weight $w_j^*(\lambda)$ delivers the perfect post-treatment matching as well, i.e., SMC holds at $\lambda$.

\subsubsection{Synthetic Difference-in-Differences}

\cite{Arkhangelsky/Athey/Hirshberg/Imbens/Wager:AER:21} developed the synthetic difference-in-differences (SDID) method, which integrates the synthetic control approach with the difference-in-differences design. While their original framework targeted a data structure different from our groupwise matching setting, the core idea of SDID can be adapted to this setting. 

To facilitate the comparison, suppose that our target parameter is the same as before $\boldsymbol{\theta}^*$. Consider the time-differenced version of the population objective function in \cite{Arkhangelsky/Athey/Hirshberg/Imbens/Wager:AER:21}:
\begin{align}
    \label{Q(tilde lambda,w)}
    \Tilde Q(\lambda,w) = \sum_{k=1}^{K} \left( \sum_{s \in \mathcal{T}_1} \left\{ \mu_{k,s}(\lambda) - \sum_{j=1}^{K} \mu_{j,s}(\lambda) w_j \right\} \right)^2.
\end{align}
We let $\lambda^{\mathsf{unif}}$ and $w^{\mathsf{unif}}$ be the uniform weights given by $\lambda_s^{\mathsf{unif}}=1/|\mathcal{T}_0|$ and $w_j^{\mathsf{unif}} = 1/K$. Then, the identification strategy of the SDID can be formulated as follows:
\begin{align*}
    \theta_t^* = \theta_t(\lambda^*(w^{\mathsf{unif}}),w^*(\lambda^{\mathsf{unif}})),
\end{align*}
where 
\begin{align}
    \label{lambda opt}
    \lambda^*(w) = \argmin_{\lambda \in \Delta_{|\mathcal{T}_0|-1}} \Tilde Q(\lambda,w).
\end{align}
Thus, the identification strategy invokes GMC at $(\lambda^*(w^{\mathsf{unif}}),w^*(\lambda^{\mathsf{unif}}))$.\footnote{Here the optimization problems defining $\lambda^*(w^{\mathsf{unif}})$ and $w^*(\lambda^{\mathsf{unif}})$ are equivalent to those proposed by \cite{Arkhangelsky/Athey/Hirshberg/Imbens/Wager:AER:21} without regularization.} 

Note that unless $\lambda^{\mathsf{unif}} = \lambda^*(w^{\mathsf{unif}})$, the SDID design is not reduced to the SCD design in terms of GMC. More specifically, we cannot motivate the weight $w^*(\lambda^{\mathsf{unif}})$ using SMC. This is because the within-group differencing used for the pre-treatment matching $(\lambda^{\mathsf{unif}})$ is different from that used for the post-treatment matching $(\lambda^*(w^{\mathsf{unif}}))$. Since the within-group differencing changes after the treatment, we cannot say that SDID extrapolates the weight from the pre-treatment fit to the post-treatment periods like SCD. In other words, SCD and SDID are distinct designs.

\begin{table}[t]
    \centering
    \caption{Generalized Matching Conditions of Causal Inference Methods}

    \small
    \begin{tabular}{ll}
        \hline \hline
        Research Designs & Generalized Matching Conditions\\
        \hline \\

        \textit{Using Weights Based on Group Sizes} & \\ \\

        RCT & GMC holds at $(0,1)$ \\ 
        Unconfoundedness & GMC holds at $(0,w^{\mathsf{C}})$ \\ 
        DID with Two Periods & GMC holds at $(1,1)$ \\ 
        DID with Two Periods and Discrete Covariates & GMC holds at $(1,w^{\mathsf{C}})$ \\ 
        DID with Staggered Adoption under PTA-I & GMC holds at $(\lambda^{\mathsf{DID}},w^{\mathsf{DID}})$ \\ 
        DID with Staggered Adoption under PTA-II & GMC holds at $(\lambda^{\mathsf{DID}},w)$, for all $w \in \Delta_{K-1}$ \\ \\
        \hline
        \\ 

        \textit{Using Weights Based on Pre-Treatment Fit} & \\ \\

        SC & SMC holds at $0$ $\Rightarrow$ GMC holds at $(0,w^*(0))$ \\ 
        SCD & SMC holds at $\lambda$ $\Rightarrow$ GMC holds at $(\lambda, w^*(\lambda))$\\ 
        SDID & GMC holds at $(\lambda^*(w^{\mathsf{unif}}),w^*(\lambda^{\mathsf{unif}}))$\\ 
         \\
        \hline \hline
    \end{tabular}\medskip

    \parbox{6.4in}{\footnotesize
        \textit{Notes:}  The table shows how GMC is used for various causal inference methods. Here, recall that $w_j^{\mathsf{DID}} = P\left\{G_i = j \mid G_i \in \mathcal{G}_{\mathsf{don}} \right\}$, $\lambda_{s}^{\mathsf{DID}} = 1\{s = T^* - 1\}$, $w_j^{\mathsf{unif}} = 1/K$ and $\lambda_{s}^{\mathsf{unif}} = 1/(T^*-1)$, and $w^*(\lambda)$ and $\lambda^*(w)$ are the solutions to the optimization problems in (\ref{w^*}) and (\ref{lambda opt}), respectively. The SCD method is based on the researcher-determined $\lambda$, for example, either $\lambda = \lambda^{\mathsf{DID}}$ or $\lambda = \lambda^{\mathsf{unif}}$. Note that while the DID method adopts the identified quantities $w^{\mathsf{DID}}$ and $\lambda^{\mathsf{DID}}$ directly, the SC, SDID and SCD methods need to invoke rank conditions to identify $w^*(\lambda)$ and $\lambda^*(w)$.}
        \label{table: GMC}
\end{table}

In summary, the major causal inference designs invoke different types of the GMC. Each type involves a choice of a differencing method ($\lambda$) and the groupwise matching weights ($w$). Table \ref{table: GMC} summarizes the comparison of the designs in terms of GMC.

\section{A Comparison Between DID and SCD}
\label{sec: Comparison}

\subsection{Extended Parallel Trend Assumption}

In this section, we compare the two approaches of DID and SCD in the setting of staggered adoption. To facilitate the comparison, we introduce an extended form of PTA. First, for each $t \in \mathcal{T}$, we define
\begin{align*}
    e_t^{\mathsf{DID}}(\lambda) &= \mathbf{E}[Y_{i,t}(0;\lambda) \mid G_i = 0] - \mathbf{E}[Y_{i,t}(0;\lambda) \mid G_i \in \mathcal{G}_{\mathsf{don}}], \text{ and }\\
    e_{j,t}^{\mathsf{DID}}(\lambda) &= \mathbf{E}[Y_{i,t}(0;\lambda) \mid G_i=0] -\mathbf{E}[Y_{i,t}(0;\lambda) \mid G_i=j], \text{ for } j \in \mathcal{G}_{\mathsf{don}}.
\end{align*}
The quantities $e_t^{\mathsf{DID}}(\lambda)$ and $e_{j,t}^{\mathsf{DID}}(\lambda)$ represent matching errors from matching the $\lambda$-differenced average potential untreated outcome for the target group with that from the donor groups. Then, we consider the two types of PTA involving within-group differencing $\lambda \in \Delta_{|\mathcal{T}_0|-1}$.\medskip

\textbf{PTA($\lambda$):} $e_{t}^{\mathsf{DID}}(\lambda) = 0$, for all $t \in \mathcal{T}_1$.

\textbf{PTA-U($\lambda$):} $e_{j,t}^{\mathsf{DID}}(\lambda) = 0$, for all $j \in \mathcal{G}_{\mathsf{don}}$, and for all $t \in \mathcal{T}_1$.\medskip

The following proposition shows their connection with the PTA used in the literature.\footnote{This result also suggests that when DID is used under PTA-I, the target parameter is overidentified using PTA($\lambda$), for any $\lambda$ satisfying $e_{T^*-1}^{\mathsf{DID}}(\lambda) = 0$. \cite{Chen/SantAnna/Xie:arXiv:25} proposed semiparametrically efficient estimation of ATT under overidentifying restrictions in PTA-U($\lambda$) with covariates.} 

\begin{proposition}
\label{prop: PTA connection}
(i) Suppose that $e_{T^*-1}^{\mathsf{DID}}(\lambda) = 0$ holds for some $\lambda \in \Delta_{|\mathcal{T}_0|-1}$. Then, PTA-I holds if and only if PTA($\lambda$) holds.

(ii) Suppose that for some $\lambda \in \Delta_{|\mathcal{T}_0|-1}$, $e_{j,T^*-1}^{\mathsf{DID}}(\lambda) = 0$ holds for each $j \in \mathcal{G}_{\mathsf{don}}$. Then, PTA-II holds if and only if PTA-U($\lambda$) holds.
\end{proposition}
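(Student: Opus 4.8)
The plan is to strip away the $\lambda$-differencing by reducing everything to the undifferenced ``level'' matching errors, and then to recognize that both equivalences amount to the elementary fact that a sequence having vanishing consecutive differences on a block of time indices is the same as being constant on that block.

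First I would introduce, for $t \in \mathcal{T}$,
\[
g_t = \mathbf{E}[Y_{i,t}(0)\mid G_i=0] - \mathbf{E}[Y_{i,t}(0)\mid G_i \in \mathcal{G}_{\mathsf{don}}], \quad g_{j,t} = \mathbf{E}[Y_{i,t}(0)\mid G_i=0] - \mathbf{E}[Y_{i,t}(0)\mid G_i=j].
\]
By linearity of conditional expectation applied to $Y_{i,t}(0;\lambda) = Y_{i,t}(0) - \sum_{s\in\mathcal{T}_0}\lambda_s Y_{i,s}(0)$, this yields the identities $e_t^{\mathsf{DID}}(\lambda) = g_t - \sum_{s\in\mathcal{T}_0}\lambda_s g_s$ and $e_{j,t}^{\mathsf{DID}}(\lambda) = g_{j,t} - \sum_{s\in\mathcal{T}_0}\lambda_s g_{j,s}$. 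Writing $\bar g_\lambda = \sum_{s\in\mathcal{T}_0}\lambda_s g_s$ (a quantity that does not depend on $t$), these say that $\lambda$-differencing merely subtracts a single common baseline from the level errors.

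For part (i) I would rewrite the two parallel-trend statements in terms of $g$. Since $\mathbf{E}[\Delta Y_{i,t}(0)\mid\cdot] = \mathbf{E}[Y_{i,t}(0)\mid\cdot]-\mathbf{E}[Y_{i,t-1}(0)\mid\cdot]$, PTA-I is equivalent to $g_t - g_{t-1}=0$ for every $t\in\mathcal{T}_1$; telescoping these one-step equalities shows that PTA-I holds if and only if $g_t=g_{T^*-1}$ for all $t\in\mathcal{T}_1$. On the other hand, PTA($\lambda$) is by the identity above exactly $g_t=\bar g_\lambda$ for all $t\in\mathcal{T}_1$. The hypothesis $e_{T^*-1}^{\mathsf{DID}}(\lambda)=0$ reads $g_{T^*-1}=\bar g_\lambda$, so it pins the two ``anchor'' constants to each other. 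Both implications are then immediate: if PTA-I holds, then $g_t=g_{T^*-1}=\bar g_\lambda$ on $\mathcal{T}_1$, giving PTA($\lambda$); conversely, if PTA($\lambda$) holds, then $g_t=\bar g_\lambda=g_{T^*-1}$ on $\mathcal{T}_1$, so all consecutive differences vanish, giving PTA-I.

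Part (ii) requires no new idea: the same argument runs verbatim with $g$ replaced by $g_{j,\cdot}$ and carried out separately for each $j\in\mathcal{G}_{\mathsf{don}}$, with the hypothesis $e_{j,T^*-1}^{\mathsf{DID}}(\lambda)=0$ supplying the per-$j$ anchor $g_{j,T^*-1}=\sum_{s\in\mathcal{T}_0}\lambda_s g_{j,s}$. The only step that needs care, and the closest thing to an obstacle, is the telescoping that converts the one-step PTA conditions into the ``constant on $\{T^*-1\}\cup\mathcal{T}_1$'' form, together with the bookkeeping observation that the anchor index $T^*-1$ (the last pre-treatment period) is precisely the one fixed by the boundary hypothesis. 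Once that alignment is in place, each equivalence is a one-line substitution, and notably no full-rank or nondegeneracy condition on $\lambda$ is needed.
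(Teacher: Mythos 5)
Your proof is correct and follows essentially the same route as the paper's: both rest on the observation that $\lambda$-differencing subtracts the $t$-independent baseline $\sum_{s\in\mathcal{T}_0}\lambda_s g_s$ from the level matching errors, and both convert PTA-I into ``$g_t$ is constant on $\{T^*-1\}\cup\mathcal{T}_1$'' by telescoping first differences anchored at $t=T^*-1$ through the hypothesis $e_{T^*-1}^{\mathsf{DID}}(\lambda)=0$. Your reformulation in terms of the scalar sequences $g_t$ and $g_{j,t}$ is a cleaner packaging of the same computation the paper carries out directly on the conditional expectations.
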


Certainly, we have $e_{j,T^*-1}^{\mathsf{DID}}(\lambda^{\mathsf{DID}}) = 0$ for all $j  \in \mathcal{G}_{\mathsf{don}}$. Hence, we have 
\begin{align*}
    \text{PTA-I} \Leftrightarrow \text{PTA}(\lambda^{\mathsf{DID}}) \text{ and } \text{PTA-II} \Leftrightarrow \text{PTA-U}(\lambda^{\mathsf{DID}}).
\end{align*}
On the other hand, PTA($\lambda$) and PTA-U($\lambda$) allows other choices of $\lambda$. The comparison results below apply to such $\lambda$'s. From here on, we focus on PTA($\lambda$). 

\subsection{Regret Analysis}

In this section, we compare the research designs of DID and SCD in terms of regret in the staggered adoption setting in Example \ref{exmpl: staggered adoption}. Let $\mathcal{P}$ be the collection of the distributions of the variables under consideration. We fix a within-group differencing $\lambda \in \Delta_{|\mathcal{T}_0|-1}$ such that $e_{T^*-1}^{\mathsf{DID}}(\lambda) = 0$. To facilitate the comparison, we introduce the squared sum of matching errors (SSME): for $w \in \Delta_{K-1}$ and $P \in \mathcal{P}$,
\begin{align*}
    \mathsf{SSME}_{d,P}(w) = \frac{1}{|\mathcal{T}_d|} \sum_{t \in \mathcal{T}_d} e_t^2(\lambda,w), \quad d = 0,1.
\end{align*}
We make explicit its dependence on $P \in \mathcal{P}$ through the matching errors $e_t(\lambda,w)$. Then, we define the matching error in regret (MER) and the extrapolation error of MER, respectively,
\begin{align*}
        \mathsf{MER}_d(\mathbf{w}) &= \sup_{P \in \mathcal{P}}\left\{\mathsf{SSME}_{d,P}(w_P) - \inf_{w \in \Delta_{K-1}} \mathsf{SSME}_{d,P}(w) \right\}, \text{ and }\\ 
        \mathsf{\Delta MER}(\mathbf{w}) &= \mathsf{MER}_1(\mathbf{w}) - \mathsf{MER}_0(\mathbf{w}),
\end{align*}
where $\mathbf{w} = (w_P)_{P \in \mathcal{P}}$. The quantity $\mathsf{MER}_d(\mathbf{w})$ captures the matching error of the weights $w_P$, $P \in \mathcal{P}$, in the maximal regret form, whereas $\mathsf{\Delta MER}(\mathbf{w})$ measures the stability of the MER as we move from the pre-treatment regime to the post-treatment regime. We tend to have small $\mathsf{\Delta MER}(\mathbf{w})$ if the post-treatment matching errors are close to the pre-treatment matching errors. Thus, we call $\mathsf{\Delta MER}(\mathbf{w})$ \bi{the extrapolation error}, which essentially captures an error that arises from extrapolating the weight optimized for the pre-treatment data to the post-treatment outcomes. 

We compare SCD and DID in terms of MER. We define the population version of the weights by DID and SCD: for each $P \in \mathcal{P}$,\footnote{In fact, when the minimizer $w^*(\lambda)$ in (\ref{w^*}) is unique, the definition of $w_P^\mathsf{SCD}$ coincides with $w^*(\lambda)$. We assume uniqueness for this regret analysis.}
\begin{align*}
   w_P^\mathsf{SCD} = \argmin_{w \in \Delta_{K-1}} \mathsf{SSME}_{0,P}(w),
\end{align*}
and $w_P^{\mathsf{DID}} = [w_{1,P}^{\mathsf{DID}},...,w_{K,P}^{\mathsf{DID}}]'$ with 
\begin{align}
    \label{w^{DID}}
   w_{j,P}^\mathsf{DID} \eqdef \frac{\sum_{i \in N} P\{G_i = j\}}{\sum_{i \in N} P\{G_i \in \mathcal{G}_{\mathsf{don}}\}} . 
\end{align}
We define $\mathbf{w}^{\mathsf{DID}} = (w_P^{\mathsf{DID}})_{P \in \mathcal{P}}$ and $\mathbf{w}^{\mathsf{SCD}} = (w_P^{\mathsf{SCD}})_{P \in \mathcal{P}}$. The SCD invokes the Stable Matching Condition (SMC) and DID the Parallel Trend Assumption (PTA). These assumptions can be formulated in terms of the matching errors: 
\begin{align}
    \label{comparison}
    \text{PTA}(\lambda) &\Leftrightarrow \mathsf{SSME}_{1,P}(\mathbf{w}^{\mathsf{DID}}) = 0 \text{ for all } P \in \mathcal{P}.\\ \notag
    &\Rightarrow \mathsf{MER}_1(\mathbf{w}^{\mathsf{DID}}) = 0.\\ \notag 
    \text{SMC}(\lambda) &\Leftrightarrow \text{ For some } \mathbf{w}, \mathsf{MER}_0(\mathbf{w}) = 0 \text{ and }\mathsf{MER}_1(\mathbf{w}) = 0.\\ \notag
    &\Rightarrow \mathsf{\Delta MER}(\mathbf{w}^{\mathsf{SCD}}) = 0,
\end{align}
where SMC($\lambda$) denotes that SMC holds at $\lambda$. Thus the DID design fails if $\mathsf{MER}_1(\mathbf{w}^{\mathsf{DID}}) \ne 0$ whereas the SCD design fails if $\mathsf{\Delta MER}(\mathbf{w}^{\mathsf{SCD}}) \ne 0$. We will now formalize this complementarity in terms of maximal regret. 

For a concrete analysis, we define the sample analog estimator of $m_{j,t}$ and $\mu_{j,t}(\lambda)$ as follows:
\begin{align}
    \label{sample analogue est mu}
    \hat m_{j,t} = \frac{1}{n_j} \sum_{i \in N_j} Y_{i,t}, \text{ and } 
    \hat \mu_{j,t}(\lambda) &= \hat m_{j,t} - \sum_{s \in \mathcal{T}_0} \hat m_{j,s} \lambda_s,
\end{align}
where $n_j$ denotes the number of individuals in the sample belonging to group $j$. Then, given within-group differencing $\lambda$ and a choice of data-dependent weight $\hat w$, we can estimate $\theta_t^*$ as follows:
\begin{align}
    \label{DID est}
    \hat \theta_t(\lambda,\hat w) = \hat \mu_{0,t}(\lambda) - \sum_{j=1}^{K} \hat \mu_{j,t}(\lambda) \hat w_j.
\end{align}
Thus, the selection between the DID and SCD designs boils down to choosing the matching weight $\hat w$.

We define the weight $\hat w^{\mathsf{DID}}$ for the DID design as follows: $\hat w^{\mathsf{DID}} = [\hat w_1^{\mathsf{DID}},...,\hat w_K^{\mathsf{DID}}]'$ with $\hat w_{j}^{\mathsf{DID}}$ defined as the sample version of $w_{j,P}^{\mathsf{DID}}$ in (\ref{w^{DID}}):
\begin{align*}
    \hat w_j^{\mathsf{DID}} \eqdef \frac{\sum_{i \in N} 1\{G_i = j\}}{\sum_{i \in N} 1\{G_i \in \mathcal{G}_{\mathsf{don}}\}}, \text{ for } j \in \mathcal{G}_{\mathsf{don}}.
\end{align*}
The sample weight $\hat w_j^{\mathsf{DID}}$ represents the sample fraction of individuals in group $j$ relative to the total units in the donor pool. The DID design suggests estimating $\theta_t^*$ as $\hat \theta_t(\lambda,\hat w^{\mathsf{DID}})$. When $\lambda = \lambda^{\mathsf{DID}}$, this estimator can be viewed as a special case of an estimator proposed by \cite{Callaway/SantAnna:JoE:21} without covariates. When $K = 1$ and $T^*=2$ (i.e., the two periods and two groups setting), $\lambda$ and $\hat{w}^{\mathsf{DID}}$ are equal to one, so we obtain 
\begin{align*}
    \hat \theta_t(\lambda,\hat w^{\mathsf{DID}}) = \Delta \overline Y_1 - \Delta \overline Y_0,
\end{align*}
where $\Delta \overline Y_d$ denotes the first difference average outcomes for the group with treatment status $d = 0,1$. Thus, we can view $\hat \theta_t(\lambda,\hat w^{\mathsf{DID}})$ as an extension of the standard DID estimator of $\theta_t^*$ to the case with more than two periods and groups.

The SCD design uses the weight $\hat w^{\mathsf{SCD}}$ defined as 
\begin{align*}
    \hat w^{\mathsf{SCD}} \in \enspace \argmin_{w \in \Delta_{K-1}} \enspace \frac{1}{|\mathcal{T}_0|} \sum_{t \in \mathcal{T}_0} \left(\hat \mu_{0,t}(\lambda) - \sum_{j=1}^{K} \hat \mu_{j,t}(\lambda) w_j \right)^2.
\end{align*}
Hence, the weights $\hat w^{\mathsf{SCD}}$ are chosen to minimize the sample version of the pre-treatment SSME. The SCD design suggests estimating $\theta_t^*$ by
\begin{align}
    \label{SCD est}
        \hat \theta_t(\lambda,\hat w^{\mathsf{SCD}}) = \hat \mu_{0,t}(\lambda) - \sum_{j=1}^{K} \hat \mu_{j,t}(\lambda) \hat w_j^{\mathsf{SCD}}.
\end{align}
Notice that the SCD estimator, $\hat \theta_t(\lambda,\hat w^{\mathsf{SCD}})$, and the DID estimator, $\hat \theta_t(\lambda,\hat w^{\mathsf{DID}})$, differ only by the choice of the estimated weights for the donor pool.

To build a decision-theoretic comparison between different research designs, we introduce the average squared error loss from estimating $\theta_t^*$ by $\hat \theta_t(\lambda,\hat w)$: 
\begin{align*}
    \ell_1(\hat w) &= \frac{1}{|\mathcal{T}_1|} \sum_{t \in \mathcal{T}_1} \left(\theta_t^* - \hat \theta_t(\lambda,\hat w) \right)^2.
\end{align*}
We define the maximal regret associated with the choice of $\hat w$:
\begin{align*}
    \mathsf{MaxRegret}(\hat w) =  \sup_{P \in \mathcal{P}}\left\{\mathbf{E}_P\left[ \ell_1(\hat w) \right] - \inf_{\widetilde w \in \mathcal{D}} \mathbf{E}_P\left[ \ell_1(\widetilde w(Z)) \right]\right\},
\end{align*}
where $\mathcal{D}$ is the set of $\Delta_{K-1}$-valued functions that are measurable with respect to $Z$ and the random vector $Z$ represents the vector of all the observed random variables.\footnote{The expectation $\mathbf{E}_P\left[ \ell_1(\hat w) \right]$ is with respect to the distribution of both $\hat w$ and $\ell_1(\cdot)$.} We compare the DID and SCD designs in terms of their maximal regrets. 

We introduce assumptions used for the regret analysis. Let $Y_i^*(s) = (Y_{i,t}^*(s))_{t \in \mathcal{T}}$ and $Y_i^* = (Y_i^*(s))_{s \in \mathcal{T} \cup \{0\}}$. 

\begin{assumption}
    \label{assump: sampling design}
    The random vectors, $(Y_i^*,G_i)$, are independent across $i \in N$, under each $P \in \mathcal{P}$.
\end{assumption}

This assumption requires that the variables be independent across the cross-sectional units. This condition allows for arbitrary dependence between the potential outcomes across different treatment timing or the time periods. The framework allows for both the settings of repeated cross-sections and panel data. It also allows for factor models for the potential outcomes; we can simply take the factors to be constants.

\begin{assumption}
    \label{assump: moments} 
    There exist constants $\pi_0>0$, $\overline m \ge 1$, and $0< c \le 1$, such that for all $j \in \mathcal{G}$, $t \in \mathcal{T}$ and $s \in \mathcal{T} \cup \{0\}$, we have
    \begin{align}
        \label{bounds}
        \max_{1 \le i \le n} \sup_{P \in \mathcal{P}} \mathbf{E}_P[Y_{i,t}^4(s) \mid G_i = j] \le \overline m^4 \text{ and }
        \inf_{P \in \mathcal{P}} \frac{1}{n}\sum_{i \in N} P\{G_i = j \} \ge \pi_0,
    \end{align}
    and 
    \begin{align}
        \label{strong ID SCD}
        \inf_{P \in \mathcal{P}} \lambda_{\min}\left(\Gamma_P' \Gamma_P\right)  \ge c |\mathcal{T}_0|,
    \end{align}
    where $\Gamma_P$ is the $|\mathcal{T}_0| \times K$ matrix whose $(t,j)$-th entry is given by $\mu_{j,t}(\lambda)$.
\end{assumption}
The condition (\ref{bounds}) in Assumption \ref{assump: moments} requires the existence of uniform upper and lower bounds for the fourth moment of potential outcomes in each group and the probability of the group membership, respectively. The condition (\ref{strong ID SCD}) says that the time-paths of the within-group differenced mean outcomes are not linearly dependent. This condition requires that $|\mathcal{T}_0| \ge K$ and ensures that $w_P^{\mathsf{SCD}}$ is identified.

The theorem below presents the regret-comparison result between the DID and SCD designs.

\begin{theorem}
    \label{thm: regret analysis}
    Suppose that Assumptions \ref{assump: sampling design}-\ref{assump: moments} hold. For each $n \ge 1$, let 
    \begin{align*}
        \epsilon_{n} \eqdef \frac{(K+1) \overline m^4}{c}\left\{ \frac{1}{\pi_0 \sqrt{n}} + \frac{1}{\pi_0^2 n} + \exp\left( - \frac{\pi_0 n}{8} \right) \right\},
    \end{align*} 
    where $\overline m$, $\pi_0$ and $c$ are the constants in Assumption \ref{assump: moments}. Then, there exists a universal constant $C>0$ such that the following statements hold for all $n \ge 1$.\medskip
    
    (i) If $\mathsf{\Delta MER}(\mathbf{w}^{\mathsf{SCD}}) > \mathsf{MER}_1(\mathbf{w}^{\mathsf{DID}}) + C \epsilon_{n}$, then, 
    \begin{align*}
        \mathsf{MaxRegret}(\hat w^{\mathsf{SCD}}) > \mathsf{MaxRegret}(\hat w^{\mathsf{DID}}).
    \end{align*}

    (ii) If $\mathsf{\Delta MER}(\mathbf{w}^{\mathsf{SCD}}) < \mathsf{MER}_1(\mathbf{w}^{\mathsf{DID}}) - C \epsilon_{n}$, then, 
    \begin{align*}
        \mathsf{MaxRegret}(\hat w^{\mathsf{SCD}}) < \mathsf{MaxRegret}(\hat w^{\mathsf{DID}}).
    \end{align*}
\end{theorem}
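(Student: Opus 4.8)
The plan is to reduce each maximal regret to a matching-error-in-regret quantity and then compare the two. Fix $P \in \mathcal{P}$ and a (possibly random) weight $\hat w \in \Delta_{K-1}$. Since identity (\ref{theta^* decomp}) holds for every $w$, I would write the estimation error as
\begin{align*}
    \theta_t^* - \hat \theta_t(\lambda, \hat w) = \xi_t(\hat w) - e_t(\lambda, \hat w),
\end{align*}
where $\xi_t(w) = (\mu_{0,t}(\lambda) - \hat\mu_{0,t}(\lambda)) - \sum_{j=1}^K (\mu_{j,t}(\lambda) - \hat\mu_{j,t}(\lambda)) w_j$ is a pure sampling-noise term and $e_t(\lambda, \hat w)$ is the population matching error. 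Squaring and averaging over $t \in \mathcal{T}_1$ gives
\begin{align*}
    \mathbf{E}_P[\ell_1(\hat w)] = \frac{1}{|\mathcal{T}_1|} \sum_{t \in \mathcal{T}_1} \mathbf{E}_P\big[(\xi_t(\hat w) - e_t(\lambda, \hat w))^2\big].
\end{align*}
The goal is to show that this expected loss is, uniformly in $P$, within $O(\epsilon_n)$ of $\mathsf{SSME}_{1,P}(w_P)$, where $w_P$ is the population weight ($w_P^{\mathsf{DID}}$ or $w_P^{\mathsf{SCD}}$) estimated by $\hat w$.

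The core of the argument is a pair of uniform approximation bounds. First (the loss step), for $\hat w \in \{\hat w^{\mathsf{DID}}, \hat w^{\mathsf{SCD}}\}$ with population counterpart $w_P$,
\begin{align*}
    \sup_{P \in \mathcal{P}} \big| \mathbf{E}_P[\ell_1(\hat w)] - \mathsf{SSME}_{1,P}(w_P) \big| \le C' \epsilon_n,
\end{align*}
obtained by expanding the square into the noise variance $\mathbf{E}_P[\xi_t(\hat w)^2]$, the squared matching error $e_t^2(\lambda,\hat w)$, and the cross term $\mathbf{E}_P[\xi_t(\hat w)\,e_t(\lambda,\hat w)]$, replacing $\hat w$ by $w_P$, and controlling the replacement errors via the moment bounds (\ref{bounds}) and the weight consistency implied by (\ref{strong ID SCD}). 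Second (the baseline step),
\begin{align*}
    \sup_{P \in \mathcal{P}} \Big| \inf_{\widetilde w \in \mathcal{D}} \mathbf{E}_P[\ell_1(\widetilde w(Z))] - \inf_{w \in \Delta_{K-1}} \mathsf{SSME}_{1,P}(w) \Big| \le C' \epsilon_n,
\end{align*}
where the lower bound uses that, pointwise in $Z$, $\tfrac{1}{|\mathcal{T}_1|}\sum_{t \in \mathcal{T}_1} e_t^2(\lambda,\widetilde w(Z)) \ge \inf_w \mathsf{SSME}_{1,P}(w)$ with the variance and cross terms being $O(\epsilon_n)$, and the upper bound follows by feeding the deterministic minimizer of $\mathsf{SSME}_{1,P}$ into $\mathcal{D}$, for which the cross term vanishes by unbiasedness. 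Using $|\sup f - \sup g| \le \sup|f-g|$ and combining the two steps gives, for $\mathbf{w} \in \{\mathbf{w}^{\mathsf{DID}}, \mathbf{w}^{\mathsf{SCD}}\}$,
\begin{align*}
    \big| \mathsf{MaxRegret}(\hat w) - \mathsf{MER}_1(\mathbf{w}) \big| \le 2 C' \epsilon_n.
\end{align*}

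It remains to identify the two right-hand sides. For DID this is already $\mathsf{MER}_1(\mathbf{w}^{\mathsf{DID}})$. For SCD I would invoke the defining property $w_P^{\mathsf{SCD}} \in \argmin_{w} \mathsf{SSME}_{0,P}(w)$, which forces $\mathsf{SSME}_{0,P}(w_P^{\mathsf{SCD}}) = \inf_w \mathsf{SSME}_{0,P}(w)$ for every $P$, hence $\mathsf{MER}_0(\mathbf{w}^{\mathsf{SCD}}) = 0$ and therefore $\mathsf{MER}_1(\mathbf{w}^{\mathsf{SCD}}) = \mathsf{\Delta MER}(\mathbf{w}^{\mathsf{SCD}})$. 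Setting $C = 4C'$, statement (i) follows: under $\mathsf{\Delta MER}(\mathbf{w}^{\mathsf{SCD}}) > \mathsf{MER}_1(\mathbf{w}^{\mathsf{DID}}) + C\epsilon_n$,
\begin{align*}
    \mathsf{MaxRegret}(\hat w^{\mathsf{SCD}}) \ge \mathsf{\Delta MER}(\mathbf{w}^{\mathsf{SCD}}) - 2C'\epsilon_n > \mathsf{MER}_1(\mathbf{w}^{\mathsf{DID}}) + 2C'\epsilon_n \ge \mathsf{MaxRegret}(\hat w^{\mathsf{DID}}),
\end{align*}
and statement (ii) follows symmetrically.

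The hard part will be the loss step, and within it the cross term $\mathbf{E}_P[\xi_t(\hat w)\,e_t(\lambda,\hat w)]$. Were $\hat w$ nonrandom and equal to $w_P$, this term would vanish by unbiasedness of the sample group means; the difficulty is that $\hat w$ is data dependent and correlated with $\xi_t$. Writing $\delta = \hat w - w_P$, the leading nonvanishing contribution is $e_t(\lambda, w_P)\sum_j \mu_{j,t}(\lambda)\,\mathbf{E}_P[\delta_j]$, and since the bias $\mathbf{E}_P[\delta_j]$ of the SCD weight cannot be signed, I would bound it only in $L^1$ through $|\mathbf{E}_P[\delta_j]| \le (\mathbf{E}_P\|\hat w^{\mathsf{SCD}} - w_P^{\mathsf{SCD}}\|^2)^{1/2}$. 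Establishing this $L^2$ rate is exactly where (\ref{strong ID SCD}) enters: the curvature $\lambda_{\min}(\Gamma_P'\Gamma_P) \ge c|\mathcal{T}_0|$ turns the $O(1/\sqrt n)$ fluctuation of the sample objective into an $O(1/(c\sqrt n))$ bound on $\|\hat w^{\mathsf{SCD}} - w_P^{\mathsf{SCD}}\|$, producing the dominant term $\tfrac{(K+1)\overline m^4}{c}\cdot\tfrac{1}{\pi_0\sqrt n}$ of $\epsilon_n$. The remaining second-order variance contributions give the $1/(\pi_0^2 n)$ term, while $\exp(-\pi_0 n/8)$ absorbs the low-probability event, controlled by a Hoeffding bound on the group counts together with the membership lower bound in (\ref{bounds}), on which a group is too sparsely sampled for the ratio and argmin estimators to be well behaved and only a crude deterministic bound is available.
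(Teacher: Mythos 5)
Your proposal is correct and follows essentially the same route as the paper's proof: the loss is decomposed into the population matching error plus a remainder $R_{n,P}(w)$ containing the cross and quadratic noise terms, the remainder and the oracle benchmark are controlled uniformly in $w$ and $P$ via the fourth-moment bounds and a Chernoff bound on the group counts, the curvature condition (\ref{strong ID SCD}) is used to transfer from $\hat w^{\mathsf{SCD}}$ to $w_P^{\mathsf{SCD}}$, and the identity $\mathsf{MER}_0(\mathbf{w}^{\mathsf{SCD}})=0$ converts $\mathsf{MER}_1(\mathbf{w}^{\mathsf{SCD}})$ into $\mathsf{\Delta MER}(\mathbf{w}^{\mathsf{SCD}})$ before chaining the inequalities. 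The only cosmetic difference is that the paper bounds the cross term by the crude uniform-in-$w$ bound $|R_{n,P}(w)|\le 8\overline m\hat D_{1,1}+2\hat D_{1,2}^2$ rather than expanding around $w_P$ (note the sample group means are ratio estimators, so exact unbiasedness is not available), and it derives a pathwise rather than $L^2$ bound on $\|\hat w^{\mathsf{SCD}}-w_P^{\mathsf{SCD}}\|$ from strong convexity; neither changes the substance of your argument.
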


The result shows that the DID design regret-dominates the SCD design if and only if the extrapolation error of the SCD design dominates the post-treatment MER of the DID design up to a term that vanishes at the parametric rate $\sqrt{n}$.

The DID design specifies the matching weights to be the group size-based weights, and hence does not need to invoke extrapolation of weights from the pre-treatment fit. On the other hand, the SCD design obtains the weights that exhibit a best pre-treatment fit, and extrapolates the weights to the post-treatment periods. The comparison shows when the DID design or the SCD design is appropriate or not. The DID design is not appropriate in a setting where it is doubtful that the relevance of each group in matching is proportional to the size of the group, whereas the SCD design is not appropriate if the relevance of the groups in matching is not stable before and after the treatment.

\subsection{Equivalence of DID and SCD}

One might wonder when the DID and SCD designs are equivalent in terms of GMC. The analysis in (\ref{comparison}) gives an answer. Let us introduce pre-treatment PTA with within-group differencing $\lambda$ as follows:\medskip

\textbf{Pre-treatment PTA($\lambda$):} $e_{t}^{\mathsf{DID}}(\lambda) = 0$, for all $t \in \mathcal{T}_0$.\medskip

Then, following the same arguments in the proof of Proposition \ref{prop: PTA connection}, we can show that the pre-treatment PTA($\lambda$) is equivalent to the following:\medskip

\textbf{Pre-treatment PTA-I: } $\mathbf{E}[\Delta Y_{i,t}(0) \mid G_i = 0] = \mathbf{E}[\Delta Y_{i,t}(0) \mid G_i \in \mathcal{G}_{\mathsf{don}}]$, for all $t \in \mathcal{T}_0 \setminus \{1\}$.\medskip

Thus, Pre-treatment PTA-I refers to the parallel trend assumption that holds for the pre-treatment periods. Now, suppose that both the PTA($\lambda$) and the pre-treatment PTA($\lambda$) hold. This implies that 
\begin{align*}
    \mathsf{MER}_1(\mathbf{w}^{\mathsf{DID}}) = \mathsf{MER}_0(\mathbf{w}^{\mathsf{DID}}) = 0.
\end{align*}
On the other hand, by the definition of $\mathbf{w}^{\mathsf{SCD}}$, we have 
\begin{align*}
    \mathsf{MER}_0(\mathbf{w}^{\mathsf{SCD}}) = 0.
\end{align*}
Since there is a unique $\mathbf{w}$ such that $\mathsf{MER}_0(\mathbf{w}) = 0$ by (\ref{strong ID SCD}) in Assumption \ref{assump: moments}, we must have $\mathbf{w}^{\mathsf{SCD}} = \mathbf{w}^{\mathsf{DID}}$. We formalize this into the following proposition. 

\begin{proposition}
    \label{pr: DID equivalence}
    Suppose that Assumptions \ref{assump: sampling design}-\ref{assump: moments} hold, and the PTA($\lambda$) and the pre-treatment PTA($\lambda$) hold. Then, 
    \begin{align*}
        \mathbf{w}^\mathsf{SCD} = \mathbf{w}^{\mathsf{DID}}.
    \end{align*}
    Hence, the DID and SCD designs are equivalent in terms of GMC. 
\end{proposition}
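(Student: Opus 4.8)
The plan is to fix an arbitrary $P \in \mathcal{P}$ and prove $w_P^{\mathsf{SCD}} = w_P^{\mathsf{DID}}$; since $P$ is arbitrary this yields $\mathbf{w}^{\mathsf{SCD}} = \mathbf{w}^{\mathsf{DID}}$. The entire argument reduces to showing that, under the two parallel-trend hypotheses, both candidate weights are global minimizers of the pre-treatment objective $w \mapsto \mathsf{SSME}_{0,P}(w)$, and that the rank condition (\ref{strong ID SCD}) forces this objective to have a single minimizer.

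First I would record that the group-size weights reproduce the DID matching error exactly, $e_t(\lambda,w_P^{\mathsf{DID}}) = e_t^{\mathsf{DID}}(\lambda)$ for every $t \in \mathcal{T}$. Since $D_i = 1$ if and only if $G_i = 0$, we have $\mu_{0,t}(\lambda) = \mathbf{E}[Y_{i,t}(0;\lambda) \mid G_i = 0]$; and because $w_{j,P}^{\mathsf{DID}}$ in (\ref{w^{DID}}) is the (individual-averaged) share of group $j$ within the donor pool, the tower property applied to the partition $\{G_i = j\}_{j=1}^{K}$ of $\{G_i \in \mathcal{G}_{\mathsf{don}}\}$ collapses $\sum_{j=1}^{K}\mu_{j,t}(\lambda)w_{j,P}^{\mathsf{DID}}$ to $\mathbf{E}[Y_{i,t}(0;\lambda) \mid G_i \in \mathcal{G}_{\mathsf{don}}]$. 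This is exactly the identity underlying Proposition \ref{prop: PTA} and relation (\ref{comparison}), so I would simply invoke it. Consequently the pre-treatment PTA($\lambda$), namely $e_t^{\mathsf{DID}}(\lambda) = 0$ for $t \in \mathcal{T}_0$, gives $e_t(\lambda,w_P^{\mathsf{DID}}) = 0$ for all $t \in \mathcal{T}_0$, and hence $\mathsf{SSME}_{0,P}(w_P^{\mathsf{DID}}) = 0$. The post-treatment PTA($\lambda$) enters only through (\ref{comparison}) to guarantee $\mathsf{MER}_1(\mathbf{w}^{\mathsf{DID}}) = 0$, so that DID is itself a valid design and the asserted equivalence is not vacuous; the weight equality itself rests on the pre-treatment condition together with the rank condition.

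Next I would show both weights minimize the pre-treatment objective. As $\mathsf{SSME}_{0,P}(\cdot) \ge 0$ and attains the value $0$ at the feasible point $w_P^{\mathsf{DID}} \in \Delta_{K-1}$, it follows that $\inf_{w \in \Delta_{K-1}}\mathsf{SSME}_{0,P}(w) = 0$. By definition $w_P^{\mathsf{SCD}}$ attains this infimum, so $\mathsf{SSME}_{0,P}(w_P^{\mathsf{SCD}}) = 0$ as well; equivalently, both $\mathbf{w}^{\mathsf{DID}}$ and $\mathbf{w}^{\mathsf{SCD}}$ satisfy $\mathsf{MER}_0(\cdot) = 0$.

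Finally I would invoke strict convexity to obtain uniqueness. Writing $\mathbf{m}_0 = (\mu_{0,t}(\lambda))_{t \in \mathcal{T}_0}$ and recalling that $\Gamma_P$ is the $|\mathcal{T}_0| \times K$ matrix with entries $\mu_{j,t}(\lambda)$, the objective is the quadratic $\mathsf{SSME}_{0,P}(w) = |\mathcal{T}_0|^{-1}\|\mathbf{m}_0 - \Gamma_P w\|^2$ with Hessian $2|\mathcal{T}_0|^{-1}\Gamma_P'\Gamma_P$. Condition (\ref{strong ID SCD}) gives $\lambda_{\min}(\Gamma_P'\Gamma_P) \ge c|\mathcal{T}_0| > 0$, so the Hessian is positive definite, $\mathsf{SSME}_{0,P}$ is strictly convex on $\mathbf{R}^K$, and it admits at most one global minimizer. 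Since $w_P^{\mathsf{DID}}$ and $w_P^{\mathsf{SCD}}$ each attain the global minimum value $0$, they coincide. The only step requiring any care is the matching-error identity of the second paragraph, which must be stated with the group-size weights aggregated over individuals exactly as in (\ref{w^{DID}}); once that identity is in hand, the rank condition (\ref{strong ID SCD}) is precisely what upgrades ``both weights minimize $\mathsf{SSME}_{0,P}$'' to ``both weights are equal,'' and the equality of weights makes GMC at $(\lambda,w^{\mathsf{DID}})$ and at $(\lambda,w^{\mathsf{SCD}})$ the same condition.
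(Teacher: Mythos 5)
Your proof is correct and follows essentially the same route as the paper: the paper's argument (given in the text immediately preceding the proposition) likewise observes that pre-treatment PTA($\lambda$) forces $\mathsf{MER}_0(\mathbf{w}^{\mathsf{DID}})=0$, that $\mathbf{w}^{\mathsf{SCD}}$ satisfies $\mathsf{MER}_0(\mathbf{w}^{\mathsf{SCD}})=0$ by construction, and that the rank condition (\ref{strong ID SCD}) makes the pre-treatment minimizer unique (the paper's Lemma on strong identification gives the quantitative version $\mathsf{SSME}_{0,P}(w)-\inf_{\tilde w}\mathsf{SSME}_{0,P}(\tilde w)\ge c\|w-w_P^{\mathsf{SCD}}\|^2$, which is exactly your strict-convexity step). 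Your added care about the individual-averaged form of $w_{j,P}^{\mathsf{DID}}$ in (\ref{w^{DID}}) is a reasonable refinement but does not change the argument.
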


The result shows that when we use the same differencing method for both DID and SCD designs, and the PTA holds at all periods, the two designs are equivalent in terms of GMC. For the identification of the ATT, we do not require the pre-treatment PTA. However, it is a common practice to perform a pre-trend PTA test to assess the plausibility of the post-treatment PTA. This procedure is valid only if PTA implies the pre-treatment PTA. Then, the proposition says that under this implication, if the DID identifies the ATT through the post-treatment PTA, this means that the weights used by the DID are exactly the same as the weights chosen by the SCD. Hence, both designs are equivalent in terms of GMC.

Now, when the post-treatment PTA fails, the equivalence between the DID and the SCD breaks down, and the SCD can be an alternative to the DID design. The GMC for the SCD emerges as an alternative identifying assumption replacing PTA.

\section{Inference for Synthetic Control with Differencing}
\label{sec: SCD}

We saw that the SCD can serve as an alternative to DID when the parallel trend assumption fails. To the best of our knowledge, the estimation and inference methods for SCD in our data structure have not been developed. Thus, we present the methods here together with their asymptotic properties. The proofs of the results are found in the Supplemental Note.

\subsection{The Sampling Process and Estimation}

\subsubsection{The Sampling Process}

In this section, we explain the sampling process that links the population objects to the sample. As for the population objects, we first assume that the random vectors, $(Y_i^*,G_i)$, are i.i.d.\ across $i \in N$, under each $P \in \mathcal{P}$. Let $P_{j,t}$ be the conditional distribution of $Y_{i,t}$ given $G_i = j$, and let $p_j = P\{G_i = j\}$.

For each $t \in \mathcal{T}$, we first draw $G_{i,t} \in \mathcal{G}$, i.i.d.\ across $i \in N_t$, with probability $P\{G_{i,t} = j\}$ equal to $p_j$ for each $j \in \mathcal{G}$. Then, we draw $Y_{i,t}$, $i \in N_t$, i.i.d.\ from the conditional distribution $P_{j,t}$. By the sampling process, for each $t \in \mathcal{T}$, and $i \in N_t$, we have 
\begin{align*}
    m_{j,t} = \mathbf{E}[Y_{i,t} \mid G_{i,t} = j].
\end{align*}
We let $N = \bigcup_{t=1}^T N_t$ and $n = |N|$. We also define $N_{j,t} = \{i \in N: G_{i,t} = j\}$ and $n_{j,t} = |N_{j,t}|$. 

This sampling process accommodates the empirical setting where the size of cross-sectional units varies over time. It also accommodates both balanced or unbalanced panel settings and repeated cross-sections. In the balanced panel setting, we have $N_t = N$ for all $t \in \mathcal{T}$, and assume that the random vectors 
\begin{align*}
    [(Y_{i,1},G_{i,1}),...,(Y_{i,T},G_{i,T})]
\end{align*}
are i.i.d.\ across $i \in N$, whereas in the repeated cross-sections setting, we assume that $(Y_{i,t},G_{i,t})$ are i.i.d.\ across $i \in N_t$ and independent across $t \in \mathcal{T}$.

For estimation and inference, we fix within-group differencing $\lambda \in \Delta_{|\mathcal{T}_0|-1}$ and assume that we are under SMC at $\lambda$ so that we have 
\begin{align}
    \label{eq003}
    \mu_{0,t}(\lambda) = \sum_{j=1}^K \mu_{j,t}(\lambda) w_{j},
\end{align}
for all $t \in \mathcal{T}$, for some $w \in \Delta_{K-1}$. Thus, SCD recovers this weight using (\ref{w^*}).\footnote{Note that SCD extrapolates the weight obtained from the pre-treatment matching to the post-treatment periods. It appears strange that the weight that did not give a perfect pre-treatment matching now achieves a perfect post-treatment matching. Hence, we assume that the weight gives a perfect matching on both pre- and post-treatment periods, i.e., SMC at $\lambda$.}

\subsubsection{Estimation}

First, we consider a setting where $\boldsymbol{\theta}^*$ is identified. Since $\boldsymbol{\theta}^*$ is equal to $\boldsymbol{\theta}(\lambda,w^*(\lambda))$, the identification of $\boldsymbol{\theta}^*$ boils down to that of $w^*(\lambda)$. We simply write 
\begin{align*}
    \hat \mu_{j,t} = \hat \mu_{j,t}(\lambda),
\end{align*}
where $\hat \mu_{j,t}(\lambda)$ is defined in (\ref{sample analogue est mu}). We propose the following estimator of the weight $w^*(\lambda)$:
\begin{align*}
	\hat w(\lambda) = \argmin_{w \in \Delta_{K-1}} \enspace \hat Q(\lambda,w),
\end{align*}
where, with $\boldsymbol{\hat \mu}_t = [\hat \mu_{1,t},...,\hat \mu_{K,t}]'$, 
\begin{align*}
	\hat Q(\lambda,w) = \sum_{t =1}^{T^*-1} \left( \hat \mu_{0,t} - \boldsymbol{\hat \mu}_t' w \right)^2.
\end{align*}
Lastly, we consider the following estimator for the target parameter $\theta_{t}^*$: 
\begin{align}\label{par:theta_hat}
    \hat \theta_{t}(\hat w(\lambda)) = \hat \mu_{0,t} - \boldsymbol{\hat \mu}_t' \hat w(\lambda).
\end{align}
Then, we can show that our estimators for the weight $w^*(\lambda)$ defined in (\ref{w^*}) and the target parameter $\theta_{t}^*$ are $\sqrt{n}$-consistent. 

\begin{theorem}\label{theorem:3.1}
	Suppose that Assumptions \ref{assump:basic}, \ref{assump: sampling design} and \ref{assump: moments} hold. Then, for all $t \in \mathcal{T}_1$, as $n \rightarrow \infty$, 
    \begin{align*}
        \hat w(\lambda) = w^*(\lambda) + O_P(n^{-1/2}) \text{ and } \hat \theta_{t}(\hat w(\lambda)) = \theta_{t}(\lambda,w^*(\lambda)) + O_P(n^{-1/2}).
    \end{align*}
\end{theorem}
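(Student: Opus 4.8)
The plan is to treat $\hat w^{\mathsf{SCD}}$ as the minimizer of a strongly convex quadratic program whose data enter only through the moment matrices $\hat H = (T^*-1)^{-1}\sum_{t\in\mathcal T_0}\boldsymbol{\hat\mu}_t\boldsymbol{\hat\mu}_t'$ and $\boldsymbol{\hat h} = (T^*-1)^{-1}\sum_{t\in\mathcal T_0}\hat\mu_{0,t}\boldsymbol{\hat\mu}_t$, and to show that these converge to their population counterparts $H,\boldsymbol h$ at the parametric rate. The building block is that for every $j\in\mathcal G$ and $t\in\mathcal T$, $\hat m_{j,t}-m_{j,t}=O_P(n^{-1/2})$, hence $\hat\mu_{j,t}-\mu_{j,t}(\lambda)=O_P(n^{-1/2})$ by linearity of the $\lambda$-differencing. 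This follows from the asymptotic linear representation recorded in the excerpt (valid at every $t\in\mathcal T$ by the same argument), whose influence functions $\psi_{ij,t}$ have variances bounded uniformly in $P$ by the fourth-moment bound and the lower bound $p_j\ge\pi_0$ in Assumption \ref{assump: moments}; the randomness of the group counts $n_{j,t}$ is controlled because $n_{j,t}/n_t$ concentrates around $p_j\ge\pi_0$. Since $\hat H$ and $\boldsymbol{\hat h}$ are fixed-dimensional smooth functions of the $\hat\mu_{j,t}$, I obtain $\|\hat H-H\|_{op}=O_P(n^{-1/2})$ and $\|\boldsymbol{\hat h}-\boldsymbol h\|=O_P(n^{-1/2})$, where $\|\cdot\|_{op}$ is the operator norm.

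Next I would exploit strong convexity. Writing $Q_P(w)=\tfrac12 w'Hw-\boldsymbol h'w$ and $\hat Q_n(w)=\tfrac12 w'\hat Hw-\boldsymbol{\hat h}'w$, the condition (\ref{strong ID SCD}) gives $\lambda_{\min}(H)\ge c$, so $Q_P$ is $c$-strongly convex and $w_P^{\mathsf{SCD}}$ is its unique minimizer over $\Delta_{K-1}$. The variational inequality $\nabla Q_P(w_P^{\mathsf{SCD}})'(w-w_P^{\mathsf{SCD}})\ge 0$ for $w\in\Delta_{K-1}$ then yields
\[
    Q_P(w)-Q_P(w_P^{\mathsf{SCD}})\ge \tfrac{c}{2}\|w-w_P^{\mathsf{SCD}}\|^2, \qquad w\in\Delta_{K-1}.
\]
I emphasize that this is the reason to route the argument through strong convexity rather than linearizing the first-order conditions: $w_P^{\mathsf{SCD}}$ may lie on the boundary of the simplex with active constraints, and the variational inequality sidesteps having to identify the active set.

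The crucial step is to upgrade this to the parametric rate. Setting $\delta=\hat w^{\mathsf{SCD}}-w_P^{\mathsf{SCD}}$, $\Delta H=\hat H-H$, and $\Delta h=\boldsymbol{\hat h}-\boldsymbol h$, I would decompose $Q_P(\hat w^{\mathsf{SCD}})-Q_P(w_P^{\mathsf{SCD}})=[\hat Q_n(\hat w^{\mathsf{SCD}})-\hat Q_n(w_P^{\mathsf{SCD}})]+R$, where the bracket is $\le 0$ by optimality of $\hat w^{\mathsf{SCD}}$ for $\hat Q_n$, and a short computation gives
\[
    R=-(w_P^{\mathsf{SCD}})'\Delta H\,\delta-\tfrac12\delta'\Delta H\,\delta+\Delta h'\delta.
\]
The key observation is that $R$ is \emph{linear} in $\delta$ up to a quadratic remainder, so $R\le(\|\Delta H\|_{op}+\|\Delta h\|)\|\delta\|+\tfrac12\|\Delta H\|_{op}\|\delta\|^2$ using $\|w_P^{\mathsf{SCD}}\|\le 1$. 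Combining with the strong-convexity lower bound and absorbing the $O_P(n^{-1/2})\|\delta\|^2$ term into the left-hand side for large $n$ gives $\|\delta\|\le(4/c)(\|\Delta H\|_{op}+\|\Delta h\|)=O_P(n^{-1/2})$. The main obstacle is exactly this refinement: bounding $R$ crudely by $\sup_{w}|\hat Q_n-Q_P|$ only delivers the slow rate $\|\delta\|=O_P(n^{-1/4})$, so it is essential to exploit that the perturbation's leading contribution at $w_P^{\mathsf{SCD}}$ is itself of order $n^{-1/2}$ and linear in $\delta$.

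Finally, for the target parameter I would write
\[
    \hat\theta_t(\hat w^{\mathsf{SCD}})-\theta_t(\lambda,w_P^{\mathsf{SCD}})
    =(\hat\mu_{0,t}-\mu_{0,t})-\boldsymbol{\hat\mu}_t'\delta-(\boldsymbol{\hat\mu}_t-\boldsymbol\mu_t)'w_P^{\mathsf{SCD}},
\]
and bound each term: the first and third are $O_P(n^{-1/2})$ by the moment step together with $\|w_P^{\mathsf{SCD}}\|\le 1$, while the second is $O_P(n^{-1/2})$ because $\|\boldsymbol{\hat\mu}_t\|=O_P(1)$ and $\|\delta\|=O_P(n^{-1/2})$. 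This yields $\hat\theta_t(\hat w^{\mathsf{SCD}})=\theta_t(\lambda,w_P^{\mathsf{SCD}})+O_P(n^{-1/2})$ for every $t\in\mathcal T_1$, completing the proof.
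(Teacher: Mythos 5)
Your proposal is correct and follows essentially the same route as the paper: the paper's proof cites its Lemma on $\sum_j|\hat w_j^{\mathsf{SCD}}-w_{j,P}^{\mathsf{SCD}}|$, which is obtained exactly by combining the quadratic-growth bound $\eta_{0,P}(w)-\eta_{0,P}(w_P^{\mathsf{SCD}})\ge c\|w-w_P^{\mathsf{SCD}}\|^2$ (from the rank condition (\ref{strong ID SCD})) with a bound on the centered difference $[\eta_{0,P}(\hat w)-\eta_{0,P}(w_P)]-[\hat\eta_0(\hat w)-\hat\eta_0(w_P)]$ that is linear in $\hat w-w_P$ with an $O_P(n^{-1/2})$ coefficient — precisely the refinement you emphasize as the key to avoiding the $n^{-1/4}$ rate. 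Your treatment of $\hat\theta_t$ likewise matches the "standard arguments" the paper omits.
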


\subsection{Inference}

We consider statistical inference on vector $\boldsymbol{\theta}^*$ without assuming its point-identification, adapting the proposal of \cite{Canen/Song:arXiv:25} to our setting.

First, we construct a confidence set for $w^*(\lambda)$. Define
\begin{align*}
    \hat H = \frac{1}{T^*-1}\sum_{t =1}^{T^*-1} \hat H_t \text{ and }  \boldsymbol{\hat h} = \frac{1}{T^*-1}\sum_{t =1}^{T^*-1} \boldsymbol{\hat h}_t,
\end{align*}
where $\hat H_t = \boldsymbol{\hat \mu}_t \boldsymbol{\hat \mu}_t' \text{ and } \boldsymbol{\hat h}_t = \hat \mu_{0,t} \boldsymbol{\hat \mu}_{t}.$ Let 
\begin{align*}
    \hat \varphi(w) = \hat H w - \boldsymbol{\hat h} \text{ and } \varphi_P(w) = H w - \boldsymbol{h}.
\end{align*}
Let $B = [\mathbf{1}/\sqrt{K},B_2]$ be the $K \times K$ orthogonal matrix $B$ such that $B' B = I_K$ and $B_2' B_2 = I_{K-1}$, where $\mathbf{1}$ denotes the $K$-dimensional vector of ones.\footnote{The matrix $B_2$ can be computed as follows. First, we obtain a spectral decomposition: $I_{K} - \mathbf{1} \mathbf{1}'/K = U D U'$, where $\mathbf{1}$ denotes the $K$-dimensional vector of ones. From this, we set $B_2$ to be the $K \times (K-1)$ matrix after removing the eigenvector from $U$ that corresponds to the zero diagonal element of $D$.} Note that $B_2$ is a $K \times (K-1)$ matrix. First, note that for each $i \in N$ and $t \in \mathcal{T}_0$,
\begin{align*}
    \sqrt{n}(\hat \mu_{j,t} - \mu_{j,t}) = \frac{1}{\sqrt{n}} \sum_{i \in N} \psi_{ij,t} + o_P(1), \text{ as } n \to \infty,
\end{align*}
where $\psi_{ij,t} = \psi_{ij,t}^* - \sum_{s \in \mathcal{T}_0} \lambda_s \psi_{ij,s}^*$, with\footnote{In the case of a balanced panel setting with $N = N_t$ and $G_{i,t} = G_i$ for all $t \in \mathcal{T}$ and $i \in N$, we have $\psi_{ij,t} = 1\{G_{i,t} = j\}(y_{i,t} - \mu_{j,t})/p_j$, with $y_{i,t} = Y_{i,t} - \sum_{s \in \mathcal{T}_0} \lambda_s Y_{i,s}$.}
\begin{align*}
   \psi_{ij,t}^* = \frac{n}{n_t}\frac{1\{G_{i,t}= j\}}{p_{j}} (Y_{i,t} - m_{j,t}).
\end{align*}
For notational brevity, we define 
\begin{align*}
    \boldsymbol{z}_{ij} = \frac{1}{T^*-1}\sum_{t=1}^{T^*-1} \boldsymbol{\mu}_t \psi_{ij,t} \text{ and } \boldsymbol{z}_{i}(w) = \sum_{j=1}^K w_j \boldsymbol{z}_{ij} - \boldsymbol{z}_{i0}.
\end{align*}
 Using this, we find that 
\begin{align*}
    \sqrt{n} B_2'(\hat \varphi(w) - \varphi_P(w)) \to_d N(0,V_P(w)),
\end{align*}
where $V_P(w) = B_2'\text{Var}_P \left( \boldsymbol{z}_{i}(w) \right)B_2.$ Let us estimate $V_P(w)$ by $\hat V(w)$ as follows: with $\boldsymbol{\hat z}_{ij} = \frac{1}{T^*-1}\sum_{t=1}^{T^*-1} \boldsymbol{\hat \mu}_t \hat \psi_{ij,t}$ and $\boldsymbol{\hat z}_{i}(w) = \sum_{j=1}^K w_j \boldsymbol{\hat z}_{ij} - \boldsymbol{\hat z}_{i0}$,
\begin{align}
    \label{hat V(w)}
	\hat V(w) = B_2' \left(\frac{1}{n}\sum_{i \in N} \boldsymbol{\hat z}_{i}(w) \boldsymbol{\hat z}_{i}'(w) \right) B_2,
\end{align} 
where $\hat \psi_{ij,t}$ is the same as $\psi_{ij,t}$ except that $p_{j}$ and $m_{j,t}$ are replaced by $\hat p_{j,t} = n_{j,t}/n_t$ and $\hat m_{j,t} = (1/{n_{j,t}}) \sum_{i \in N_{j,t}} Y_{i,t}$.

When the sample is repeated cross-sections, the observations are independent across time. In this case, we can obtain sharper inference by modifying $\hat V(w)$ as follows: 
\begin{align}
    \label{hat V(w)2}
	\hat V_{RC}(w) &= \frac{1}{T^*-1} \sum_{t=1}^{T^*-1} \left\{\frac{1}{n}\sum_{i \in N} \left(\sum_{j=1}^K w_j  \hat \psi_{ij,t}^* - \hat \psi_{i0,t}^*\right)^2  \right.\\ \notag
    &\quad \quad \quad \quad \quad \quad \left.\times B_2'\left( \frac{\boldsymbol{\hat \mu}_t \boldsymbol{\hat \mu}_t'}{T^*-1} - \lambda_t \left(\boldsymbol{\bar \mu} \boldsymbol{\hat \mu}_t' + \boldsymbol{\hat \mu}_t \boldsymbol{\bar \mu}'\right) + (T^*-1) \lambda_t^2 \boldsymbol{\bar \mu} \boldsymbol{\bar \mu}' \right)B_2 \right\},
\end{align}
where 
\begin{align*}
   \hat \psi_{ij,t}^* = \frac{n}{n_t}\frac{1\{G_{i,t}= j\}}{\hat p_{j,t}} (Y_{i,t} - \hat m_{j,t}) \text{ and } \boldsymbol{\bar \mu} = \frac{1}{T^* - 1} \sum_{t=1}^{T^*-1} \boldsymbol{\hat \mu}_t.
\end{align*}
For each $w \in \Delta_{K-1}$, define\footnote{Due to the constraint, we have $\hat r(w) = 0$ if all entries of $w$ are positive. Hence, we perform the numerical optimization only if some of the entries of $w$ are zeros.} 
\begin{align}
    \label{r(w)}
    \hat r(w) = \argmin_{r} \enspace (\hat H w - \boldsymbol{\hat h} - r)'B_2 \hat V^{-1}(w) B_2'(\hat H w - \boldsymbol{\hat h} - r),
\end{align}
where the minimization over $r$ is under the constraint that $w'r = 0$ and $r \ge 0$. We define 
\begin{align*}
    \hat d(w) = \left| \left\{j=1,...,K: \hat \gamma_j(w) = 0 \text{ and } w_j = 0 \right\} \right|,
\end{align*}
where 
\begin{align*}
    \hat \gamma(w) = B_2\hat V^{-1}(w)B_2'(\hat H w - \boldsymbol{\hat h} - \hat r(w)).
\end{align*}
We set $\hat c_{1-\kappa,\mathsf{bf}}(w)$ to be the $(1-\kappa)$-th quantile of the $\chi^2$ distribution with degrees of freedom equal to 
\begin{align}
    \label{k(w)}
    \hat k(w) := \max\{K-1 - \hat d(w),1\}.
\end{align}
Then, the confidence set for $w^*(\lambda)$ is given by 
\begin{align*}
   \tilde C_{1-\kappa} = \{w \in \Delta_{K-1}: T_{\mathsf{bf}}(w) \le \hat c_{1-\kappa,\mathsf{bf}}(w)\}, 
\end{align*}
where \begin{align*}
    T_{\mathsf{bf}}(w) = n(\hat H w - \boldsymbol{\hat h} - \hat r(w))'B_2 \hat V^{-1}(w) B_2'(\hat H w - \boldsymbol{\hat h} - \hat r(w)).
\end{align*}

\begin{algorithm}[t]
\caption{Computing Confidence Intervals for $\theta_t^*$: Bonferroni Method}
\label{alg:confidence_intervals 1}
\renewcommand{\baselinestretch}{1.5}
\begin{algorithmic}[1]
\REQUIRE Consistent estimator $\hat{w}(\lambda)$ of $w^*(\lambda)$, $\hat \sigma_t(\hat w(\lambda))$ and  $\hat V(\hat{w}(\lambda))$.
\STATE Draw $w_1, \ldots, w_R \in \Delta_{K-1}$ i.i.d.\ from a distribution that has a full support on $\Delta_{K-1}$.
\STATE Compute $T_{\mathsf{bf}}(w_r)$ and $\hat{c}_{1-\kappa,{\mathsf{bf}}}(w_r)$ with $\hat V(w)$ replaced by $\hat V(\hat{w}(\lambda))$ for each $r=1,...,R$.
\STATE Let 
\begin{align*}
    c_{U,R} &= \max_{1 \le r \le R: T_{\mathsf{bf}}(w_r) \le \hat{c}_{1-\kappa,{\mathsf{bf}}}(w_r)} \left\{\hat \theta_t (w_r) + \frac{z_{1-\beta(\alpha,\kappa)} \hat \sigma_t(\hat w(\lambda))}{\sqrt{n}} \right\} \text{ and }\\
    c_{L,R} &= \min_{1 \le r \le R: T_{\mathsf{bf}}(w_r) \le \hat{c}_{1-\kappa,{\mathsf{bf}}}(w_r)} \left\{\hat \theta_t (w_r) - \frac{z_{1-\beta(\alpha,\kappa)} \hat \sigma_t(\hat w(\lambda))}{\sqrt{n}} \right\},
\end{align*}
where $\beta(\alpha,\kappa) = (\alpha - \kappa)/2$.
\ENSURE Confidence interval for $\theta_t^*$:
\begin{align*}
    C_{1-\alpha,R} = [c_{L,R},c_{U,R}].
\end{align*}
\end{algorithmic}
\renewcommand{\baselinestretch}{1.0}
\begin{minipage}{\textwidth}
\vspace{0.4cm}
\footnotesize \textit{Notes:}  The algorithm assumes that the weight $w^*(\lambda)$ is point-identified. When it is partially identified, we can modify the above algorithm by replacing $\hat V(\hat w(\lambda))$ and $\hat \sigma(\hat w(\lambda))$ in Steps 2 and 3 by $\hat V(w_r)$ and $\hat \sigma(w_r)$.
\end{minipage}
\vspace{0.4cm}
\end{algorithm}

Let $\tilde C_{1-\kappa}$ be the confidence set for $w^*(\lambda)$. We construct a confidence interval for $\theta_t^*$. Note that 
\begin{align*}
  \sqrt{n}(\hat \theta_t(w^*(\lambda)) - \theta_t^*) = \frac{1}{\sqrt{n}}\sum_{i \in N} \psi_{it,\theta}(w^*(\lambda)) + o_P(1),
\end{align*}
where 
\begin{align*}
    \psi_{it,\theta}(w^*(\lambda)) = \psi_{i0,t} - \sum_{j=1}^{K} \psi_{ij,t} w_{j}^*(\lambda).
\end{align*}
Define 
\begin{align*}
    \hat \sigma_t^2(w) = \frac{1}{n}\sum_{i \in N} \hat \psi_{it,\theta}^2(w), 
\end{align*}
where $\hat \psi_{it,\theta}(w) = \hat \psi_{i0,t} - \sum_{j=1}^{K} \hat \psi_{ij,t} w_j.$ Then, the confidence interval for $\theta_t^*$ is given as follows: with $\kappa \in (0,\alpha)$, (say, $\kappa = 0.005$)
\begin{align*}
    C_{1-\alpha}^\mathsf{bf} = \left\{ \tau \in \mathbf{R}: \inf_{w \in \tilde C_{1-\kappa} }\left| \frac{\sqrt{n}(\hat \theta_t(w) - \tau)}{\hat \sigma_t(w)} \right| \le z_{1 - \beta(\alpha,\kappa)} \right\},
\end{align*}
where $\beta(\alpha,\kappa) = (\alpha - \kappa)/2$.

The computation of a confidence interval for $\theta_t^*$ involves inverting a test for the weight vector. For the case of point-identified $w^*(\lambda)$, we present an algorithm that computes the convex hull of $C_{1-\alpha}$ directly without constructing $\tilde C_{1-\kappa}$ first. See Algorithm \ref{alg:confidence_intervals 1}. Computational experiments in Section \ref{subsec: Monte Carlo} below demonstrate that the algorithm computes the confidence set efficiently in practical data dimensions ($n_t = 14,000 \sim 130,000$, $T = 84$ and $K = 46$).

\subsection{Asymptotic Validity}

Let us introduce assumptions we use for the uniform asymptotic validity of the confidence set for $\theta_t^*$, without requiring the point-identification of $w^*(\lambda)$:

\begin{assumption} 
    \label{assump: nonsingularity}
    There exist constants $C,c>0$ such that for all $n \ge 1$,
\begin{align*}
    \sup_{P \in \mathcal{P}} \sup_{w \in \Delta_{K-1}}\|V_P(w)\| < C \text{ and } \inf_{P \in \mathcal{P}} \inf_{w \in \Delta_{K-1}} \lambda_{\min}(V_P(w)) > c.
\end{align*} 
\end{assumption}
Assumption \ref{assump: nonsingularity} requires that the asymptotic variance $V_P(w)$ is well behaved uniformly over $P \in \mathcal{P}$ and $w \in \Delta_{K-1}$: it should be both bounded and non-singular.

Under these conditions, we obtain the following validity result.
\begin{theorem} 
    \label{thm: asymptotic validity2}
    Suppose that Assumptions \ref{assump:basic}, \ref{assump: sampling design}, \ref{assump: nonsingularity}, (\ref{bounds}) in Assumption \ref{assump: moments}, and SMC holds at $\lambda$. Then, for all $t \in \mathcal{T}_1$, as $n \rightarrow \infty$, we have
\begin{align*}
\liminf\limits_{n\rightarrow\infty} \inf_{P \in \mathcal{P}} P\left\{\theta_t^* \in C_{1-\alpha}\right\} \geq 1-\alpha.
\end{align*}
\end{theorem}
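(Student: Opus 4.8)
The plan is to build $C_{1-\alpha}$ by a Bonferroni combination of two pieces: the coverage of the weight confidence set $\tilde C_{1-\kappa}$ and a studentized normal interval for $\theta_t(\lambda,\cdot)$ evaluated at the true weight. By SMC at $\lambda$ there is a $w_0 \in \Delta_{K-1}$ with $e_s(\lambda,w_0)=0$ for all $s \in \mathcal{T}$, so $\theta_t^* = \theta_t(\lambda,w_0) = \mu_{0,t}(\lambda) - \boldsymbol{\mu}_t' w_0$ for every $t \in \mathcal{T}_1$, and $w_0$ minimizes the population pre-treatment criterion $\tfrac12 w'Hw - \boldsymbol{h}'w$ over $\Delta_{K-1}$. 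Its Karush--Kuhn--Tucker conditions yield a multiplier $r_0 \ge 0$ with $w_0'r_0=0$ and $B_2'(Hw_0 - \boldsymbol{h} - r_0)=0$, since $B_2'\mathbf{1}=0$ annihilates the equality-constraint multiplier. Writing $E_1=\{w_0\in\tilde C_{1-\kappa}\}$ and $E_2=\{|\sqrt{n}(\hat\theta_t(w_0)-\theta_t^*)/\hat\sigma_t(w_0)|\le z_{1-\beta(\alpha,\kappa)}\}$, feasibility of $w_0$ in the infimum defining $C_{1-\alpha}$ gives
\[
\inf_{w \in \tilde C_{1-\kappa}} \left| \frac{\sqrt{n}(\hat\theta_t(w) - \theta_t^*)}{\hat\sigma_t(w)} \right| \le \left| \frac{\sqrt{n}(\hat\theta_t(w_0) - \theta_t^*)}{\hat\sigma_t(w_0)} \right|,
\]
hence the purely algebraic containment $E_1\cap E_2\subseteq\{\theta_t^*\in C_{1-\alpha}\}$. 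It then remains to bound $P(E_1^c)$ and $P(E_2^c)$ uniformly in $P$, which I would do along arbitrary sequences $P_n\in\mathcal{P}$, as is standard for uniform validity.

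The main work, and the expected main obstacle, is showing $\liminf_n \inf_{P}P\{w_0\in\tilde C_{1-\kappa}\}\ge 1-\kappa$. The difficulty is that $w_0$ generically lies on the boundary of $\Delta_{K-1}$, so the identifying restriction is the complementarity system $B_2'(\varphi_P(w_0)-r_0)=0$, $r_0\ge 0$, $w_0'r_0=0$ rather than a plain equality, and the limiting law of $T(w_0)$ is a chi-bar-squared type mixture whose weights depend on which constraints bind. The data-dependent degrees of freedom $\hat k(w_0)=\max\{K-1-\hat d(w_0),1\}$ and the projection $\hat r(w_0)$ are designed precisely to adapt to this local geometry, and the argument is to invoke the uniformly valid inference for simplex-valued moment restrictions of \cite{Canen&Song@arXiv2025}. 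Concretely, I would first establish the asymptotic linear representation and the CLT $\sqrt{n}B_2'(\hat\varphi(w_0)-\varphi_{P_n}(w_0))\to_d N(0,V)$ stated in the text, together with consistency $\hat V(w_0)\to_p V$ (and $\hat V_{RC}(w_0)\to_p V$ in the repeated cross-sections case); the fourth-moment bound in (\ref{bounds}) supplies the Lyapunov condition and Assumption \ref{assump: nonsingularity} keeps $V$ bounded and nonsingular along the sequence. Passing to a subsequence on which the active-constraint set and the localized multiplier stabilize, the cited result shows that the $\chi^2_{\hat k(w_0)}$ critical value asymptotically dominates the relevant mixture, delivering coverage at least $1-\kappa$.

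For $E_2$, the step is a uniform central limit theorem for the studentized statistic at the fixed weight $w_0$. Using the linearization $\sqrt{n}(\hat\theta_t(w_0)-\theta_t^*)=n^{-1/2}\sum_{i\in N}\psi_{it,\theta}(w_0)+o_P(1)$ with $\psi_{it,\theta}(w_0)=\psi_{i0,t}-\sum_{j=1}^K\psi_{ij,t}w_{0j}$, independence across $i$ (Assumption \ref{assump: sampling design}) and the fourth-moment bound in (\ref{bounds}) give a Lindeberg condition, so $\sqrt{n}(\hat\theta_t(w_0)-\theta_t^*)/\sigma_t(w_0)\to_d N(0,1)$; the nonsingularity in Assumption \ref{assump: nonsingularity} bounds $\sigma_t(w_0)$ away from zero, and $\hat\sigma_t^2(w_0)\to_p\sigma_t^2(w_0)$ follows from a uniform law of large numbers for the plug-in influence functions, using consistency of $\hat p_{j,t}$ and $\hat m_{j,t}$. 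Hence $P(E_2^c)\to 2(1-\Phi(z_{1-\beta(\alpha,\kappa)}))=2\beta(\alpha,\kappa)$ along the sequence.

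Finally I would assemble the pieces. In the panel case a union bound gives $P\{\theta_t^*\in C_{1-\alpha}\}\ge 1-P(E_1^c)-P(E_2^c)\to 1-\kappa-(\alpha-\kappa)=1-\alpha$, since $\beta(\alpha,\kappa)=(\alpha-\kappa)/2$. In the repeated cross-sections case the set $\tilde C_{1-\kappa}$ is built from pre-treatment periods while the fluctuation driving $E_2$ is, by the across-time independence of the design, asymptotically independent of $E_1$; multiplying probabilities with $\beta(\alpha,\kappa)=(\alpha-\kappa)/(2(1-\kappa))$ yields $P(E_1\cap E_2)\ge(1-\kappa)(1-(\alpha-\kappa)/(1-\kappa))=1-\alpha$. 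Taking $\liminf$ over $n$ and the infimum over the sequence gives the claim. The routine parts are the two central limit theorems and the variance consistency; the genuine obstacle is the boundary behaviour of the weight confidence set, where the complementarity structure and the estimated degrees of freedom must be shown to produce conservative coverage uniformly, which is exactly what the \cite{Canen&Song@arXiv2025} machinery is invoked to handle.
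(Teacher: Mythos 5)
Your proposal is correct and follows essentially the same route as the paper: decompose coverage via the event that a true weight $w_0$ (guaranteed by SMC) lies in $\tilde C_{1-\kappa}$ intersected with the event that the studentized statistic at $w_0$ is within $z_{1-\beta(\alpha,\kappa)}$, bound the first probability by invoking the Canen--Song uniform inference for the simplex-constrained weight (the paper's Theorem C.1), bound the second by an asymptotic linear representation plus CLT and variance consistency, and then combine by a union bound in the panel case and by independence of the pre-treatment-based $\tilde C_{1-\kappa}$ and the post-treatment studentized statistic in the repeated cross-sections case. The only cosmetic difference is that the paper phrases the argument as an infimum over the whole identified set $\mathbb{W}_{P_n}$ along arbitrary sequences $P_n$ rather than fixing a single $w_0$, which is equivalent in substance.
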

The proofs are found in the Supplemental Note.

\subsection{Monte Carlo Simulations}
\label{subsec: Monte Carlo}

In this subsection we study the finite sample properties of our estimator of the target parameter. Our focus is on comparing SCD and DID and examining their complementarities. We consider a short panel setting of length $T$ where individual data is available and analyze the simple case of one treated group and $K$ untreated groups. Group $0$ becomes treated at time $T^*$, and the remaining groups $\{1,\ldots,K\}$ form the donor pool $\mathcal{G}_{\mathsf{don}}$. We set $T^*=T \in \{50, 100\}$ and $K \in \{10, 30\}$.\footnote{Here the length of the post-treatment window is equal to one.} We compare the performance of our SCD estimator with the standard DID estimator in terms of the mean absolute deviation (MAD), the empirical coverage probability (ECP), and the average length of the 95\% confidence intervals. We consider a sample size of $n \in \{1500,3000\}$ and set the number of Monte Carlo simulations to $1,000$.

When comparing SCD and DID approaches, we implement the DID estimator of \cite{Callaway/SantAnna:JoE:21} without covariates. In the study, we consider three different scenarios: one (Scenario A) in which PTA holds and there are parallel pre-trends, a second one (Scenario B) in which PTA is violated but SMC holds throughout all time periods, and a last one (Scenario C) where PTA holds, SMC holds, but the weights for donors cannot be recovered from pre-treatment data. Thus, in Scenario A, both DID and SCD produce consistent and asymptotically normal estimators of the treatment effect. However, in Scenario B, while SCD works, DID is not consistent. The opposite occurs in Scenario C.   

We now describe the data generating process used in the simulations. First, for the baseline setting, we define the probability of an individual belonging to group $j \in \mathcal{G}=\{0,1,...,K\} $ as simply $1/(K+1)$, so that $G_i$ is drawn i.i.d.\ from the uniform distribution over $\mathcal{G}$ with probability $1/(K+1)$. As for the generation of potential outcomes, we adopt a factor model:
\begin{align}
\label{MC:factor_model1}    
Y_{i,t}(0) = \Lambda_i' F_t + \varepsilon_{i,t},
\end{align}
where, conditional on $G_i$, $\Lambda_i \sim N(m_{G_i}, I_3)$.\footnote{We use $\mathbf{1}_m$ and $I_m$ to denote the $m$-dimensional vector of ones and the $m \times m$ identity matrix, respectively.} Here $m_j$ denotes the population mean of factor loadings in group $j$, whose components are drawn independently from a normal distribution with mean zero and variance $2.5^2$. In addition, time factors and idiosyncratic shocks are generated as $F_t \sim N(0.02 \sqrt{t} \cdot \mathbf{1}_3, 0.5^2 \cdot I_3)$ and $\varepsilon_{i,t} \sim N(0,1)$. Lastly, we set treated potential outcomes for individuals in group $0$ as
\begin{align}
Y_{i,t}(T^*) = Y_{i,t}(0) + \tau_{i,t}(T^*),
\end{align}
where $\tau_{i,t}(T^*) = \eta_{i,t}^2$, and each $\eta_{i,t}$ is drawn from a normal distribution with mean zero and variance $0.1$. This setup implies that $\theta_{T^*}^* = 0.1.$ In other words, the average treatment effect for individuals in group $0$ is equal to the variance of the random variable $\eta_{i,t}$.

\begin{table}[t]
\centering
\caption{Results for Monte Carlo Simulations.}
\label{table:simulation_results_newOG}
\small
\begin{tabular*}{\textwidth}{@{\extracolsep{\fill}}ccccccccc}
\toprule\toprule
\multicolumn{3}{c}{Parameters} &
\multicolumn{2}{c}{MAD} &
\multicolumn{2}{c}{Coverage Probability} &
\multicolumn{2}{c}{CI Length} \\
\midrule
$K$ & $T$ & $n$ & SCD & CSDID & SCD & CSDID & SCD & CSDID \\
\midrule
\multicolumn{9}{l}{\textit{Scenario A: PTA and SMC hold}} \\[0.2em]
10 & 50  & 1500 & 0.079 & 0.145 & 0.999 & 0.998 & 1.475 & 1.166 \\
10 & 50  & 3000 & 0.059 & 0.105 & 1.000 & 0.999 & 0.967 & 0.828 \\
10 & 100 & 1500 & 0.079 & 0.148 & 1.000 & 0.998 & 1.397 & 1.228 \\
10 & 100 & 3000 & 0.054 & 0.100 & 1.000 & 1.000 & 0.907 & 0.875 \\
30 & 50  & 1500 & 0.128 & 0.229 & 1.000 & 0.997 & 2.456 & 1.770 \\
30 & 50  & 3000 & 0.094 & 0.163 & 1.000 & 1.000 & 1.824 & 1.279 \\
30 & 100 & 1500 & 0.130 & 0.228 & 1.000 & 0.999 & 2.280 & 1.864 \\
30 & 100 & 3000 & 0.090 & 0.157 & 1.000 & 0.999 & 1.715 & 1.349 \\
\midrule
\multicolumn{9}{l}{\textit{Scenario B: PTA fails but SMC holds}} \\[0.2em]
10 & 50  & 1500 & 0.092 & 1.914 & 0.999 & - & 1.002 & - \\
10 & 50  & 3000 & 0.069 & 1.909 & 0.996 & - & 0.683 & - \\
10 & 100 & 1500 & 0.092 & 1.861 & 0.997 & - & 0.965 & - \\
10 & 100 & 3000 & 0.062 & 1.857 & 0.998 & - & 0.652 & - \\
30 & 50  & 1500 & 0.140 & 2.083 & 0.999 & - & 1.590 & - \\
30 & 50  & 3000 & 0.100 & 2.073 & 0.997 & - & 1.072 & - \\
30 & 100 & 1500 & 0.137 & 2.006 & 0.992 & - & 1.378 & - \\
30 & 100 & 3000 & 0.098 & 1.982 & 0.993 & - & 0.927 & - \\
\midrule
\multicolumn{9}{l}{\textit{Scenario C: PTA holds but SMC fails}} \\[0.2em]
10 & 50  & 1500 & 1.163 & 0.145 & - & 0.999 & - & 1.167 \\
10 & 50  & 3000 & 1.159 & 0.105 & - & 0.999 & - & 0.828 \\
10 & 100 & 1500 & 1.105 & 0.148 & - & 0.998 & - & 1.229 \\
10 & 100 & 3000 & 1.130 & 0.100 & - & 1.000 & - & 0.875 \\
30 & 50  & 1500 & 1.317 & 0.229 & - & 0.998 & - & 1.771 \\
30 & 50  & 3000 & 1.340 & 0.163 & - & 1.000 & - & 1.278 \\
30 & 100 & 1500 & 1.209 & 0.228 & - & 0.999 & - & 1.866 \\
30 & 100 & 3000 & 1.198 & 0.157 & - & 0.999 & - & 1.350 \\
\bottomrule\bottomrule
\end{tabular*}
\medskip
\medskip
\vspace{0.01cm}
\parbox{6.4in}{\footnotesize
Notes: This table presents the simulation results for three scenarios: Scenario A assumes both Parallel Trends Assumption (PTA) and Stable Market Condition (SMC) hold, Scenario B assumes PTA fails but SMC holds, and Scenario C assumes PTA holds but SMC fails. The table reports Mean Absolute Deviation (MAD), Coverage Probability, and Confidence Interval (CI) Length for our proposed method of Synthetic Control with Differencing (SCD) and the Difference-in-Differences estimator proposed by \cite{Callaway/SantAnna:JoE:21} (CSDID). Inference for SCD is conducted using Algorithm \ref{alg:confidence_intervals 1}. Inference results are only reported for cases where the target parameter is identified. The number of MC simulations is 1,000.}
\end{table}

We select different combinations of the differencing parameter ($\lambda \in \Delta_{|\mathcal{T}_0|-1}$) and the population mean of individual factor loadings in the treated group ($m_{0} \in \mathbf{R}^3$) across scenarios. In Scenario A, we choose
\begin{align*}
    \lambda = \lambda^{\mathsf{DID}} \text{ and } m_{0} = \sum_{j=1}^K m_{j} w_j^{\mathsf{DID}},
\end{align*}
where $w^{\mathsf{DID}} = [1/K, \ldots, 1/K]' \in \mathbf{R}^K$ by the simulation design. By choosing these values, we guarantee that PTA is satisfied, parallel pre-trends are present, and GMC holds in both the pre- and post-treatment periods at ($\lambda^{\mathsf{DID}}, w^{\mathsf{DID}}$). In Scenario B, we let
\begin{align*}
    \lambda = \lambda^{\mathsf{unif}} \text{ and } m_{0} = \sum_{j=1}^K m_{j} w_j^{\mathsf{SCD}},
\end{align*}
where $w^{\mathsf{SCD}} = [0, \ldots, 0, 0.1, 0.9]' \in \mathbf{R}^K$ is a vector whose first $K-2$ entries are zero. In this case, PTA is violated since $w^{\mathsf{SCD}} \neq w^{\mathsf{DID}}$, but GMC still holds at ($\lambda^{\mathsf{unif}}, w^{\mathsf{SCD}}$). Lastly, we consider a Scenario C where PTA and GMC hold at ($\lambda^{\mathsf{DID}}, w^{\mathsf{DID}}$) for the post-treatment period, but the SCD approach is unable to recover $w^{\mathsf{DID}}$ using pre-treatment data. More precisely, we allow for time-varying factor loadings for individuals in the treated group as follows
\begin{align*}
    \Lambda_{i,t} = \tilde{\Lambda}_{i} 1\{t \leq T^*-2\} + \Lambda_{i} 1\{t \geq T^*-1\},
\end{align*}
where $\Lambda_{i}$ is defined as in Scenario A, but, conditional on $G_i$, $\tilde{\Lambda}_{i} \sim N(\tilde{m}_{G_i}, I_3)$, and
\begin{align*}
    \tilde{m}_0 = \sum_{j=1}^K m_j w_j^{\mathsf{OUT}},
\end{align*}
with $w^{\mathsf{OUT}} = [0, \ldots, 0, -0.3, 0.4, 0.9]' \in \mathbf{R}^K$. In this case, we allow individual factor loadings to be drawn from different distributions between pre- and post-treatment periods so that weights for control groups cannot be estimated by SCD using pre-treatment data.
 
The simulation results from this baseline setting are reported in Table \ref{table:simulation_results_newOG}. Inference results for SCD are based on the Bonferroni approach detailed in Algorithm \ref{alg:confidence_intervals 1}. In all scenarios, when the number of donor groups $K$ increases (keeping fixed $T$ and $n$), the accuracy of the estimators in terms of MAD deteriorates for both SCD and DID. In particular, in Scenario A, we have $w^{\mathsf{SCD}} = w^{\mathsf{DID}}$, so both designs generate a consistent estimator of $\theta_{T^*}^*$ and the confidence intervals are asymptotically valid as $n \to \infty$. As the number of groups increases, the estimation error of the weights accumulates. This explains the performance deterioration as $K$ increases from 10 to 30. When $T$ increases, the performance of SCD in terms of MAD remains similar, while confidence intervals become narrower.\footnote{Since the simulation includes only one post-treatment period, an increase in $T$ reflects an increase in the number of pre-treatment periods.} This pattern arises primarily because our setting is a panel design. In a repeated cross-section setting, the observations are independent across time and the accuracy would have increased as $T$ increased. The empirical coverage probability of DID and SCD shows conservativeness in this scenario, whereas CS-DID produces confidence bands that are, on average, 24\% narrower than those obtained with SCD.

In Scenario B, our simulation design is chosen so that PTA fails but SMC holds at $\lambda^{\mathsf{unif}}$. As expected, in this case we find that SCD outperforms DID in terms of MAD, and it also exhibits some conservativeness as in Scenario A. In Scenario C, we consider an opposite setting, that is, PTA holds but the stability of the weights fails. In this case, only DID provides a consistent estimate of $\theta_{T^*}^*$. As a result, DID displays a smaller MAD than SCD. The performance of SCD in Scenario C still appears better than that of DID in Scenario B. However, we believe this difference is largely driven by the simulation design.

In Tables \ref{table:simulation_results largep}–\ref{table:simulation_results_tloading} of Appendix \ref{sec: AdditionalResults}, we conduct three robustness checks for our simulation results. First, we depart from the equal-size group setting and instead consider a donor pool in which the last group is relatively larger than the others. Specifically, we set the group membership probabilities to
$p = [0.8/K, \ldots, 0.8/K, 0.2]' \in \mathbf{R}^{K+1}$ for $K=10$, and
$p = [0.925/K, \ldots, 0.925/K, 0.075]' \in \mathbf{R}^{K+1}$ for $K=30$. Secondly, we allow each component of time factors $F_t$ to have a different population mean by drawing it from a multivariate normal distribution with mean $\xi \sqrt{t}$ and covariance matrix $0.5^2 I_3$, where $\xi = [0.01, 0.02, 0.04]'$. Lastly, we use a multivariate $t$ distribution with 5 degrees of freedom, mean $m_{G_i}$, and scale matrix $I_3$, as the conditional distribution of individual factor loadings $\Lambda_i$. Across all these additional exercises, the main findings reported in Table \ref{table:simulation_results_newOG} continue to hold.

Overall, our simulation results highlight the complementarity between SCD and DID. Each method performs well when its identifying assumptions hold and deteriorates when they fail. In particular, SCD provides a reliable alternative when PTA is violated and performs as well as DID when both sets of identifying assumptions are satisfied.

\subsection{Computation Time}

\begin{figure}[tbp]
    \centering
    \caption{Computation Time.}
    \label{fig:SCD RC}
    \vspace{0.4cm}

    \begin{subfigure}{0.48\textwidth}
        \centering
        \includegraphics[width=\textwidth]{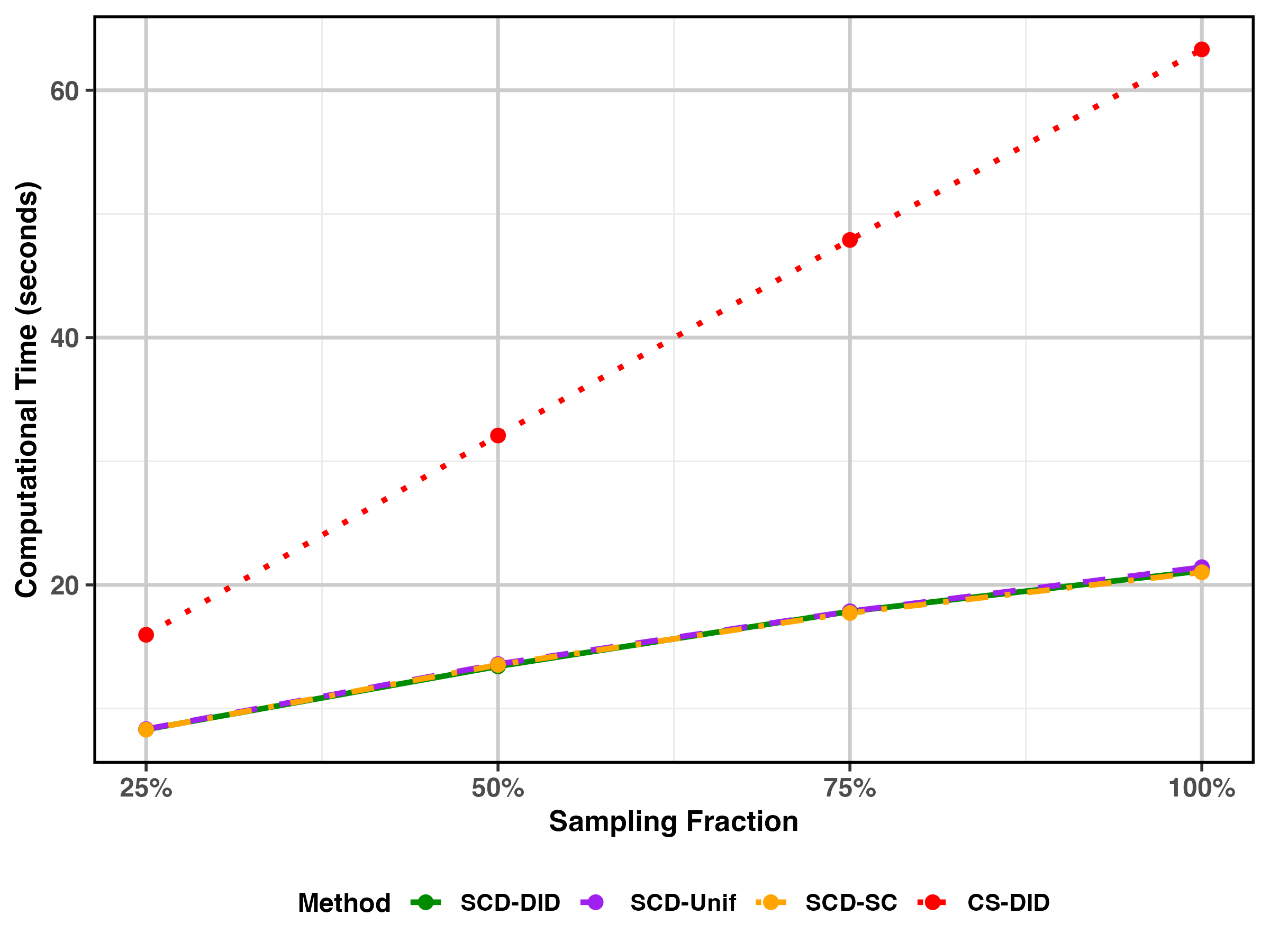}
        \caption{Discrete outcome, Repeated Cross-Section}
    \end{subfigure}
    \hfill
    \begin{subfigure}{0.48\textwidth}
        \centering
        \includegraphics[width=\textwidth]{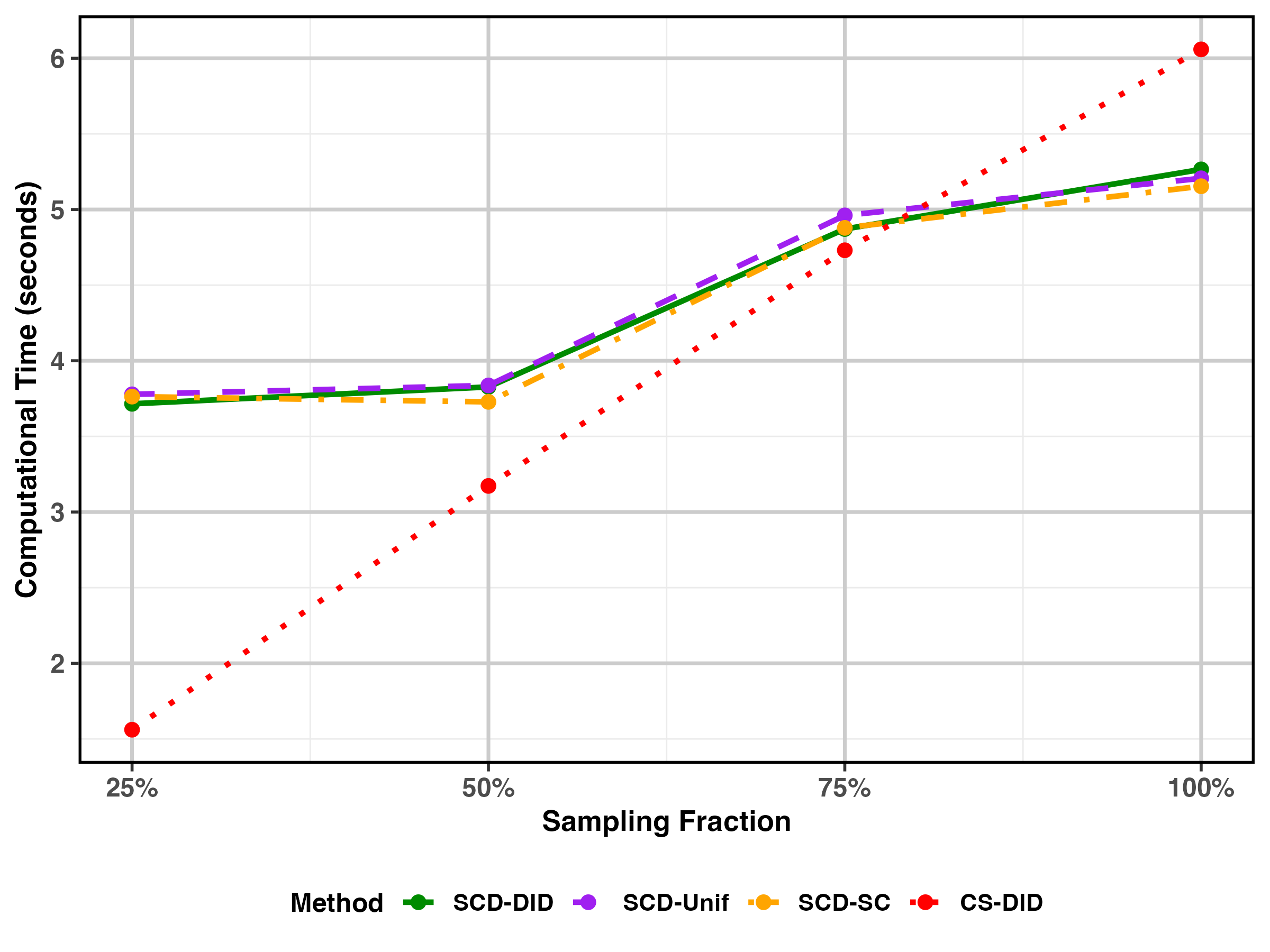}
        \caption{Continuous outcome, Repeated Cross-Section}
    \end{subfigure}

    \vspace{0.6cm}

    \begin{subfigure}{0.48\textwidth}
        \centering
        \includegraphics[width=\textwidth]{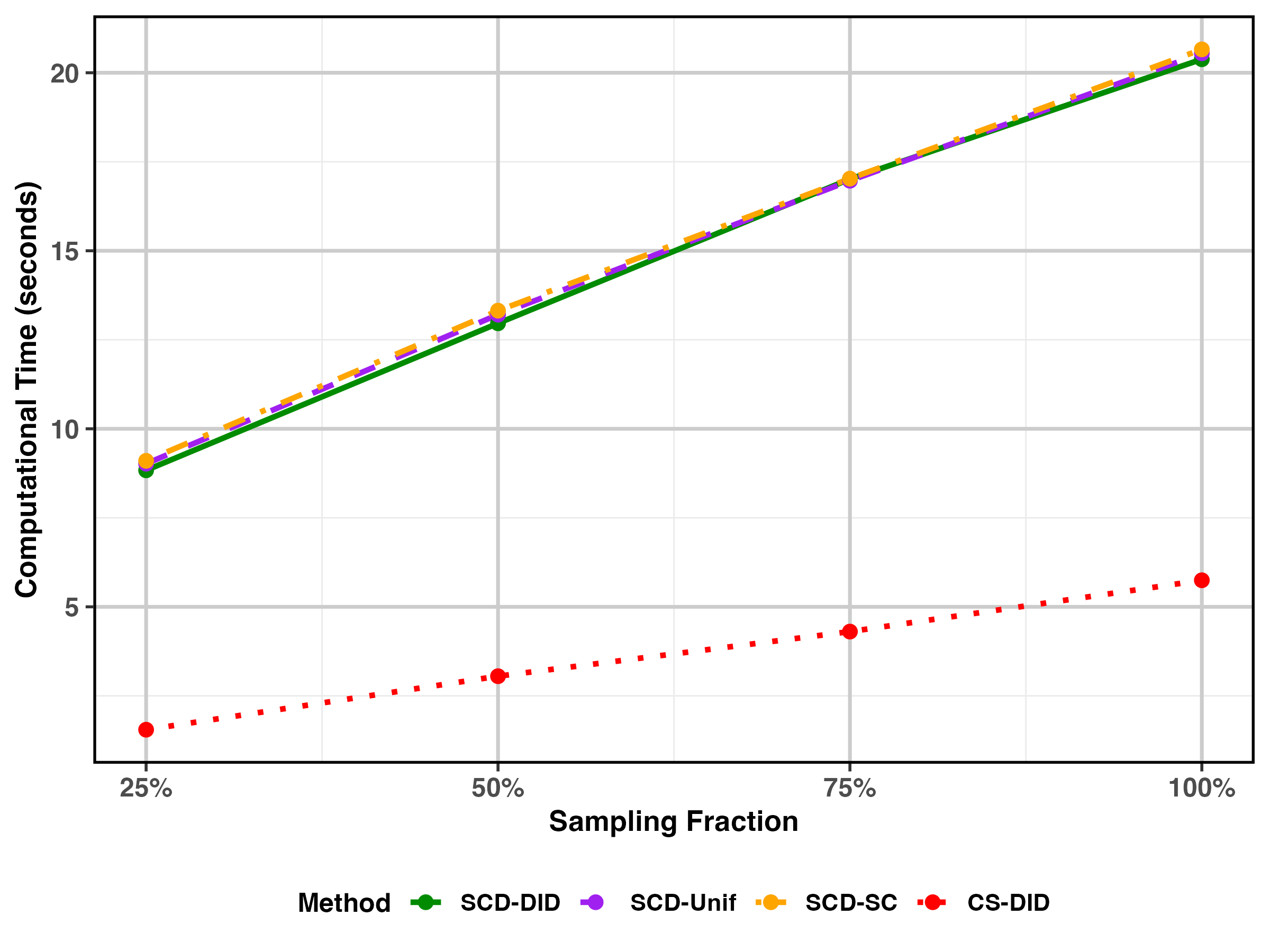}
        \caption{Discrete outcome, Panel}
    \end{subfigure}
    \hfill
    \begin{subfigure}{0.48\textwidth}
        \centering
        \includegraphics[width=\textwidth]{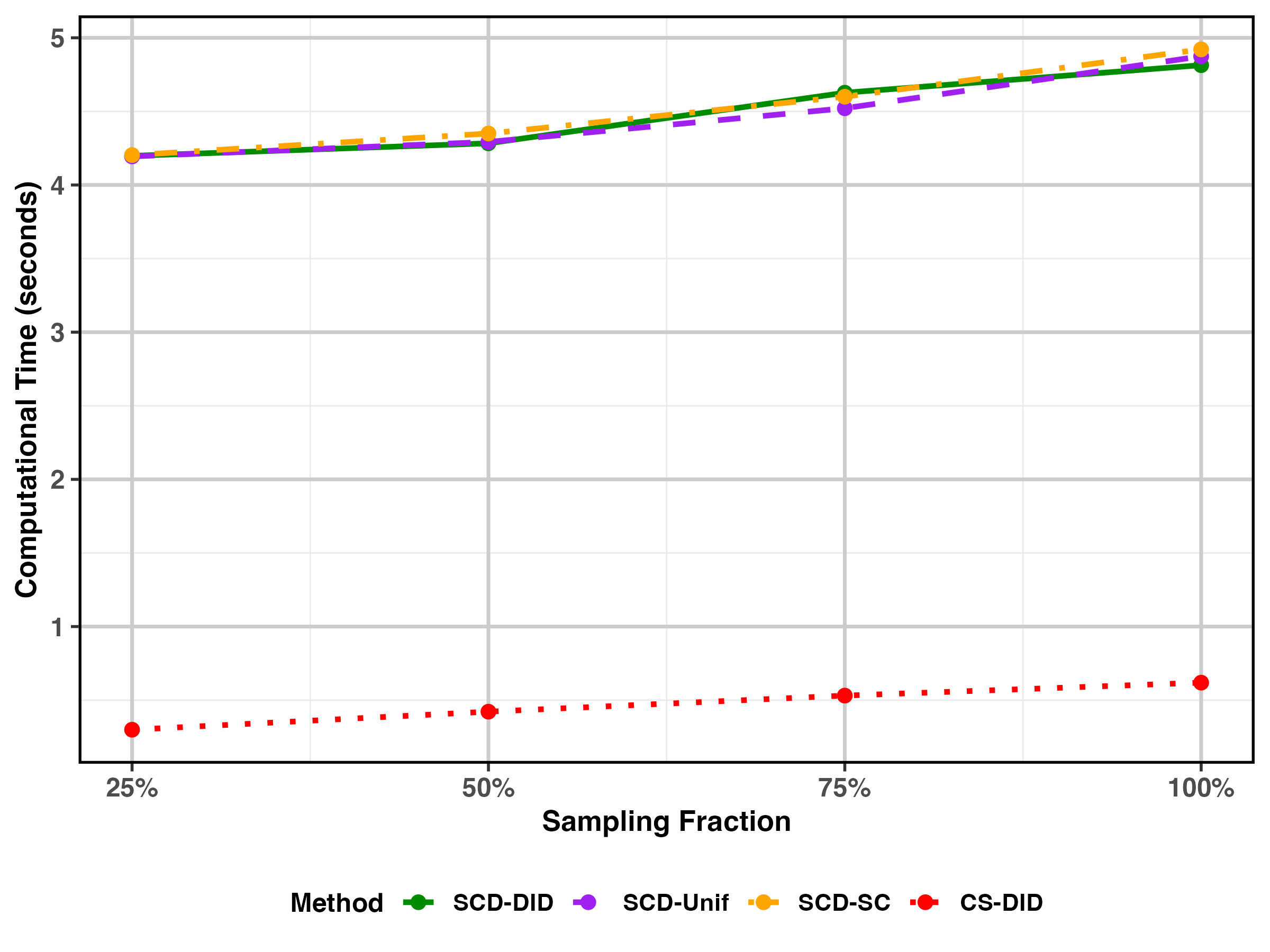}
        \caption{Continuous outcome, Panel}
    \end{subfigure}

    \vspace{0.4cm}
    
    \begin{minipage}{\textwidth}
    \footnotesize
    \textit{Notes:} The figure reports the computation time required to implement the SCD estimator across different combinations of sample size, outcome variables, and data structures. As a benchmark, we also report the computation time of the \texttt{did} R package version 2.3.0 of \cite{Callaway/SantAnna:JoE:21} (CS-DID). We consider two outcomes: (i) a discrete outcome equal to an indicator for being a non-U.S.-citizen Hispanic, and (ii) a continuous outcome given by the log of weekly earnings. Results are reported for both repeated cross-section data from the CPS between January 2003 and December 2009 and for a fictitious panel constructed from the same CPS data. For each sample fraction, we randomly draw that share five times and implement SCD on each draw. The reported computation time is the average across these five repetitions. SCD-DID refers to SCD with $\lambda=\lambda^{\mathsf{DID}}$, SCD-Unif uses $\lambda=\lambda^{\mathsf{unif}}$, and SCD-SC corresponds to SCD without differencing.     
    \end{minipage}
\end{figure}

Our SCD method relies on a Bonferroni approach to construct a confidence set for the weight $w^*$, so a natural concern is whether the computational cost of this procedure is prohibitive in practice. In this section, we demonstrate that Algorithm \ref{alg:confidence_intervals 1}, which uses simulated draws from the simplex, is computationally feasible for data dimensions commonly found in practice.

We first report the average computation times of constructing confidence intervals using Algorithm \ref{alg:confidence_intervals 1} on the monthly CPS data used in our empirical application in Section \ref{sec: emp app}. We restrict the time periods to January 2003-December 2009 and consider Arizona as treated after July 2007, so that $T = 84$ and $T^*=55$. In this application the number of donors is $K = 46$ states. We also consider two outcome variables: the indicator of the individual being non-U.S. citizen Hispanic (binary outcome) and the individual's log of weekly earnings (continuous outcome). We restrict our attention to individuals with strictly positive weekly earnings when analyzing the continuous outcome variable. On average, the total number of cross-sectional units per month is 128,932 for the case with a binary outcome, and 13,977 when considering a continuous outcome. To assess scalability, four sample fractions are considered: 25\%, 50\%, 75\%, and 100\%. For a given fraction, we randomly draw that share of the full data five times and implement our SCD method on each draw. The reported computation time is the average time across these five repetitions.\footnote{All computations are performed on an Apple M4 Max with 64GB of RAM.}

\begin{figure}[tbp]
    \centering
    \caption{Length of Post-Treatment Confidence Intervals.}
    \label{fig:CI lengths RC}

    \begin{subfigure}{0.9\textwidth}
        \centering
        \includegraphics[width=\textwidth]{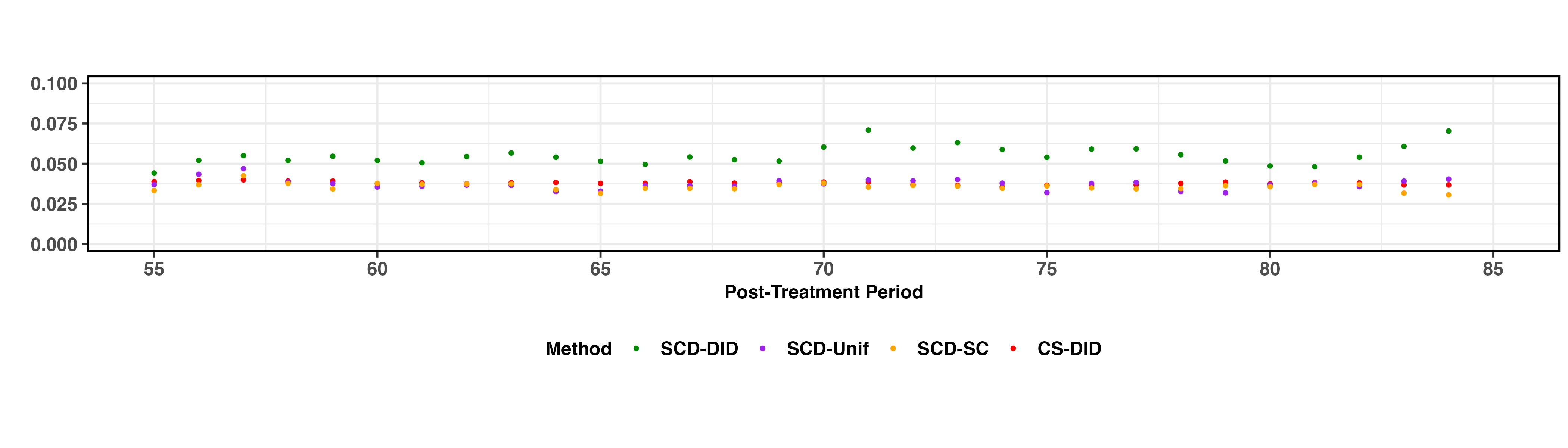}
        \vspace{-1cm}
        \caption{Discrete outcome, Repeated Cross-Section}
    \end{subfigure}
    
    \begin{subfigure}{0.9\textwidth}
        \centering
        \includegraphics[width=\textwidth]{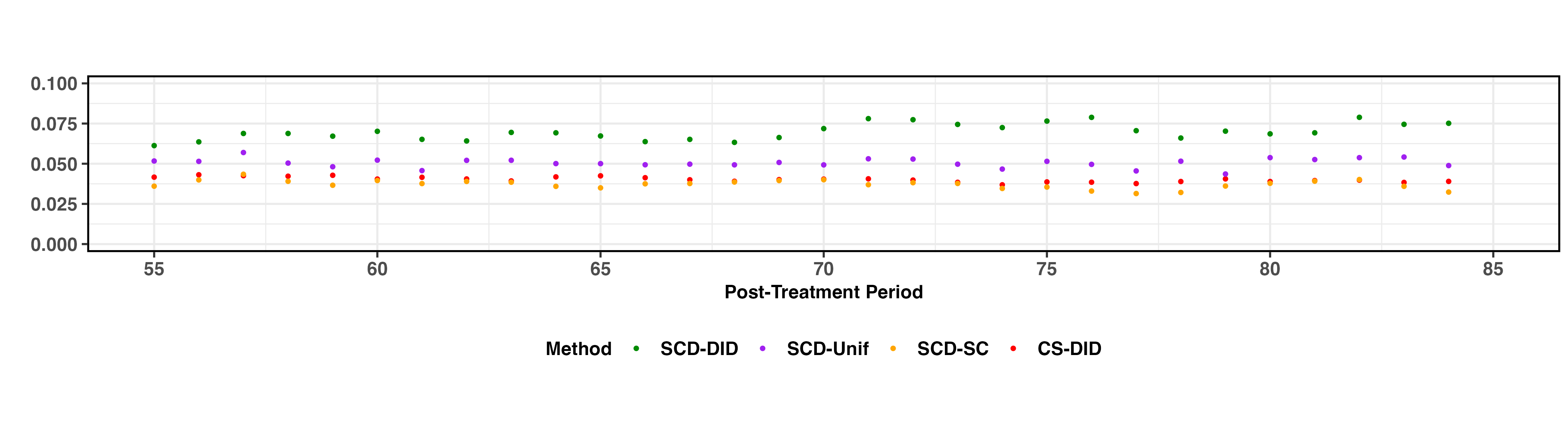}
        \vspace{-1cm}
        \caption{Discrete outcome, Panel}
    \end{subfigure}

    \begin{subfigure}{0.9\textwidth}
        \centering
        \includegraphics[width=\textwidth]{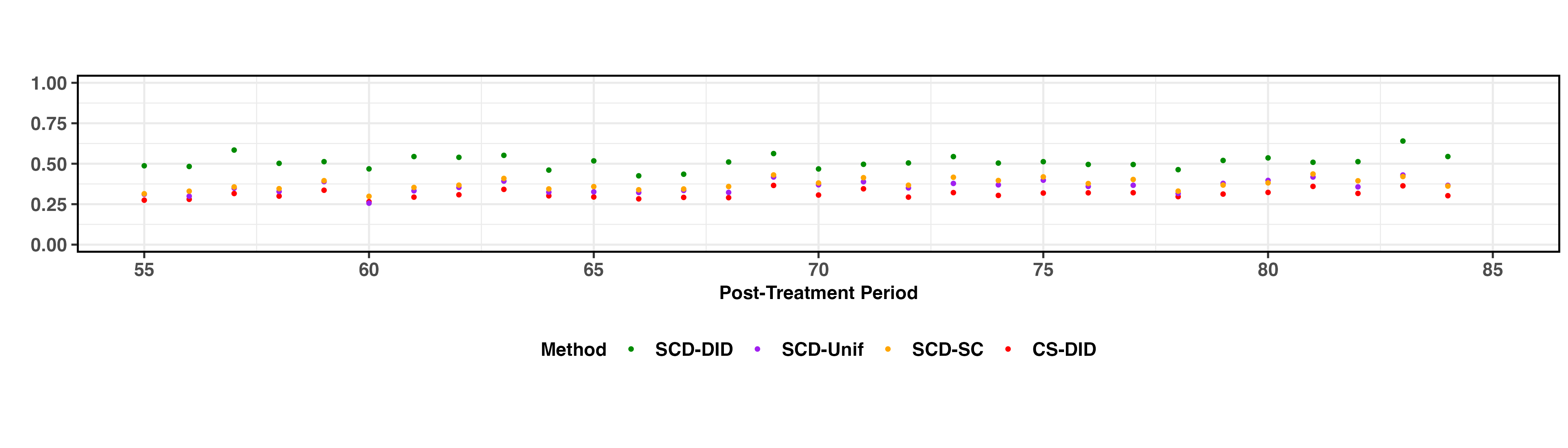}
        \vspace{-1cm}
        \caption{Continuous outcome, Repeated Cross-Section}
    \end{subfigure}
    
    \begin{subfigure}{0.9\textwidth}
        \centering
        \includegraphics[width=\textwidth]{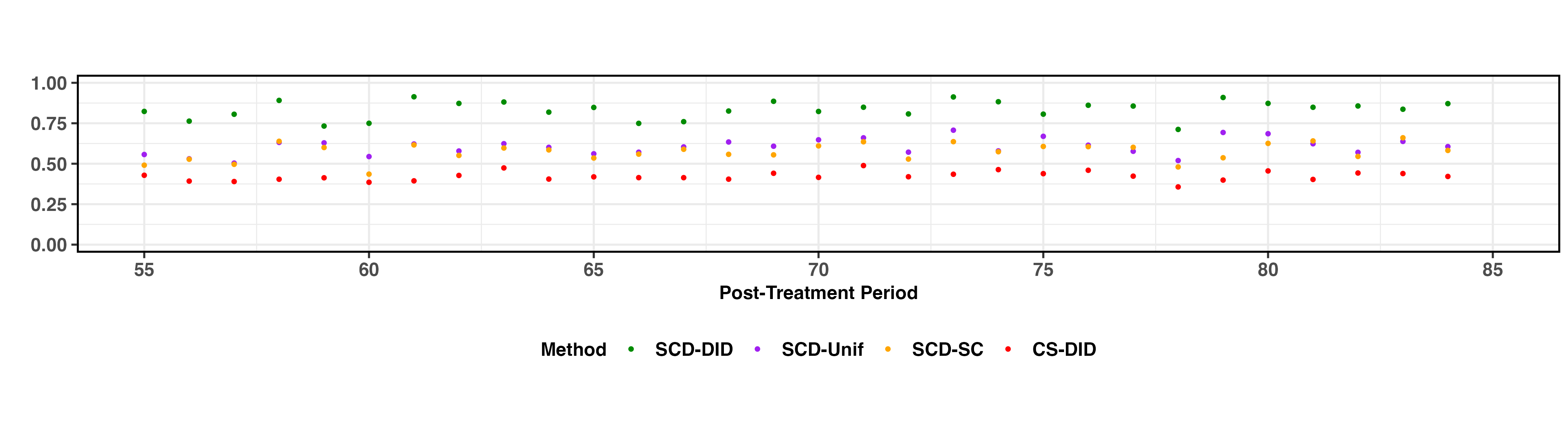}
        \vspace{-1cm}
        \caption{Continuous outcome, Panel}
    \end{subfigure}
    
    \vspace{0.4cm}
    
    \begin{minipage}{\textwidth}
        \footnotesize
        \textit{Notes:}  The figures report the length of the confidence intervals only for the confidence intervals in the post-treatment periods. Repeated cross-sectional data comes from the monthly CPS between January 2003 and December 2009. The panel data is created by independently drawing the same number of individuals per state-month from the CPS. Treatment occurs in July 2007 ($T^*=55$) in Arizona, after LAWA was passed. The number of donors is $K=46$ states. Panels (a) and (c) use as main outcome the indicator of the individual being non-U.S. citizen Hispanic, and panels (b) and (d) use the individual's log of weekly earnings. SCD inference is based on Algorithm \ref{alg:confidence_intervals 1}. CS-DID refers to the \texttt{did} R package of \cite{Callaway/SantAnna:JoE:21}. SCD-DID refers to the method of SCD using $\lambda = \lambda^{\mathsf{DID}}$, and SCD-Unif using $\lambda = \lambda^{\mathsf{unif}}$.
    \end{minipage}
\end{figure}

The first set of results is shown in panels (a) and (b) of Figure \ref{fig:SCD RC}. In panel (a), we document that SCD takes on average 20 seconds to construct the confidence intervals of treatment effects with a discrete outcome and the full sample. For comparison, we also report computation times for the \texttt{did} R package version 2.3.0 of \cite{Callaway/SantAnna:JoE:21} (CS-DID).\footnote{The package is available on the website: \url{https://cran.r-project.org/web/packages/did/index.html}} In this case, their package performs similar to SCD for small samples, but their computation time increases faster than our SCD method as the sample becomes larger. This difference likely reflects the greater generality of the \cite{Callaway/SantAnna:JoE:21} package, which accommodates multiple covariates and various estimation options. A similar pattern is observed in panel (b), where we consider a continuous outcome variable. CS-DID starts outperforming our procedure by a couple of seconds in small samples but its computation time grows faster, taking almost one second more than SCD once we consider the full sample. 

We next compare SCD and CS-DID under a panel data structure. To this end, we construct a balanced panel from the CPS used before by randomly drawing, in each month, the same number of individuals within a given state. For each state, the cross-sectional dimension of the panel is defined as the minimum number of observations available in that state across all months in the CPS. Sampled individuals are then assigned a panel identifier by interacting the state code with a within-state row number, producing individual ids that are consistent across time periods. The resulting panel contains $n=112,744$ observations every period for the discrete case, and $n=8,969$ for the continuous case. The results, shown in panels (c) and (d) of Figure \ref{fig:SCD RC}, indicate that although SCD remains computationally feasible, CS-DID outperforms our approach in all cases. This reversal relative to the repeated cross-section results arises because our inference procedure exploits the time-independence across samples that is present in repeated cross-section data, an advantage that disappears in panel settings. We also find that computation times are not affected by the choice of differencing parameter and that the computational cost of SCD does not increase exponentially with the number of cross-sectional units. 

Finally, we compare SCD and CS-DID in terms of the average length of their confidence intervals for the post-treatment periods across the different data structures and outcome variables considered above. The results are reported in Figure \ref{fig:CI lengths RC}. Consistent with the findings in Table \ref{table:simulation_results_newOG}, SCD produces, on average, longer confidence intervals than CS-DID. Interestingly, the gap between the two methods becomes less pronounced when the uniform differencing parameter is used or when no differencing is applied. These patterns hold in both repeated cross-section and panel settings.

\section{Empirical Application}
\label{sec: emp app}

\begin{table}[tbp]

\centering
\caption{Summary Statistics.}
\label{tab:summary_stats}
\small
\begin{tabular}{l*{6}{c}}
\toprule\toprule
                &\multicolumn{3}{c}{Arizona}&\multicolumn{3}{c}{Donor pool}\\
                \cmidrule(lr){2-4} \cmidrule(lr){5-7}
                &\multicolumn{1}{c}{2006}&\multicolumn{1}{c}{2009}&\multicolumn{1}{c}{Diff.}
                &\multicolumn{1}{c}{2006}&\multicolumn{1}{c}{2009}&\multicolumn{1}{c}{Diff.}\\                
\midrule
Age             &   35.168&   35.674&   0.506&   36.369&   36.814&   0.445\\
\addlinespace
Female          &    0.503&    0.503&   0.000&    0.511&    0.510&  -0.001\\
\addlinespace \addlinespace
\multicolumn{7}{l}{\textit{Educational attainment}} \\\addlinespace
\hspace*{0.1cm} Less than high school \hspace*{0.5cm}&    0.413&    0.375&  -0.038&    0.362&    0.349&  -0.013\\
\addlinespace
\hspace*{0.1cm} High school graduate&    0.227&    0.211&  -0.016&    0.240&    0.237&  -0.003\\
\addlinespace
\hspace*{0.1cm} Some college    &    0.201&    0.223&   0.022&    0.205&    0.209&   0.004\\
\addlinespace
\hspace*{0.1cm} College or more &    0.159&    0.190&   0.031&    0.193&    0.205&   0.012\\
\addlinespace \addlinespace
Employment        &    0.462&    0.462&   0.000&    0.485&    0.470&  -0.015\\
\addlinespace
Non-citizen Hispanic &    0.095&    0.063&  -0.032&    0.043&    0.042&  -0.001\\
\midrule
Observations    &     1,944&     1,627&   &   127,040&   124,880&  \\
\bottomrule\bottomrule
\end{tabular}
\begin{minipage}{\textwidth}
\vspace{0.4cm}
\footnotesize \textit{Notes:}  Cells for age display the mean and cells for other variables show proportions. Arizona's donor pool consists of 46 states without a similar regulation during the period analyzed. Columns 2 and 5 report January 2006 CPS statistics; Columns 3 and 6 report January 2009 CPS statistics; Columns 4 and 7 report the change between 2009 and 2006. Survey weights are used.
\end{minipage}

\end{table}

To illustrate our method, we revisit the empirical setting analyzed by \cite{Bohn/Lofstrom/Raphael:TRES:14} and study the effects of the 2007 Legal Arizona Workers Act (LAWA) on Arizona's internal composition. LAWA was passed in July 2007 and prohibited businesses from knowingly hiring unauthorized workers after December 31, 2007. In addition, this new law required all Arizona employers to verify the identity and work eligibility of new hires using an online system (called E-Verify) that cross-checks employee information against federal earnings and immigration databases. Employers who did not comply with the new rules faced sanctions like suspensions or permanent revocation of their business licenses. As one of the strictest state-level immigration laws at the time, it raised the costs of unauthorized employment for both employers and undocumented immigrants. 

In this context, the group membership variable ($G_{i}$) is defined as the state in which individual $i$ lives, the treated group is Arizona and the post-treatment period begins once LAWA is passed in July 2007. We use CPS microdata from January 1998 to December 2009 and follow the authors in considering 46 states ($K$) in Arizona's donor pool that did not implement any similar regulation during the period of analysis.\footnote{The excluded states are Mississippi, Rhode Island, South Carolina, and Utah. CPS data is provided in the replication package of \cite{Bohn/Lofstrom/Raphael:TRES:14}.} Nevertheless, unlike \cite{Bohn/Lofstrom/Raphael:TRES:14}, we do not aggregate the monthly CPS data to the annual level, which allows us to point identify the weights for Arizona's donor pool using SCD. Our dataset contains 114 months and 30 months in the pre and post-treatment periods, respectively, with a total of 144 time periods, i.e., $T^* = 115$ and $T = 144$. We focus on the population that is most likely to be affected by the policy change, so our primary outcome of interest, $Y_{i,t}$, is defined as an indicator variable equal to one if individual $i$ is Hispanic but not a U.S. citizen at time $t$ and zero otherwise. We apply SCD with the DID differencing parameter ($\lambda^{\mathsf{DID}}$) to estimate the average treatment effect on the treated, which, in this case, captures the causal impact of LAWA on the share of non-citizen Hispanic individuals in Arizona.

\begin{table}[tbp]

\centering
\caption{Arizona's Donors with Positive SCD Weights.}
\label{tab:donors_weights}
\small
\begin{tabular}{p{0.4\textwidth} 
                >{\centering\arraybackslash}p{0.1\textwidth}}
  \toprule\toprule
State & Weights \\ 
  \midrule
Connecticut & 0.204 \\ 
  Florida & 0.029 \\ 
  Georgia & 0.066 \\ 
  Idaho & 0.047 \\ 
  Illinois & 0.037 \\ 
  Kansas & 0.086 \\ 
  New Jersey & 0.378 \\ 
  New York & 0.017 \\ 
  Washington & 0.138 \\
  \bottomrule\bottomrule
\end{tabular}
\begin{minipage}{\textwidth}
\vspace{0.4cm}
\footnotesize \textit{Notes:}  Weights are obtained by applying SCD to Arizona and its donor pool, using as main outcome the proportion of non-citizen Hispanic and setting $\lambda = \lambda^{\mathsf{DID}}$. Arizona's donor pool consists of 46 states without any similar regulation during the period analyzed. Data come from the monthly CPS between January 1998 and December 2009. The sum of weights may differ from one due to rounding. Survey weights are used.
\end{minipage}

\end{table}

Table \ref{tab:summary_stats} presents descriptive statistics for Arizona and its donor pool one and a half years before and after LAWA's enactment. We observe small changes over time in both Arizona and its donor pool in terms of age, gender composition, and the employment-to-population ratio. In contrast, changes in Arizona's educational attainment distribution are more pronounced than in the donor pool between 2006 and 2009. In particular, the share of low-educated individuals (those with a high school diploma or less) declined by 5.4 percentage points in Arizona, compared to a 1.6 percentage-point reduction among donor states. Likewise, the variable of interest, the proportion of non-citizen Hispanic, fell by 3.2 percentage points (a 34\% drop) in Arizona, whereas the donor pool experienced only a marginal 0.1 percentage-point (a 2\% fall) decrease over the same period. These patterns are in line with the hypothesis that LAWA reshaped Arizona's demographic composition by tightening immigrants' access to employment opportunities. In the next subsection we provide an estimate of LAWA's causal effect on the internal composition of Arizona using SCD.

\subsection{Results}\label{sec:Emp_App_results}

\begin{figure}[tbp]
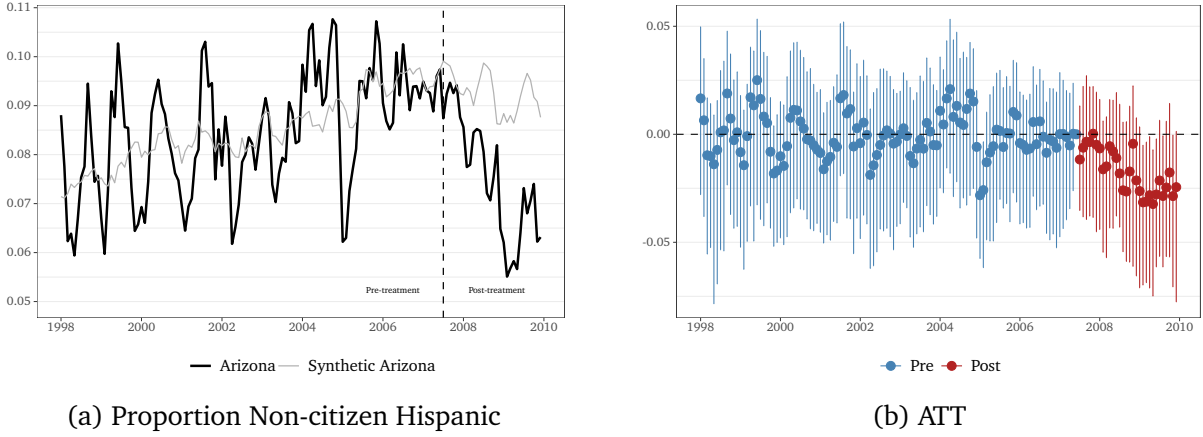

    \caption{Estimated Effects on Arizona's Share of Non-citizen Hispanic.}
    \label{fig:main_results}
    \vspace{0.4cm}
    \centering
    \begin{subfigure}{0.48\textwidth}
        \centering
        \resizebox{\textwidth}{!}{\input{synthetic_hispnoncitizen}}
        \caption{Proportion Non-citizen Hispanic}
        \label{fig:synthetic_hispnoncitizen}
    \end{subfigure}\hfill
    \begin{subfigure}{0.48\textwidth}
        \centering
        \resizebox{\textwidth}{!}{\input{att_hispnoncitizen}}
        \caption{ATT}
        \label{fig:att_hispnoncitizen}
    \end{subfigure}

    \begin{minipage}{\textwidth}
    \vspace{0.4cm}
    \footnotesize \textit{Notes:}  Panel (a) in this figure shows the evolution of Arizona's share of non-citizen Hispanic individuals compared to its synthetic version and panel (b) displays the corresponding ATT after LAWA's enactment in July 2007. We apply SCD with $\lambda^{\mathsf{DID}}$, where Arizona's donor pool consists of 46 states without any similar regulation during the period analyzed. The states in the donor pool with positive SCD weights are Connecticut (0.204), Florida (0.029), Georgia (0.066), Idaho (0.047), Illinois (0.037), Kansas (0.086), New Jersey (0.378), New York (0.017), and Washington (0.138). Data come from the monthly CPS between January 1998 and December 2009. The blue and red lines correspond to 95\% CIs constructed using Algorithm \ref{alg:confidence_intervals 1} for repeated cross-sectional data. Survey weights are used.
    \end{minipage}
\end{figure}

Table \ref{tab:donors_weights} reports the subset of states in Arizona's donor pool with positive weights from the SCD estimation using as main outcome the proportion of non-citizen Hispanic. The largest weight is assigned to New Jersey, followed by Connecticut and Washington. Smaller positive weights are assigned to Kansas, Georgia, Idaho, Illinois, Florida, and New York. Interestingly, the fact that all of Arizona's neighboring states receive a zero weight by SCD in the construction of synthetic Arizona suggests the presence of potential spillover effects following LAWA's enactment. In addition, none of the three states with positive SC weights found by \cite{Bohn/Lofstrom/Raphael:TRES:14} (California, Maryland, and North Carolina) are shown in Table \ref{tab:donors_weights}. Two main factors contribute to this discrepancy. First, our identification strategies are different. We invoke GMC with $\lambda^{\mathsf{DID}}$, so we need trends in averaged untreated potential outcomes to match between Arizona and its donor pool, which is less restrictive than the traditional SC approach that matches averaged untreated potential outcomes between Arizona and its donors directly. Secondly, the authors combine the CPS data at the annual level before applying SC, whereas we exploit the frequency of the CPS to obtain point-identification of SCD weights.\footnote{In their main SC analysis, the authors also incorporate covariates such as state unemployment rates and industrial composition of the workforce, yet their results remain virtually unchanged when these covariates are excluded.}

\begin{table}[tbp]

\centering
\caption{SCD Weights for Arizona's Donors Across Robustness Exercises.}
\label{tab:weights_robustness}
\begin{tabular}{lcccc}
  \toprule\toprule
State & \makecell[c]{Non-citizen Hispanic \\ Low-Educated} & \makecell[c]{Shorter \\ Pre-Treatment} & \makecell[c]{Uniform \\ Differencing} & \makecell[c]{No \\ Differencing} \\  
  \midrule
Alabama & 0.195 & 0 & 0 & 0 \\ 
Arkansas & 0 & 0.010 & 0 & 0 \\ 
California & 0 & 0 & 0 & 0.621 \\ 
Colorado & 0.070 & 0.247 & 0 & 0 \\ 
Connecticut & 0 & 0 & 0.134 & 0 \\ 
District of Columbia & 0 & 0 & 0.060 & 0 \\ 
Florida & 0 & 0 & 0 & 0.213 \\ 
Georgia & 0.168 & 0 & 0.155 & 0 \\ 
Idaho & 0 & 0 & 0.103 & 0 \\ 
Illinois & 0.013 & 0 & 0.010 & 0 \\ 
Kansas & 0 & 0.096 & 0 & 0 \\ 
Louisiana & 0 & 0.201 & 0.037 & 0 \\ 
Missouri & 0 & 0 & 0.023 & 0 \\ 
New Jersey & 0.245 & 0.382 & 0.265 & 0.166 \\ 
New York & 0 & 0 & 0.050 & 0 \\ 
North Carolina & 0.120 & 0.055 & 0 & 0 \\ 
South Dakota & 0 & 0 & 0.025 & 0 \\ 
Washington & 0.188 & 0 & 0.139 & 0 \\ 
Wisconsin & 0 & 0.009 & 0 & 0 \\ 
  \bottomrule\bottomrule
\end{tabular}
\begin{minipage}{\textwidth}
\vspace{0.4cm}
\footnotesize \textit{Notes:}  Cells contain the SCD weights for each robustness exercise described at the top of each column. The second column uses as main outcome the number of non-citizen Hispanics with a high-school diploma or less as a proportion of the prime-working age (15-45) state population. The third column applies SCD on the proportion of non-citizen Hispanics with a shorter pre-treatment window that starts in January 2003. The fourth column shows the states' weights after applying SCD with a uniform differencing parameter ($\lambda^{\mathsf{unif}}$), and the last column does not apply any differencing to the data. Arizona's donor pool consists of 46 states without any similar regulation during the period analyzed. Data come from the monthly CPS between January 1998 and December 2009. The sum of weights may differ from one due to rounding. Survey weights are used.
\end{minipage}

\end{table}

Figure \ref{fig:main_results} shows our main results for the share of non-citizen Hispanic after applying SCD. Panel (a) mirrors the standard plot used in the SC literature, displaying two time-series lines: one for Arizona (black) and another for its synthetic control (grey).\footnote{In SCD, the synthetic counterfactual outcomes for the treated unit are computed as follows: 
\begin{align*}
\widehat{\mathbf{E}[Y_{i,t}(0) \mid G_i = 0]}=\hat m_{0,t}-\hat \theta_{t}(\hat w), \text{ for all }t \in \mathcal{T}.    
\end{align*}} Overall, both lines follow a similar trend during the pre-treatment period, with Arizona's series exhibiting higher volatility than its synthetic counterpart. On the other hand, following the passage of LAWA, we observe a big drop in Arizona's proportion of non-citizen Hispanic relative to its synthetic control, going from 9.1\% to 6.4\% between June 2006 and December 2009. 

\begin{figure}[tbp]
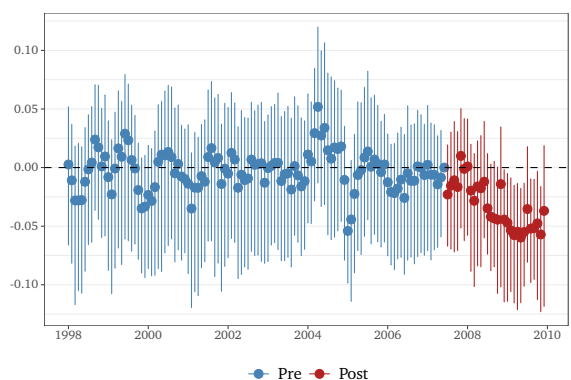

    \centering
    \caption{Robustness Checks for ATT.}
    \label{fig:robustness}
    \vspace{0.4cm}

    \begin{subfigure}{0.48\textwidth}
        \centering
        \resizebox{\textwidth}{!}{\input{att_hispnoncitizene1e2}}
        \caption{Low-Educated Non-citizen Hispanic}
        \label{fig:att_1}
    \end{subfigure}
    \hfill
    \begin{subfigure}{0.48\textwidth}
        \centering
        \resizebox{\textwidth}{!}{\input{att_hispnoncitizen_shorterpre}}
        \caption{Shorter Pre-Treatment Window}
        \label{fig:att_2}
    \end{subfigure}

    \vspace{0.6cm}

    \begin{subfigure}{0.48\textwidth}
        \centering
        \resizebox{\textwidth}{!}{\input{att_hispnoncitizen_lambdaunif}}
        \caption{Uniform Differencing}
        \label{fig:att_3}
    \end{subfigure}
    \hfill
    \begin{subfigure}{0.48\textwidth}
        \centering
        \resizebox{\textwidth}{!}{\input{att_hispnoncitizen_sc}}
        \caption{No Differencing}
        \label{fig:att_4}
    \end{subfigure}

    \vspace{0.4cm}

    \begin{minipage}{\textwidth}
        \vspace{0.4cm}
        \footnotesize
        \textit{Notes:}  Panel (a) in this figure shows the ATT on Arizona's share of non-citizen Hispanic with a high-school diploma or less with respect to the prime-working age (15-45) state population. Panel (b) presents the ATT on Arizona's proportion of non-citizen Hispanic after applying SCD with a shorter pre-treatment window that starts in January 2003. Panel (c) displays the ATT after applying SCD with a uniform differencing parameter ($\lambda^{\mathsf{unif}}$), and panel (d) uses no differencing at all. The blue and red lines correspond to 95\% CIs constructed using Algorithm \ref{alg:confidence_intervals 1} for repeated cross-sectional data. Arizona's donor pool consists of 46 donor states. Donors with positive SCD weights for each of the robustness exercises are shown in Table \ref{tab:weights_robustness}. Data come from the monthly CPS. Survey weights are used.
    \end{minipage}
\end{figure}

Panel (b) in Figure \ref{fig:main_results} shows LAWA's causal effects (estimated by $\hat \theta_{t}(\hat w)$) on Arizona's internal composition of non-citizen Hispanic along with a 95\% confidence band obtained via Bonferroni's method described in Algorithm \ref{alg:confidence_intervals 1} for repeated cross-sectional data. Similar to the usual pre-trend test used in empirical DID studies \citep{Callaway/SantAnna:JoE:21, Sun/Abraham:JoE:21, Roth/SantAnna/Bilinksi/Poe:JoE2023, Borusyak/Jaravel/Spiess:TRES:24}, the non-statistically significant estimates of the treatment effect during the pre-treatment period provide evidence in favour of the stability assumption of matching weights in SCD. Additionally, the SCD estimates for the post-treatment period indicate a negative effect of LAWA on Arizona's share of non-citizen Hispanics. In line with the 1.5 percentage point reduction reported by \cite{Bohn/Lofstrom/Raphael:TRES:14}, we find that the proportion of this demographic group declined by 2 percentage points on average after July 2007 (equivalent to 125,000 fewer individuals with respect to Arizona's CPS population in June 2007).

\subsection{Robustness Checks}  

To complement our main findings, we conduct four robustness exercises. First, since illegal immigrants tend to be low-educated, we refine our main outcome variable and use Arizona's share of non-citizen Hispanic with high school or less among the population aged 15-45. Second, we shorten the pre-treatment window by starting the sample in January 2003 instead of January 1998. This relaxes the trend-matching requirement in SCD and provides a robustness check against potential overfitting of early-period dynamics. Third, we test how our results change when SCD matches on trends relative to the pre-treatment average rather than the last pre-treatment period by applying our SCD method with a uniform differencing parameter ($\lambda^{\mathsf{unif}}$). Lastly, we report results using SCD without applying any differencing to the outcome, which is equivalent to the SC approach.

Table \ref{tab:weights_robustness} reports the subset of Arizona's donors with positive SCD weights for each robustness exercise. In general, donors contributing to synthetic Arizona differ across specifications. For instance, in column three where we adopt a shorter pre-treatment window, only Kansas and New Jersey also appear in Table \ref{tab:donors_weights}. When we apply SC to our monthly CPS data, we obtain three donors with positive weights (California, Florida, and New Jersey), two of which coincide with those reported in Table \ref{tab:donors_weights}. Interestingly, New Jersey is the only state with positive SCD weights across all exercises, highlighting its relevance as a comparison group for Arizona. 

We present the ATT results for each robustness exercise in Figure \ref{fig:robustness}. Panel (a) reveals that Arizona's share of low-educated non-citizen Hispanics is reduced by 4.6 percentage points after one and a half years of LAWA's enactment, suggesting that the policy's impact was concentrated among less-educated immigrants. Next, in panel (b), we find that reducing the number of pre-treatment periods does not affect our baseline results and estimate an average post-treatment decline of 1.9 percentage points in Arizona's share of non-citizen Hispanic. Finally, panels (c) and (d) show that our main ATT estimates in subsection \ref{sec:Emp_App_results} remain robust to alternative differencing choices in SCD. Specifically, we estimate an average post-LAWA decline of 1.8 percentage points when using uniform differencing and 1.3 percentage points when no differencing is applied.

Overall, these results show that Arizona experienced a significant change in its internal composition after LAWA, suggesting that the policy discouraged undocumented workers from residing in the state.

\section{Conclusion}\label{sec: conclusion}

This paper considers the general framework of causal inference with groupwise matching. We show that many widely used designs, including RCTs, difference-in-differences (DID), synthetic control (SC), synthetic control with differencing (SCD), and synthetic difference-in-differences (SDID), can be understood through the lenses of a Generalized Matching Condition (GMC) and the choice of within-group differencing parameters and matching weights. This perspective provides a clear distinction of how these methods extrapolate untreated outcomes. In particular, we show that DID relies on group-size based weights justified by the parallel trend assumption (PTA), whereas SCD relies on weights chosen by pre-treatment fit and justified by the stability of matching weights across time.

We examine the complementarity between DID and SCD through a regret analysis, showing that DID regret-dominates SCD when SCD's extrapolation error exceeds DID's matching error up to a term that vanishes at the parametric rate. In the opposite case, SCD regret-dominates DID, so neither method uniformly dominates the other. We further establish that, under conditions requiring both pre- and post-treatment parallel trends, DID and SCD are equivalent in terms of GMC. In this case, the weights estimated by SCD using pre-treatment data coincide with the group-size weights implicit in DID. When PTA fails, however, SCD provides an alternative to DID where identification no longer relies on parallel trends, but instead on a Stable Matching Condition (SMC). We develop a uniformly valid inference procedure for SCD based on a Bonferroni approach. Finally, our theoretical insights are illustrated via Monte Carlo experiments and revisiting the empirical application of \cite{Bohn/Lofstrom/Raphael:TRES:14}, who study the impact of the 2007 Legal Arizona Workers Act (LAWA) on Arizona's internal composition. 

While the GMC is formulated in a setting where the target parameter is the average causal effect of a treatment, it is conceivable that the condition extends to the setting where the target parameter is the distribution of the causal effect. This opens up the question for future research of how the causal inference designs such as changes in changes of \cite{Athey/Imbens:Eca:06} and the distributional synthetic control of \cite{Gunsilius:Eca:23} compare.

\bibliographystyle{ecca}
\bibliography{microSCD}

\newpage
\appendix
\setcounter{section}{0}

\renewcommand{\thesection}{\Alph{section}}
\renewcommand{\thetable}{\thesection.\arabic{table}}
\renewcommand{\thefigure}{\thesection.\arabic{figure}}

\counterwithin{table}{section}
\counterwithin{figure}{section}

\fontsize{12pt}{12pt}\selectfont

\begin{center}
    \Large \textsc{Supplemental Note to ``Causal Inference with Groupwise Matching''}
\end{center}

\date{%
    \today%
}

\vspace*{2ex minus 1ex}
\begin{center}
    Ratzanyel Rinc\'on and Kyungchul Song\\
    \textit{Vancouver School of Economics, University of British Columbia}
    \bigskip
    \bigskip
    \bigskip
\end{center}

\section{Proofs of the Results in Sections \ref{sec:causal inf groupwise matching} and \ref{sec: causal inf generalized matching}}

\begin{lemma}
    \label{lemm:theta*}
  Suppose that Assumption \ref{assump:basic} holds. Then, for $t \in \mathcal{T}$,
  \begin{align}
    \theta_t^* = \theta_t(\lambda,w) - e_t(\lambda,w).
\end{align}
\end{lemma}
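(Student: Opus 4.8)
The plan is to expand the right-hand side $\theta_t(\lambda,w) - e_t(\lambda,w)$ directly from the definitions and show that almost everything cancels, leaving exactly $\theta_t^*$. Writing out both quantities,
\begin{align*}
\theta_t(\lambda,w) - e_t(\lambda,w) = \left[\mu_{0,t}(\lambda) - \mu_{0,t}(0;\lambda)\right] - \sum_{j=1}^K \left[\mu_{j,t}(\lambda) - \mu_{j,t}(0;\lambda)\right] w_j,
\end{align*}
so the task reduces to evaluating the gap between the observed and counterfactual $\lambda$-differenced means, group by group.

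First I would dispatch the donor terms. For each $j \in \{1,\dots,K\}$, every individual with $G_i = j$ belongs to the untreated group ($D_i = 0$), so the observed outcome coincides with the untreated potential outcome at every $t$; hence $m_{j,t} = m_{j,t}(0)$, and consequently $\mu_{j,t}(\lambda) = \mu_{j,t}(0;\lambda)$. Each summand in the sum over $j$ therefore vanishes, and the entire donor contribution drops out regardless of the weights $w$.

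Next I would treat the target-group term. Since the within-group $\lambda$-differencing subtracts only a weighted average of pre-treatment means $m_{0,s}$ with $s \in \mathcal{T}_0$, and since Assumption \ref{assump:basic}(ii) (no anticipation) gives $Y_{i,s}(1) = Y_{i,s}(0)$ for the treated individuals whenever $s < T^*$, we have $m_{0,s} = m_{0,s}(0)$ for all $s \in \mathcal{T}_0$. Subtracting $\mu_{0,t}(0;\lambda)$ from $\mu_{0,t}(\lambda)$ thus cancels the two differencing corrections term by term, leaving
\begin{align*}
\mu_{0,t}(\lambda) - \mu_{0,t}(0;\lambda) = m_{0,t} - m_{0,t}(0).
\end{align*}
By the definition of the observed outcomes $Y_{i,t}$ and of $\theta_t^*$, the right-hand side equals $\mathbf{E}[Y_{i,t}(1) - Y_{i,t}(0) \mid D_i = 1] = \theta_t^*$; for $t < T^*$ both sides are zero by no anticipation, so the identity holds uniformly over $t \in \mathcal{T}$. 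Combining the two steps yields the claim.

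The only point requiring care is the bookkeeping of the differencing corrections: one must verify that the subtracted pre-treatment averages in $\mu_{0,t}(\lambda)$ and $\mu_{0,t}(0;\lambda)$ are literally equal term by term on $\mathcal{T}_0$, rather than merely that the identity holds in the post-treatment periods. This term-by-term equality is exactly what no anticipation supplies, and everything else is a direct substitution of definitions.
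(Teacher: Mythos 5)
Your proof is correct and follows essentially the same route as the paper's: both arguments reduce $\theta_t(\lambda,w) - e_t(\lambda,w)$ to $m_{0,t} - m_{0,t}(0)$ by noting that the donor-group terms cancel identically (observed equals untreated potential outcome in the donor pool) and that no anticipation makes the pre-treatment differencing corrections for group $0$ cancel term by term, after which the remaining difference is $\mathbf{E}[Y_{i,t}(1) - Y_{i,t}(0) \mid D_i = 1] = \theta_t^*$. Your explicit remark that the identity also holds trivially for $t < T^*$ is a harmless addition; no gaps.
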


\begin{proof}
    By (\ref{Y_{it}}) and Assumption \ref{assump:basic}(ii), for all $t \in \mathcal{T}_0$, $m_{0,t} = m_{0,t}(0)$. Furthermore, for all $j  \in \mathcal{G}_{\mathsf{don}}$, $m_{j,t} = m_{j,t}(0)$ for all $t \in \mathcal{T}$, by (\ref{Y_{it}}) and the definition of $m_{j,t}(0)$. Hence, 
    \begin{align*}
        \theta_t(\lambda,w) - e_t(\lambda,w) &= m_{0,t} - m_{0,t}(0) - \sum_{j=1}^K(m_{j,t} - m_{j,t}(0))w_j\\
                                             &= m_{0,t} - m_{0,t}(0).
    \end{align*}
    Now by (\ref{Y_{it}}), if we let $m_{0,t}(1) = \mathbf{E}[Y_{i,t}(1) \mid D_i = 1]$, we can write the last difference as $m_{0,t}(1) - m_{0,t}(0)$. This delivers the desired result.
\end{proof}

\noindent \textbf{Proof of Proposition \ref{prop: GMC factor}: } First, note that 
\begin{align}
    \label{e_t factor}
   e_t(\lambda,w) = \left( \mathbf{E}[ \Lambda_i' \mid G_i = 0] - \sum_{j=1}^K \mathbf{E}[ \Lambda_i' \mid G_i = j] w_j \right)F_t(\lambda).
\end{align}
Thus, (iii) implies (ii), which implies (i). Now, suppose that the full row rank condition holds. Define $\boldsymbol{e}(\lambda,w) = [e_{T^*}(\lambda,w),...,e_{T}(\lambda,w)]'$. Then, we have 
\begin{align*}
    \mathbf{E}[ \Lambda_i' \mid G_i = 0] - \sum_{j=1}^K \mathbf{E}[ \Lambda_i' \mid G_i = j] w_j = \boldsymbol{e}(\lambda,w)' \mathbf{F}(\lambda)' \left(\mathbf{F}(\lambda) \mathbf{F}(\lambda)' \right)^{-1}.
\end{align*} 
Now, suppose that (i) holds. Then, this implies (iii), completing the proof. $\blacksquare$

\begin{lemma}
\label{lemm: PTA connection}
(i) Suppose that $e_{T^*-1}^{\mathsf{DID}}(\lambda) = 0$ holds for some $\lambda \in \Delta_{|\mathcal{T}_0|-1}$. Then, PTA-I holds if and only if PTA($\lambda$) holds.

(ii) Suppose that  for some $\lambda \in \Delta_{|\mathcal{T}_0|-1}$, $e_{j,T^*-1}^{\mathsf{DID}}(\lambda) = 0$ holds for each $j \in \mathcal{G}_{\mathsf{don}}$. Then, PTA-II holds if and only if PTA-U($\lambda$) holds.
\end{lemma}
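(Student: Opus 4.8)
The plan is to reduce both equivalences to elementary statements about a scalar ``gap'' sequence, exploiting the fact that $\lambda$-differencing subtracts a quantity that does not depend on $t$. First I would introduce the shorthand $\mu_{g,t} = \mathbf{E}[Y_{i,t}(0)\mid G_i = g]$, where the label $g$ ranges over $0$, the individual donors $j \in \mathcal{G}_{\mathsf{don}}$, and the aggregate ``$\mathsf{don}$'' standing for conditioning on $G_i \in \mathcal{G}_{\mathsf{don}}$. By linearity of conditional expectation, $\mathbf{E}[Y_{i,t}(0;\lambda)\mid G_i = g] = \mu_{g,t} - \sum_{s\in\mathcal{T}_0}\lambda_s \mu_{g,s}$. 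For part (i) I would set $d_t = \mu_{0,t} - \mu_{\mathsf{don},t}$, so that $e_t^{\mathsf{DID}}(\lambda) = d_t - \bar d$ with $\bar d := \sum_{s\in\mathcal{T}_0}\lambda_s d_s$; the crucial observation is that $\bar d$ is a single constant, independent of $t$.

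With this notation in place, I would record two reformulations. Since $\mathbf{E}[\Delta Y_{i,t}(0)\mid G_i = g] = \mu_{g,t} - \mu_{g,t-1}$, the condition PTA-I is exactly $d_t - d_{t-1} = 0$ for all $t \in \mathcal{T}_1$, i.e.\ that $d_t$ is constant over $\{T^*-1\}\cup\mathcal{T}_1$; meanwhile PTA($\lambda$) is exactly $d_t = \bar d$ for all $t \in \mathcal{T}_1$. The standing hypothesis $e_{T^*-1}^{\mathsf{DID}}(\lambda) = 0$ reads precisely $d_{T^*-1} = \bar d$, and this is the anchor that links the two reformulations.

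The two implications then follow mechanically. For PTA-I $\Rightarrow$ PTA($\lambda$), constancy of $d$ on $\{T^*-1\}\cup\mathcal{T}_1$ combined with $d_{T^*-1}=\bar d$ gives $d_t = \bar d$ for every $t\in\mathcal{T}_1$. For the converse, PTA($\lambda$) gives $d_t = \bar d = d_{T^*-1}$ for all $t\in\mathcal{T}_1$; examining successive differences, $d_{T^*}-d_{T^*-1} = \bar d - \bar d = 0$ (this is where the anchor at the base period $T^*-1$ is used) and $d_t - d_{t-1} = \bar d - \bar d = 0$ for $t\in\{T^*+1,\dots,T\}$, which is PTA-I. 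Part (ii) is the same argument run separately for each $j\in\mathcal{G}_{\mathsf{don}}$, with $d_t$ replaced by $d_{j,t} = \mu_{0,t}-\mu_{j,t}$ and the per-$j$ anchor $e_{j,T^*-1}^{\mathsf{DID}}(\lambda)=0$.

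The argument is essentially bookkeeping, so I do not expect a genuine obstacle; the only points needing care are that the $\lambda$-differenced term is a fixed constant even though $\lambda$ ranges over the whole simplex $\Delta_{|\mathcal{T}_0|-1}$, and that the base period $T^*-1$ must be folded into the ``constant sequence'' picture precisely so the first post-treatment difference $d_{T^*}-d_{T^*-1}$ vanishes. I would keep the scalar reformulation explicit throughout, since it renders both directions transparent and makes clear exactly which role the anchoring hypothesis plays.
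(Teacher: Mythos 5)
Your proof is correct and follows essentially the same route as the paper's: both directions hinge on the anchor $e_{T^*-1}^{\mathsf{DID}}(\lambda)=0$, with PTA($\lambda$) $\Rightarrow$ PTA-I obtained by differencing consecutive periods and the converse by propagating constancy from $T^*-1$ forward (the paper writes this as a telescoping sum of $\Delta Y$ terms, which your scalar sequence $d_t$ renders implicit). The reformulation via $e_t^{\mathsf{DID}}(\lambda)=d_t-\bar d$ with $\bar d$ a $t$-independent constant is just a cleaner packaging of the identical argument, and part (ii) is handled the same way groupwise in both.
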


\begin{proof}
(i) We assume that $e_{T^*-1}^{\mathsf{DID}}(\lambda) = 0$ for some $\lambda \in \Delta_{|\mathcal{T}_0|-1}$. Now, suppose that PTA($\lambda$) holds. Then, $e_t^{\mathsf{DID}}(\lambda) = 0$ for all $t \in \mathcal{T}_1 \cup \{T^*-1\}$. Hence, for all $t \in \mathcal{T}_1$, 
\begin{align*}
    \mathbf{E}[Y_{i,t}(0;\lambda) \mid G_i = 0] &= \mathbf{E}[Y_{i,t}(0;\lambda) \mid G_i \in \mathcal{G}_{\mathsf{don}}] \text{ and }\\
    \mathbf{E}[Y_{i,t-1}(0;\lambda) \mid G_i = 0] &= \mathbf{E}[Y_{i,t-1}(0;\lambda) \mid G_i \in \mathcal{G}_{\mathsf{don}}].
\end{align*}
Subtracting the second equation from the first one, we obtain 
\begin{align*}
    \mathbf{E}[\Delta Y_{i,t}(0) \mid G_i = 0] = \mathbf{E}[\Delta Y_{i,t}(0) \mid G_i \in \mathcal{G}_{\mathsf{don}}],
\end{align*}
for all $t \in \mathcal{T}_1$. Hence, PTA-I holds.

Conversely, suppose that PTA-I holds. Then, 
\begin{align*}
    \mathbf{E}[Y_{i,t}(0;\lambda) \mid G_i = 0] &= \mathbf{E}\left[ Y_{i,t}(0) - Y_{i,T^*-1}(0) \mid G_i = 0 \right] + \mathbf{E}\left[ Y_{i,T^*-1}(0) - \sum_{s \in \mathcal{T}_0} Y_{i,s}(0) \lambda_s \mid G_i = 0 \right]\\
    &= \mathbf{E}\left[ Y_{i,t}(0) - Y_{i,T^*-1}(0) \mid G_i = 0 \right] + \mathbf{E}\left[ Y_{i,T^*-1}(0) - \sum_{s \in \mathcal{T}_0} Y_{i,s}(0) \lambda_s \mid G_i \in \mathcal{G}_{\mathsf{don}}\right],
\end{align*}
because $e_{T^*-1}^{\mathsf{DID}}(\lambda) = 0$. The last sum of two conditional expectations is written as 
\begin{align*}
    &\sum_{\ell=T^*}^t \mathbf{E}[\Delta Y_{i,\ell}(0) \mid G_i = 0] + \mathbf{E}\left[ Y_{i,T^*-1}(0) - \sum_{s \in \mathcal{T}_0} Y_{i,s}(0) \lambda_s \mid G_i \in \mathcal{G}_{\mathsf{don}}\right]\\
    &= \sum_{\ell=T^*}^t \mathbf{E}[\Delta Y_{i,\ell}(0) \mid G_i \in \mathcal{G}_{\mathsf{don}}] + \mathbf{E}\left[ Y_{i,T^*-1}(0) - \sum_{s \in \mathcal{T}_0} Y_{i,s}(0) \lambda_s \mid G_i \in \mathcal{G}_{\mathsf{don}}\right]\\
    &= \mathbf{E}\left[ Y_{i,t}(0) - \sum_{s \in \mathcal{T}_0} Y_{i,s}(0) \lambda_s \mid G_i \in \mathcal{G}_{\mathsf{don}} \right] = \mathbf{E}[Y_{i,t}(0;\lambda) \mid G_i \in \mathcal{G}_{\mathsf{don}}].
\end{align*}
The first equality follows by PTA-I. Hence, PTA($\lambda$) holds.

(ii) We assume that $e_{j,T^*-1}^{\mathsf{DID}}(\lambda) = 0$ for each $j  \in \mathcal{G}_{\mathsf{don}}$, for some $\lambda \in \Delta_{|\mathcal{T}_0|-1}$. First, assume that PTA-U($\lambda$) holds. We choose $j \in \mathcal{G}_{\mathsf{don}}$, and replace the event $G_i \in \mathcal{G}_{\mathsf{don}}$ by the event $G_i = j$ in the proof of (i), to obtain that 
\begin{align*}
    \mathbf{E}[\Delta Y_{i,t}(0) \mid G_i = 0] = \mathbf{E}[\Delta Y_{i,t}(0) \mid G_i = j],
\end{align*}
for all $t \in \mathcal{T}_1$. Since the choice of $j$ was arbitrary, we obtain PTA-II. Conversely, suppose that PTA-II holds. Again, choose $j \in \mathcal{G}_{\mathsf{don}}$, and replace the event $G_i \in \mathcal{G}_{\mathsf{don}}$ by the event $G_i = j$ in the proof of (i) to obtain 
\begin{align*}
    \mathbf{E}[Y_{i,t}(0;\lambda) \mid G_i = 0] = \mathbf{E}[Y_{i,t}(0;\lambda) \mid G_i = j]
\end{align*}
for each $j  \in \mathcal{G}_{\mathsf{don}}$.
\end{proof}

\begin{lemma}
    \label{lemm: PTA2}
    Suppose that Assumption \ref{assump:basic} holds, and let 
    \begin{align*}
        w^{\mathsf{DID}} = [w_{1}^{\mathsf{DID}},...,w_{K}^{\mathsf{DID}}]', 
    \end{align*}
    where $w_{j}^{\mathsf{DID}}$ is as defined in (\ref{DID}). Then, the following statements hold.

    (i) PTA($\lambda$) holds if and only if GMC holds at $(\lambda,w^{\mathsf{DID}})$.
    
    (ii) PTA-U($\lambda$) holds if and only if GMC holds at $(\lambda,w)$ for all $w \in \Delta_{K-1}$.
\end{lemma}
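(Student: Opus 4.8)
The plan is to reduce both equivalences to a single algebraic identity connecting the between-group matching error $e_t(\lambda,w)$ with the DID matching errors $e_t^{\mathsf{DID}}(\lambda)$ and $e_{j,t}^{\mathsf{DID}}(\lambda)$. The key preliminary observation is that under the staggered-adoption mapping, where $D_i = 1\{G_i = 0\}$, each within-group $\lambda$-differenced mean is itself a conditional expectation of the differenced potential outcome: writing $Y_{i,t}(0;\lambda) = Y_{i,t}(0) - \sum_{s \in \mathcal{T}_0}\lambda_s Y_{i,s}(0)$, linearity of conditional expectation gives $\mu_{j,t}(0;\lambda) = \mathbf{E}[Y_{i,t}(0;\lambda) \mid G_i = j]$ for each $j \in \mathcal{G}$, with $G_i = 0$ playing the role of $D_i = 1$. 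I would establish this identity first, since it turns the abstract matching errors into conditional means that can be manipulated by elementary probability.

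For part (i), I would show $e_t(\lambda, w^{\mathsf{DID}}) = e_t^{\mathsf{DID}}(\lambda)$ for every $t \in \mathcal{T}$. The only nontrivial step is recognizing that the group-size weights $w_j^{\mathsf{DID}} = P\{G_i = j \mid G_i \in \mathcal{G}_{\mathsf{don}}\}$ are exactly the conditional probabilities appearing in the law of total expectation. Since $\{G_i = j\}_{j=1}^K$ partitions the event $\{G_i \in \mathcal{G}_{\mathsf{don}}\}$, we have $\sum_{j=1}^K \mathbf{E}[Y_{i,t}(0;\lambda) \mid G_i = j]\, w_j^{\mathsf{DID}} = \mathbf{E}[Y_{i,t}(0;\lambda) \mid G_i \in \mathcal{G}_{\mathsf{don}}]$, so the weighted donor average collapses to a single conditional mean over the donor pool and $e_t(\lambda, w^{\mathsf{DID}})$ reduces to $\mu_{0,t}(0;\lambda) - \mathbf{E}[Y_{i,t}(0;\lambda) \mid G_i \in \mathcal{G}_{\mathsf{don}}] = e_t^{\mathsf{DID}}(\lambda)$. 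The equivalence of PTA($\lambda$) with GMC at $(\lambda, w^{\mathsf{DID}})$ then follows immediately by comparing the two definitions restricted to $t \in \mathcal{T}_1$.

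For part (ii), the forward direction is a quick computation: PTA-U($\lambda$) says $\mu_{j,t}(0;\lambda) = \mu_{0,t}(0;\lambda)$ for every $j \in \mathcal{G}_{\mathsf{don}}$ and $t \in \mathcal{T}_1$, so for any $w \in \Delta_{K-1}$ the donor average is $\sum_{j=1}^K \mu_{j,t}(0;\lambda) w_j = \mu_{0,t}(0;\lambda)\sum_{j=1}^K w_j = \mu_{0,t}(0;\lambda)$, forcing $e_t(\lambda, w) = 0$. The converse is where I would use an extreme-point argument: if GMC holds at $(\lambda, w)$ for all $w$, I would evaluate it at each vertex $w = e_j$ of the simplex, which yields $e_t(\lambda, e_j) = \mu_{0,t}(0;\lambda) - \mu_{j,t}(0;\lambda) = e_{j,t}^{\mathsf{DID}}(\lambda) = 0$; ranging over $j$ and $t \in \mathcal{T}_1$ recovers PTA-U($\lambda$).

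I do not anticipate a genuine obstacle: the whole argument is elementary once the conditional-expectation representation of $\mu_{j,t}(0;\lambda)$ is in place. The only point requiring mild care is invoking Assumption \ref{assump:basic}(i), which guarantees that the conditioning events $\{G_i = j\}$ all have positive probability, so that the conditional expectations defining $\mu_{j,t}(0;\lambda)$ and the DID matching errors are well defined and the law of total expectation applies over the donor pool.
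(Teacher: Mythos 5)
Your proposal is correct and follows essentially the same route as the paper: part (i) is the law of total expectation identifying $e_t(\lambda,w^{\mathsf{DID}})$ with $e_t^{\mathsf{DID}}(\lambda)$, and part (ii) combines the identity $e_t(\lambda,w)=\sum_{j=1}^K e_{j,t}^{\mathsf{DID}}(\lambda)w_j$ with evaluation at the simplex vertices, which is exactly the paper's argument with the vectors $\tilde w^{\ell}$. The only cosmetic difference is that you make the conditional-expectation representation of $\mu_{j,t}(0;\lambda)$ explicit as a preliminary step, which the paper uses implicitly.
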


\begin{proof} (i) Notice that $e_t(\lambda,w)$ and $e_t^{\mathsf{DID}}(\lambda)$ have the following relationship: 
\begin{align*}
    e_t^{\mathsf{DID}}(\lambda) &= \mathbf{E}[Y_{i,t}(0;\lambda) \mid G_i = 0] - \mathbf{E}[Y_{i,t}(0;\lambda) \mid G_i \in \mathcal{G}_{\mathsf{don}}]\\
    &= \mathbf{E}[Y_{i,t}(0;\lambda) \mid G_i = 0] - \sum_{j=1}^K \mathbf{E}[Y_{i,t}(0;\lambda) \mid G_i = j] w_j^{\mathsf{DID}} = e_t(\lambda,w^{\mathsf{DID}}), \enspace \text{ for all } t \in \mathcal{T},
\end{align*}
where $w_j^{\mathsf{DID}} = P\{G_i = j \mid G_i \in \mathcal{G}_{\mathsf{don}}\}$. Hence, 
\begin{align*}
    \text{PTA}(\lambda) \text{ holds.} \Leftrightarrow \text{GMC}(\lambda, w^{\mathsf{DID}}) \text{ holds}. 
\end{align*}

(ii) Note that 
\begin{align}
    \label{der}
    \sum_{j=1}^K e_{j,t}^{\mathsf{DID}}(\lambda) w_j &= \sum_{j=1}^K (\mathbf{E}[Y_{i,t}(0;\lambda) \mid G_i = 0] - \mathbf{E}[Y_{i,t}(0;\lambda) \mid G_i = j]) w_j\\ \notag
    &= \mathbf{E}[ Y_{i,t}(0;\lambda) \mid G_i = 0] - \sum_{j=1}^K \mathbf{E}[Y_{i,t}(0;\lambda) \mid G_i = j] w_j = e_t(\lambda,w), \text{ for all } t \in \mathcal{T}.
\end{align}
Hence, \begin{align*}
    \text{PTA-U}(\lambda) \text{ holds.} &\Rightarrow e_t(\lambda,w) = 0, \text{ for all } t \in \mathcal{T}_1 \text{ and all } w \in \Delta_{K-1},\\
    &\text{ i.e., GMC($\lambda,w$) holds for all $w \in \Delta_{K-1}$.}
\end{align*}

Conversely, suppose that $\text{GMC}(\lambda,w)$ holds for all $w \in \Delta_{K-1}$. Then, for any $\ell = 1,...,K$, we have $\tilde w^{\ell} \in \Delta_{K-1}$, where $\tilde w^{\ell}$ denotes the $K$-dimensional vector of zeros except for the $\ell$-th entry which is equal to one. Then, 
\begin{align*}
    \text{GMC holds at } (\lambda,\tilde w^{\ell}) \Rightarrow e_{\ell,t}^{\mathsf{DID}}(\lambda) = \sum_{j=1}^K e_{j,t}^{\mathsf{DID}}(\lambda) \tilde w_j^{\ell} = e_t(\lambda, \tilde w^{\ell}) = 0,
\end{align*}
for all $t \in \mathcal{T}_1$, where the last equality follows from (\ref{der}). We can repeat this for all $\ell = 1,...,K$, to obtain that PTA-U($\lambda$) holds.
\end{proof}

\noindent \textbf{Proof of Proposition \ref{prop: PTA}: } Note that $e_{T^*-1}^{\mathsf{DID}}(\lambda^{\mathsf{DID}}) = 0$. Hence, the desired result follows by Lemmas \ref{lemm: PTA connection} and \ref{lemm: PTA2}. $\blacksquare$

\section{Proofs of the Results in Section \ref{sec: Comparison}}

\noindent \textbf{Proof of Proposition \ref{prop: PTA connection}: } The proposition is the same as Lemma \ref{lemm: PTA connection}. $\blacksquare$\medskip

We turn to the proof of Theorem \ref{thm: regret analysis}. For the results below, we assume that the assumptions of the theorem hold. Recall the definition $\mathsf{MER}_{d,P}(w) = \eta_{d,P}(w) - \inf_{\tilde w \in \mathcal{D}} \mathbf{E}_P[\eta_{d,P}(\tilde w(Z))]$, where $\eta_{d,P}(w) = \mathsf{SSME}_{d,P}(w)$, $d=0,1$. For $d=0,1$, we define 
\begin{align*}
    \hat \eta_d(w) = \frac{1}{|\mathcal{T}_d|} \sum_{t \in \mathcal{T}_d} \hat e_t^2(\lambda,w).
\end{align*}
Despite the notation, we cannot actually recover $\hat \eta_1(w)$ from data, because we do not observe the untreated potential outcomes for the treated group. That is, we do not observe $Y_{i,t}(0)$ for $t \in \mathcal{T}_1$. However, we can construct $\hat \eta_{0}(w)$ from data.

Also, let 
\begin{align*}
    \hat \varepsilon_{j,t} = \hat \mu_{j,t}(\lambda) - \mu_{j,t}(\lambda) \text{ and } \hat \varepsilon_{j,t}(0) = \hat \mu_{j,t}(0;\lambda) - \mu_{j,t}(0;\lambda),
\end{align*}
where $\hat \mu_{j,t}(0;\lambda)$ is defined to be the same as $\hat \mu_{j,t}(\lambda)$ except that $Y_{i,t}$ is replaced by $Y_{i,t}(0)$.

\begin{lemma}
\label{inf in and out}
For each $P \in \mathcal{P}$,
\begin{align*}
    &\inf_{\tilde w \in \mathcal{D}} \mathbf{E}_P[\hat \eta_{0}(\tilde w(Z))] = \inf_{\tilde w \in \mathcal{D}_0} \mathbf{E}_P[\hat \eta_{0}(\tilde w(Z_0))] = \mathbf{E}_P\left[\inf_{w \in \Delta_{K-1}} \hat \eta_{0}(w)\right], \text{ and } \\
    &\inf_{\tilde w \in \mathcal{D}} \mathbf{E}_P[\eta_{0,P}(\tilde w(Z))] = \inf_{w \in \Delta_{K-1}} \eta_{0,P}(w).
\end{align*}
\end{lemma}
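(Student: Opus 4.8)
The lemma contains two chains of equalities, one for the empirical objective $\hat\eta_0$ and one for the population objective $\eta_{0,P}$. The plan is to handle them in reverse order, since the population statement is the simpler of the two and isolates the key structural fact that makes both work: the weight $w$ enters $\mathsf{SSME}_{0,P}$ as a deterministic (non-random) function once $P$ is fixed. First I would establish the second display. The quantity $\eta_{0,P}(w) = \mathsf{SSME}_{0,P}(w)$ is, for each fixed $P$, a fixed real-valued function of $w \in \Delta_{K-1}$ carrying no dependence on the data $Z$. Hence for any decision rule $\tilde w \in \mathcal{D}$, the random variable $\eta_{0,P}(\tilde w(Z))$ is pointwise bounded below by $\inf_{w \in \Delta_{K-1}} \eta_{0,P}(w)$, so its expectation is bounded below by the same constant; conversely, a constant rule $\tilde w \equiv w^*$ taking a (near-)minimizer $w^*$ attains the infimum. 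This gives $\inf_{\tilde w \in \mathcal{D}} \mathbf{E}_P[\eta_{0,P}(\tilde w(Z))] = \inf_{w \in \Delta_{K-1}} \eta_{0,P}(w)$.

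For the first display I would argue the two equalities separately. The rightmost equality, $\inf_{\tilde w \in \mathcal{D}_0} \mathbf{E}_P[\hat\eta_0(\tilde w(Z_0))] = \mathbf{E}_P[\inf_{w \in \Delta_{K-1}} \hat\eta_0(w)]$, is the interchange of infimum and expectation. The direction ``$\geq$'' is immediate: for any rule, $\hat\eta_0(\tilde w(Z_0)) \geq \inf_{w} \hat\eta_0(w)$ pointwise, so the same holds in expectation. The direction ``$\leq$'' is the substantive one and requires producing a \emph{measurable} selection $\tilde w^*(Z_0) \in \argmin_{w \in \Delta_{K-1}} \hat\eta_0(w)$. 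Here I would invoke a measurable selection theorem (e.g.\ the Kuratowski--Ryll-Nardzewski selection theorem): $\hat\eta_0(w)$ is continuous in $w$ on the compact simplex $\Delta_{K-1}$ and jointly measurable in $(w, Z_0)$, so the argmin correspondence is nonempty-compact-valued and admits a measurable selector. Since $\hat\eta_0$ depends on the data only through the pre-treatment sample (it is built from $\hat\mu_{j,t}(\lambda)$ for $t \in \mathcal{T}_0$), this selector lies in $\mathcal{D}_0$, and plugging it in gives $\mathbf{E}_P[\hat\eta_0(\tilde w^*(Z_0))] = \mathbf{E}_P[\inf_w \hat\eta_0(w)]$.

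The leftmost equality of the first display, $\inf_{\tilde w \in \mathcal{D}} \mathbf{E}_P[\hat\eta_0(\tilde w(Z))] = \inf_{\tilde w \in \mathcal{D}_0} \mathbf{E}_P[\hat\eta_0(\tilde w(Z_0))]$, then reduces to showing that enlarging the class of admissible rules from those measurable with respect to the pre-treatment data $Z_0$ to those measurable with respect to all observables $Z$ does not lower the infimum. The inequality ``$\leq$'' is trivial since $\mathcal{D}_0 \subseteq \mathcal{D}$. For ``$\geq$'', the point is that $\hat\eta_0(w)$ is $Z_0$-measurable as a function of the data, so for any $\tilde w \in \mathcal{D}$ we have $\mathbf{E}_P[\hat\eta_0(\tilde w(Z))] \geq \mathbf{E}_P[\inf_w \hat\eta_0(w)]$ by the same pointwise bound, and the right-hand side is already achieved within $\mathcal{D}_0$ by the previous paragraph. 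I expect the measurable selection step to be the main obstacle, since it is the only place where one cannot get away with a pointwise comparison; the continuity of $\hat\eta_0$ on the compact simplex and its joint measurability in $(w,Z_0)$ are the two facts I would verify carefully to license the selection theorem.
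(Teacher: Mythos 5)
Your proposal is correct and follows essentially the same route as the paper: the population equality via the non-randomness of $\eta_{0,P}$ and constant rules, and the empirical chain via the inclusion $\mathcal{D}_0 \subseteq \mathcal{D}$, the pointwise bound $\hat\eta_0(\tilde w(Z)) \ge \inf_{w \in \Delta_{K-1}}\hat\eta_0(w)$, and the interchange of infimum and expectation for $Z_0$-measurable rules. The only cosmetic difference is that you invoke a measurable selection theorem for the interchange step, whereas the paper handles measurability more elementarily by using continuity of $\hat\eta_0$ in $w$ to reduce the infimum to a countable dense subset of the simplex (from which an approximate measurable selector is immediate).
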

\noindent \textbf{Proof: } First, note that 
\begin{align}
    \label{inf eq}
    \inf_{\widetilde w \in \mathcal{D}} \mathbf{E}_P\left[ \hat \eta_{0}(\widetilde w(Z)) \right] &\le \inf_{\widetilde w \in \mathcal{D}_0} \mathbf{E}_P\left[ \hat \eta_{0}(\widetilde w(Z_0)) \right] \\ \notag
    &= \mathbf{E}_P\left[ \inf_{w \in \Delta_{K-1}} \hat \eta_{0}(w) \right] \le \inf_{\widetilde w \in \mathcal{D}} \mathbf{E}_P\left[ \hat \eta_{0}(\widetilde w(Z)) \right],
\end{align} 
where $\mathcal{D}_0$ denotes the $\Delta_{K-1}$-valued maps that are measurable with respect to the $\sigma$-field generated by the pre-treatment data $Z_0$.\footnote{Note that $\hat \eta_{0}(w)$ is continuous in $w$ everywhere. Hence, $\inf_{w \in \Delta_{K-1}} \hat \eta_{0}(w) = \inf_{w \in \Delta_{K-1} \cap \mathbf{Q}^{K}} \hat \eta_{0}(w)$, where $\mathbf{Q}$ is the set of rational numbers. Therefore, $\inf_{w \in \Delta_{K-1}} \hat \eta_{0}(w)$ is a random variable.} The equality above follows because $\hat \eta_{0}(\cdot)$ is measurable with respect to the $\sigma$-field generated by $Z_0$.

The second statement follows because
\begin{align*}
   \inf_{\tilde w \in \mathcal{D}} \mathbf{E}_P[\eta_{0,P}(\tilde w(Z))] \le \inf_{w \in \Delta_{K-1}} \eta_{0,P}(w) =  \mathbf{E}_P\left[ \inf_{w \in \Delta_{K-1}} \eta_{0,P}(w) \right] \le \inf_{\tilde w \in \mathcal{D}} \mathbf{E}_P[\eta_{0,P}(\tilde w(Z))]. 
\end{align*}
The first inequality follows because $\mathcal{D}$ includes constant maps taking values in $\Delta_{K-1}$, and the equality follows because $\eta_{0,P}(\cdot)$ is nonstochastic. $\blacksquare$

\begin{lemma}
    \label{lemma: strong ID SCD}
  For $c>0$ in Assumption \ref{assump: moments},
  \begin{align*}
    \mathsf{SSME}_{0,P}(w) - \inf_{\tilde w \in \Delta_{K-1}} \mathsf{SSME}_{0,P}(\tilde w) \ge c\|w - w_P^{\mathsf{SCD}}\|^2.
\end{align*}
\end{lemma}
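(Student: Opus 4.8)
The plan is to recognize that $\mathsf{SSME}_{0,P}(w)$ is an exact convex quadratic in $w$ and then combine the first-order optimality condition for the constrained minimizer $w_P^{\mathsf{SCD}}$ with the eigenvalue lower bound (\ref{strong ID SCD}) in Assumption \ref{assump: moments}. First I would write the objective in matrix form. Setting $\boldsymbol{\mu}_0 = (\mu_{0,t}(\lambda))_{t \in \mathcal{T}_0}$ and recalling that $\Gamma_P$ has $(t,j)$-entry $\mu_{j,t}(\lambda)$, the vector of pre-treatment matching errors $(e_t(\lambda,w))_{t \in \mathcal{T}_0}$ equals $\boldsymbol{\mu}_0 - \Gamma_P w$, so that
\[
\mathsf{SSME}_{0,P}(w) = \frac{1}{|\mathcal{T}_0|}\|\boldsymbol{\mu}_0 - \Gamma_P w\|^2.
\]
This is a quadratic with constant Hessian $\frac{2}{|\mathcal{T}_0|}\Gamma_P'\Gamma_P$, which is positive definite by (\ref{strong ID SCD}); hence $w_P^{\mathsf{SCD}}$ exists and is the unique minimizer of $\mathsf{SSME}_{0,P}$ over $\Delta_{K-1}$, and in particular $\inf_{\tilde w \in \Delta_{K-1}} \mathsf{SSME}_{0,P}(\tilde w) = \mathsf{SSME}_{0,P}(w_P^{\mathsf{SCD}})$.

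Next, because the objective is quadratic, its second-order Taylor expansion around $w_P^{\mathsf{SCD}}$ is exact:
\[
\mathsf{SSME}_{0,P}(w) - \mathsf{SSME}_{0,P}(w_P^{\mathsf{SCD}}) = g'(w - w_P^{\mathsf{SCD}}) + \frac{1}{|\mathcal{T}_0|}(w - w_P^{\mathsf{SCD}})'\Gamma_P'\Gamma_P(w - w_P^{\mathsf{SCD}}),
\]
where $g = \nabla \mathsf{SSME}_{0,P}(w_P^{\mathsf{SCD}}) = \frac{2}{|\mathcal{T}_0|}\Gamma_P'(\Gamma_P w_P^{\mathsf{SCD}} - \boldsymbol{\mu}_0)$. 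The key step is to discard the linear term: since $w_P^{\mathsf{SCD}}$ minimizes a differentiable convex function over the convex set $\Delta_{K-1}$, the variational-inequality optimality condition gives $g'(w - w_P^{\mathsf{SCD}}) \ge 0$ for every $w \in \Delta_{K-1}$. Substituting this and then applying $\lambda_{\min}(\Gamma_P'\Gamma_P) \ge c|\mathcal{T}_0|$ yields
\[
\mathsf{SSME}_{0,P}(w) - \mathsf{SSME}_{0,P}(w_P^{\mathsf{SCD}}) \ge \frac{1}{|\mathcal{T}_0|}\lambda_{\min}(\Gamma_P'\Gamma_P)\,\|w - w_P^{\mathsf{SCD}}\|^2 \ge c\|w - w_P^{\mathsf{SCD}}\|^2,
\]
which is the claim once $\mathsf{SSME}_{0,P}(w_P^{\mathsf{SCD}})$ is identified with $\inf_{\tilde w} \mathsf{SSME}_{0,P}(\tilde w)$.

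The main obstacle is the correct handling of the constrained minimizer: because $w_P^{\mathsf{SCD}}$ need not lie in the interior of $\Delta_{K-1}$, the gradient $g$ generally does not vanish, so one cannot simply drop the linear term by appealing to stationarity. The right tool is the variational inequality $g'(w - w_P^{\mathsf{SCD}}) \ge 0$, which holds precisely because $\Delta_{K-1}$ is convex and $w_P^{\mathsf{SCD}}$ is feasible; I would be careful to invoke this rather than an interior first-order condition. Everything else reduces to the routine quadratic identity and the eigenvalue bound, so no further estimates are needed.
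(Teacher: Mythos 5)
Your proof is correct and is essentially the paper's own argument: both write $\mathsf{SSME}_{0,P}$ as an exact quadratic in $w$, drop the linear term in the expansion around $w_P^{\mathsf{SCD}}$ via the variational-inequality optimality condition $(\Gamma_P w_P^{\mathsf{SCD}} - \boldsymbol{\mu}_0)'\Gamma_P(w - w_P^{\mathsf{SCD}}) \ge 0$ over the convex set $\Delta_{K-1}$, and then apply the eigenvalue bound (\ref{strong ID SCD}). Your explicit caution about not invoking an interior stationarity condition is exactly the point the paper handles by citing the projection/optimality results in \cite{clarke1990}.
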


\noindent \textbf{Proof: } Let $h(\lambda) = [\mu_{0,1}(\lambda),...,\mu_{0,T^*-1}(\lambda)]'$. Note that 
\begin{align*}
    &\mathsf{SSME}_{0,P}(w) - \inf_{\tilde w \in \Delta_{K-1}} \mathsf{SSME}_{0,P}(\tilde w)\\
    &= \frac{1}{|\mathcal{T}_0|} (\Gamma_P(w - w_P^{\mathsf{SCD}}))'(\Gamma_P(w - w_P^{\mathsf{SCD}})) + \frac{2}{|\mathcal{T}_0|}  (\Gamma_P w_P^{\mathsf{SCD}} - h(\lambda))' \Gamma_P(w - w_P^{\mathsf{SCD}})\\
    &\ge \frac{1}{|\mathcal{T}_0|} (\Gamma_P(w - w_P^{\mathsf{SCD}}))'(\Gamma_P(w - w_P^{\mathsf{SCD}})) \ge \frac{\lambda_{\min}\left( \Gamma_P'\Gamma_P \right)}{|\mathcal{T}_0|}\| w - w_P^{\mathsf{SCD}}\|^2,
\end{align*}
where the inequality follows because $(\Gamma_P w_P^{\mathsf{SCD}} - h(\lambda))' \Gamma_P(w - w_P^{\mathsf{SCD}}) \ge 0$ by the optimality of $w_P^{\mathsf{SCD}}$ (e.g., Propositions 2.1.5 and 2.3.2 of \cite{clarke1990}).
 $\blacksquare$\medskip

Define 
\begin{align*}
    \mathsf{Regret}_{1,P}(\hat w) = \mathbf{E}_P[\ell_1(\hat w)] - \inf_{\widetilde w \in \mathcal{D}} \mathbf{E}_P\left[ \ell_1(\widetilde w(Z))\right],
\end{align*}
and let
\begin{align*}
    R_{n,P}(w) = - \frac{2}{|\mathcal{T}_1|} \sum_{t \in \mathcal{T}_1} e_t(\lambda,w)\left( \theta_t(\lambda,w) - \hat \theta_t(\lambda,w) \right) + \frac{1}{|\mathcal{T}_1|} \sum_{t \in \mathcal{T}_1} \left( \theta_t(\lambda,w) - \hat \theta_t(\lambda,w) \right)^2.
\end{align*}

\begin{lemma}
    \label{lemma: regret approx}
    For any estimator $\hat w \in \Delta_{K-1}$ and for each $P \in \mathcal{P}$,
    \begin{align*}
        \left|\mathsf{Regret}_{1,P}(\hat w) - \mathbf{E}_P\left[\mathsf{MER}_1(\hat w)\right] \right| \le 2 \mathbf{E}_P\left[ \sup_{w \in \Delta_{K-1}} \left| R_{n,P}(w) \right|\right].
    \end{align*}
\end{lemma}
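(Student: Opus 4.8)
The plan is to turn the loss $\ell_1$ into an exact sum of a squared-matching-error term and the remainder $R_{n,P}$, so that the difference between regret and expected MER is governed entirely by $R_{n,P}$ together with a gap between two infima, both of which are dominated by $\mathbf{E}_P[\sup_w|R_{n,P}(w)|]$.

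First I would expand the loss pointwise. Using $\theta_t^* = \theta_t(\lambda,w) - e_t(\lambda,w)$ from (\ref{theta^* decomp}), we have $\theta_t^* - \hat\theta_t(\lambda,w) = -e_t(\lambda,w) + (\theta_t(\lambda,w) - \hat\theta_t(\lambda,w))$; squaring and averaging over $t\in\mathcal{T}_1$ gives the identity
\[
\ell_1(w) = \mathsf{SSME}_{1,P}(w) + R_{n,P}(w),
\]
since the cross term and the squared term are exactly the two pieces of $R_{n,P}(w)$. This holds for every fixed $w\in\Delta_{K-1}$, hence for the random plug-in $\hat w$ as well. By the recalled definition, $\mathsf{MER}_{1,P}(w) = \mathsf{SSME}_{1,P}(w) - \inf_{\tilde w\in\mathcal{D}}\mathbf{E}_P[\mathsf{SSME}_{1,P}(\tilde w(Z))]$, and because $\mathsf{SSME}_{1,P}$ is nonstochastic the same argument as in the second part of Lemma \ref{inf in and out} shows this inner infimum equals $B:=\inf_{w\in\Delta_{K-1}}\mathsf{SSME}_{1,P}(w)$.

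Combining the loss identity with the definitions of $\mathsf{Regret}_{1,P}$ and $\mathsf{MER}_{1,P}$, and writing $A:=\inf_{\tilde w\in\mathcal{D}}\mathbf{E}_P[\ell_1(\tilde w(Z))]$, I would obtain
\[
\mathsf{Regret}_{1,P}(\hat w) - \mathbf{E}_P[\mathsf{MER}_{1,P}(\hat w)] = \mathbf{E}_P[R_{n,P}(\hat w)] - (A - B).
\]
The first term is immediate: $|\mathbf{E}_P[R_{n,P}(\hat w)]| \le \mathbf{E}_P[\sup_w|R_{n,P}(w)|]$. For $|A-B|$ I would argue two-sidedly. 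For the upper bound, plug the constant map $\tilde w\equiv w^*$ (with $w^*$ a minimizer of $\mathsf{SSME}_{1,P}$ on the compact simplex) into the infimum defining $A$, giving $A \le B + \mathbf{E}_P[R_{n,P}(w^*)] \le B + \mathbf{E}_P[\sup_w|R_{n,P}(w)|]$. For the lower bound, use $\ell_1(\tilde w(Z)) = \mathsf{SSME}_{1,P}(\tilde w(Z)) + R_{n,P}(\tilde w(Z)) \ge B - \sup_w|R_{n,P}(w)|$ pointwise (since $\tilde w(Z)\in\Delta_{K-1}$ forces $\mathsf{SSME}_{1,P}(\tilde w(Z))\ge B$), then take expectations and infimum over $\tilde w\in\mathcal{D}$ to get $A \ge B - \mathbf{E}_P[\sup_w|R_{n,P}(w)|]$. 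Hence $|A-B|\le \mathbf{E}_P[\sup_w|R_{n,P}(w)|]$, and the triangle inequality produces the factor $2$.

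I expect the main obstacle to be the bookkeeping around the two infima — over the rich class $\mathcal{D}$ of measurable decision rules versus over the simplex $\Delta_{K-1}$. The lower bound on $A$ is where this matters: it is essential to pass the $R_{n,P}$ error outside the infimum before invoking the pointwise floor $B$, and to note that the upper bound only needs constant maps to lie in $\mathcal{D}$. Attainment of $B$ is harmless (compact simplex, continuous objective), and if one preferred to avoid it an $\varepsilon$-minimizer argument works identically.
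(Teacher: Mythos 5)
Your proposal is correct and follows essentially the same route as the paper: the identity $\ell_1(w)=\mathsf{SSME}_{1,P}(w)+R_{n,P}(w)$ obtained from $\theta_t^*=\theta_t(\lambda,w)-e_t(\lambda,w)$, then bounding $\lvert\mathbf{E}_P[R_{n,P}(\hat w)]\rvert$ and the gap between the two infima over $\mathcal{D}$ each by $\mathbf{E}_P[\sup_{w}\lvert R_{n,P}(w)\rvert]$ to produce the factor $2$. The only cosmetic difference is that you spell out the two-sided $\lvert\inf f-\inf g\rvert\le\sup\lvert f-g\rvert$ step (and the reduction of $B$ to an infimum over the simplex), which the paper compresses into a one-line bound.
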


\noindent \textbf{Proof: } Since $e_t(\lambda,w) = \theta_t(\lambda,w) - \theta_t^*$ by Lemma \ref{lemm:theta*}, we write 
\begin{align*}
    \mathbf{E}_P\left[ \left( \theta_t^* - \hat \theta_t(\lambda,\hat w) \right)^2 \right]
    &= \mathbf{E}_P\left[ \left( \theta_t^* - \theta_t(\lambda,\hat w) + \theta_t(\lambda,\hat w) - \hat \theta_t(\lambda,\hat w) \right)^2 \right]\\
    &= \mathbf{E}_P\left[ e_t^2(\lambda,\hat w) \right] - 2 \mathbf{E}_P\left[ e_t(\lambda,\hat w) (\theta_t(\lambda,\hat w) - \hat \theta_t(\lambda,\hat w)) \right]\\
    &\quad \quad + \mathbf{E}_P\left[ (\theta_t(\lambda,\hat w) - \hat \theta_t(\lambda,\hat w))^2 \right].
\end{align*}
Hence, \begin{align*}
    \mathsf{Regret}_{1,P}(\hat w) &= \mathbf{E}_P\left[ \eta_{1,P}(\hat w) + R_{n,P}(\hat w) \right] - \inf_{\widetilde w \in \mathcal{D}} \mathbf{E}_P\left[ \eta_{1,P}(\widetilde w(Z)) + R_{n,P}(\widetilde w(Z)) \right]\\
    &= \mathbf{E}_P\left[ \mathsf{MER}_1(\hat w) \right] + \mathbf{E}_P\left[ R_{n,P}(\hat w) \right]\\
    &\quad \quad - \left\{\inf_{\widetilde w \in \mathcal{D}} \left( \mathbf{E}_P[\eta_{1,P}(\widetilde w(Z))] + \mathbf{E}_P\left[ R_{n,P}(\widetilde w(Z)) \right]\right) - \inf_{\widetilde w \in \mathcal{D}} \mathbf{E}_P[\eta_{1,P}(\widetilde w(Z))]\right\}.
\end{align*}
As for the last term, 
\begin{align*}
    &\left|\inf_{\widetilde w \in \mathcal{D}} \left( \mathbf{E}_P[\eta_{1,P}(\widetilde w(Z))] + \mathbf{E}_P\left[ R_{n,P}(\widetilde w(Z)) \right]\right) - \inf_{\widetilde w \in \mathcal{D}} \mathbf{E}_P[\eta_{1,P}(\widetilde w(Z))]\right| \\
    &\le \inf_{\widetilde w \in \mathcal{D}} \left( \mathbf{E}_P[\eta_{1,P}(\widetilde w(Z))] + \mathbf{E}_P\left[ \sup_{w \in \Delta_{K-1}} \left| R_{n,P}(w) \right|\right]\right) - \inf_{\widetilde w \in \mathcal{D}} \mathbf{E}_P[\eta_{1,P}(\widetilde w(Z))] = \mathbf{E}_P\left[ \sup_{w \in \Delta_{K-1}} \left| R_{n,P}(w) \right|\right].
\end{align*}
Thus, we obtain the desired bound. $\blacksquare$\medskip

We define 
\begin{align*}
    \hat D_{d,1} = \frac{1}{|\mathcal{T}_d|} \sum_{t \in \mathcal{T}_d} \sum_{j=0}^K \left| \hat \varepsilon_{j,t} \right|
    \text{ and }
    \hat D_{d,2}^2 = \frac{1}{|\mathcal{T}_d|} \sum_{t \in \mathcal{T}_d} \sum_{j=0}^K \hat \varepsilon_{j,t}^2.
\end{align*}
We define similarly $\hat D_{d,1}(0)$ and $\hat D_{d,2}^2(0)$ with $\hat \mu_{j,t}(\lambda)$ and $\mu_{j,t}(\lambda)$ replaced by $\hat \mu_{j,t}(0;\lambda)$ and $\mu_{j,t}(0;\lambda)$.

\begin{lemma}
    \label{lemma: Rn}

    For each $P \in \mathcal{P}$ and $w \in \Delta_{K-1}$, the following statements hold.\medskip

    (i) $\left| R_{n,P}(w) \right| \le 8 \overline m \hat D_{1,1} + 2 \hat D_{1,2}^2.$\medskip

    (ii) $\left| \hat \eta_{0}(w) - \eta_{0,P}(w) \right| \le 8 \overline m \hat D_{0,1}(0) + 2 \hat D_{0,2}^2(0).$
\end{lemma}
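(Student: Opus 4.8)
The plan is to reduce both inequalities to two ingredients: a deterministic bound $|e_t(\lambda,w)|\le 4\overline m$ on the population matching errors, and the fact that the sampling errors in $\hat\theta_t$ and $\hat e_t$ are \emph{linear} in the per-group errors $\hat\varepsilon_{j,t}$ (resp.\ $\hat\varepsilon_{j,t}(0)$) with weights drawn from the simplex. First I would record the algebraic decompositions. From $\theta_t(\lambda,w)=\mu_{0,t}(\lambda)-\sum_{j=1}^K\mu_{j,t}(\lambda)w_j$ and the analogous expression for $\hat\theta_t$, one gets
\begin{align*}
\theta_t(\lambda,w)-\hat\theta_t(\lambda,w) = -\hat\varepsilon_{0,t}+\sum_{j=1}^K\hat\varepsilon_{j,t}w_j,
\end{align*}
and likewise $\hat e_t(\lambda,w)-e_t(\lambda,w)=\hat\varepsilon_{0,t}(0)-\sum_{j=1}^K\hat\varepsilon_{j,t}(0)w_j$. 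For the deterministic bound, Jensen's inequality together with (\ref{bounds}) gives $|m_{j,t}(0)|\le(\mathbf{E}_P[Y_{i,t}^4(0)\mid G_i=j])^{1/4}\le\overline m$; since $\sum_{s\in\mathcal T_0}\lambda_s=1$ this yields $|\mu_{j,t}(0;\lambda)|\le 2\overline m$, and then $w\in\Delta_{K-1}$ forces $|e_t(\lambda,w)|\le 4\overline m$.

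For part (i) I would bound the two summands of $R_{n,P}(w)$ separately. For the cross term, the triangle inequality applied to the decomposition above, using $0\le w_j\le 1$, gives $|\theta_t(\lambda,w)-\hat\theta_t(\lambda,w)|\le\sum_{j=0}^K|\hat\varepsilon_{j,t}|$; combined with $|e_t(\lambda,w)|\le 4\overline m$ and averaging over $\mathcal T_1$, the cross term is at most $8\overline m\,\hat D_{1,1}$. For the quadratic term, the point is to obtain the constant $2$ rather than $K+1$: writing $(\theta_t-\hat\theta_t)^2\le 2\hat\varepsilon_{0,t}^2+2\bigl(\sum_{j=1}^K\hat\varepsilon_{j,t}w_j\bigr)^2$ and then applying Jensen's inequality $\bigl(\sum_j w_j\hat\varepsilon_{j,t}\bigr)^2\le\sum_j w_j\hat\varepsilon_{j,t}^2\le\sum_{j=1}^K\hat\varepsilon_{j,t}^2$ (valid because $w$ lies on the simplex) gives $(\theta_t-\hat\theta_t)^2\le 2\sum_{j=0}^K\hat\varepsilon_{j,t}^2$, so this term is at most $2\hat D_{1,2}^2$ after averaging. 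Adding the two bounds proves (i).

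For part (ii) I would use the difference-of-squares identity $\hat e_t^2-e_t^2=(\hat e_t-e_t)(\hat e_t+e_t)$, so that $|\hat e_t^2-e_t^2|\le 2|e_t|\,|\hat e_t-e_t|+(\hat e_t-e_t)^2$. The same two estimates as above---now applied to $\hat e_t-e_t$ with errors $\hat\varepsilon_{j,t}(0)$---give $|\hat e_t-e_t|\le\sum_{j=0}^K|\hat\varepsilon_{j,t}(0)|$ and $(\hat e_t-e_t)^2\le 2\sum_{j=0}^K\hat\varepsilon_{j,t}^2(0)$, while $|e_t|\le 4\overline m$ controls the first factor. Averaging over $\mathcal T_0$ then yields $|\hat\eta_{0}(w)-\eta_{0,P}(w)|\le 8\overline m\,\hat D_{0,1}(0)+2\hat D_{0,2}^2(0)$.

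There is no deep obstacle here; the only point requiring care is the quadratic bound, where a naive Cauchy--Schwarz step would produce a dimension-dependent factor $K+1$. Exploiting that the weights are nonnegative and sum to one---i.e.\ invoking Jensen's inequality with $w$ treated as a probability vector---is what delivers the clean constant $2$ that the statement (and the subsequent regret analysis) requires. I would also verify that $|e_t(\lambda,w)|\le 4\overline m$ holds uniformly in $w\in\Delta_{K-1}$ and $P\in\mathcal P$, since this uniformity is what allows these per-$w$ bounds to be combined later with a supremum over the simplex.
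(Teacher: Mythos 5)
Your proof is correct and follows essentially the same route as the paper's: the deterministic bound $|e_t(\lambda,w)|\le 4\overline m$ from the moment condition, the triangle inequality for the linear term, and the $(a+b)^2\le 2a^2+2b^2$ step combined with Jensen's inequality over the simplex weights to get the constant $2$ on the quadratic term, with the difference-of-squares identity handling part (ii). Nothing is missing.
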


\noindent \textbf{Proof: } (i) First, by (\ref{bounds}), we have $\left| e_t(\lambda,w) \right| \le 4 \overline m$, for all $w \in \Delta_{K-1}$. Hence, 
\begin{align*}
    |R_{n,P}(\hat w)| \le \frac{8 \overline m}{|\mathcal{T}_1|} \sum_{t \in \mathcal{T}_1} \left| \theta_t(\lambda,\hat w) - \hat \theta_t(\lambda,\hat w)\right| + \frac{1}{|\mathcal{T}_1|} \sum_{t \in \mathcal{T}_1} \left( \theta_t(\lambda,\hat w) - \hat \theta_t(\lambda,\hat w)\right)^2.
\end{align*}
Note that 
\begin{align*}
    \left| \theta_t(\lambda,\hat w) - \hat \theta_t(\lambda,\hat w)\right|
    \le \sum_{j=0}^K \left| \hat \varepsilon_{j,t} \right|.
\end{align*}
Also, note that 
\begin{align*}
    \left( \theta_t(\lambda,\hat w) - \hat \theta_t(\lambda,\hat w)\right)^2
    &\le 2 \hat \varepsilon_{0,t}^2 + 2 \left(\sum_{j=1}^K \hat \varepsilon_{j,t} \hat w_j \right)^2\\
    &\le 2 \hat \varepsilon_{0,t}^2 + 2 \sum_{j=1}^K \hat \varepsilon_{j,t}^2 \hat w_j \le 2 \sum_{j=0}^K \hat \varepsilon_{j,t}^2.
\end{align*}
The second inequality follows from Jensen's inequality. Combining these, we obtain the desired result.\medskip

(ii) Note that 
\begin{align*}
    \left| \hat \eta_{0}(w) - \eta_{0,P}(w) \right| &\le \frac{1}{|\mathcal{T}_0|} \sum_{t \in \mathcal{T}_0} \left| \hat e_t^2(\lambda,w) -  e_t^2(\lambda,w) \right|\\
    &\le \frac{1}{|\mathcal{T}_0|} \sum_{t \in \mathcal{T}_0} \left( \hat e_t(\lambda,w) - e_t(\lambda,w) \right)^2\\
     &\quad \quad + \frac{2}{|\mathcal{T}_0|} \sum_{t \in \mathcal{T}_0} |e_t(\lambda,w)| \left| \hat e_t(\lambda,w) -  e_t(\lambda,w) \right|.
\end{align*}
Now, observe that 
\begin{align*}
    \left| \hat e_t(\lambda,w) -  e_t(\lambda,w) \right| \le \sum_{j=0}^K \left|\hat \varepsilon_{j,t}(0) \right|,
\end{align*}
and
\begin{align*}
    \left( \hat e_t(\lambda,w) -  e_t(\lambda,w) \right)^2 &\le 2 \hat \varepsilon_{0,t}^2(0) + 2 \sum_{j=1}^K \hat \varepsilon_{j,t}^2(0)  \le 2\sum_{j=0}^K \hat \varepsilon_{j,t}^2(0).
\end{align*}
Therefore, $\left| \hat \eta_{0}(w) - \eta_{0,P}(w) \right| \le 8\overline m \hat D_{0,1}(0) + 2 \hat D_{0,2}^2(0).$ $\blacksquare$

\begin{lemma}
 \label{lemm:eta D bound}
 For each $P \in \mathcal{P}$, 
 \begin{align*}
    &\hat \eta_0(w_P^{\mathsf{SCD}}) - \eta_{0,P}(w_P^{\mathsf{SCD}}) - \left( \hat \eta_0(\hat w^{\mathsf{SCD}}) - \eta_{0,P}(\hat w^{\mathsf{SCD}}) \right)\\
    &\le 4(\hat D_{0,2}^2 + 4\overline m \hat D_{0,1})  \sum_{j=1}^K |\hat w_j^{\mathsf{SCD}} - w_{j,P}^{\mathsf{SCD}}|.
 \end{align*}
\end{lemma}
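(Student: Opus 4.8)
The plan is to expand both pre-treatment objects as differences of squares, substitute the sampling-error decompositions, and then bound the resulting three cross terms with a deliberate mixture of $\ell_1$ and $\ell_2$ inequalities so that they collapse exactly onto $\hat D_{0,1}$ and $\hat D_{0,2}^2$ with no residual dependence on $K$.

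First I would set $a_j = \hat w_j^{\mathsf{SCD}} - w_{j,P}^{\mathsf{SCD}}$ and $A=\sum_{j=1}^K|a_j|$, and exploit that $\hat e_t(\lambda,\cdot)$ and $e_t(\lambda,\cdot)$ are affine in $w$ to write each bracket in the target as a telescoped product. For the sample object,
\begin{align*}
\hat\eta_0(w_P^{\mathsf{SCD}}) - \hat\eta_0(\hat w^{\mathsf{SCD}}) &= \frac{1}{|\mathcal{T}_0|}\sum_{t\in\mathcal{T}_0}\Big(\sum_{j=1}^K\hat\mu_{j,t}(\lambda)a_j\Big)\big(\hat e_t(\lambda,w_P^{\mathsf{SCD}})+\hat e_t(\lambda,\hat w^{\mathsf{SCD}})\big),
\end{align*}
and analogously for $\eta_{0,P}$ with $\hat\mu_{j,t}(\lambda),\hat e_t$ replaced by $\mu_{j,t}(\lambda),e_t$. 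Substituting $\hat\mu_{j,t}(\lambda)=\mu_{j,t}(\lambda)+\hat\varepsilon_{j,t}$ and $\hat e_t(\lambda,w)=e_t(\lambda,w)+\delta_t(w)$, where $\delta_t(w)=\hat\varepsilon_{0,t}-\sum_{j=1}^K\hat\varepsilon_{j,t}w_j$, the quantity on the left of the lemma telescopes into $\frac{1}{|\mathcal{T}_0|}\sum_t (M_tQ_t+E_tP_t+E_tQ_t)$, where $M_t=\sum_j\mu_{j,t}(\lambda)a_j$, $E_t=\sum_j\hat\varepsilon_{j,t}a_j$, $P_t=e_t(\lambda,w_P^{\mathsf{SCD}})+e_t(\lambda,\hat w^{\mathsf{SCD}})$, and $Q_t=\delta_t(w_P^{\mathsf{SCD}})+\delta_t(\hat w^{\mathsf{SCD}})$.

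Then I would bound the three terms separately. For the two terms that are linear in the sampling errors I would use crude supremum bounds: from (\ref{bounds}) one has $|\mu_{j,t}(\lambda)|\le 2\overline m$ and $|e_t(\lambda,w)|\le 4\overline m$ for every $w\in\Delta_{K-1}$ (as already exploited for Lemma \ref{lemma: Rn}), while on the simplex $|\delta_t(w)|\le\sum_{j=0}^K|\hat\varepsilon_{j,t}|=:S_t$ and $\max_{1\le j\le K}|\hat\varepsilon_{j,t}|\le S_t$. This gives $|M_tQ_t|\le 4\overline m\, S_t A$ and $|E_tP_t|\le 8\overline m\, S_t A$, which after averaging over $t$ sum to $12\overline m A\,\hat D_{0,1}$, since $\hat D_{0,1}=\frac{1}{|\mathcal{T}_0|}\sum_t S_t$. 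The quadratic-in-$\hat\varepsilon$ term $E_tQ_t$ is the crux: bounding it through $S_t$ would produce time-averages of $S_t^2$, which are controlled by $\hat D_{0,2}^2$ only up to a factor of order $K$. The key step is instead to use Cauchy--Schwarz with the Euclidean norm: writing $T_t=(\sum_{j=0}^K\hat\varepsilon_{j,t}^2)^{1/2}$, one has $|E_t|\le T_t\|a\|_2\le T_t A$ and $|Q_t|\le 4T_t$, so $|E_tQ_t|\le 4T_t^2 A$, whose time-average is exactly $4A\,\hat D_{0,2}^2$. Summing the three contributions yields $4(\hat D_{0,2}^2+3\overline m\,\hat D_{0,1})A$, which is the claim.

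The main obstacle is precisely the handling of $E_tQ_t$: one must resist bounding everything uniformly by $S_t$ and switch to $\ell_2$ so that the bound matches $\hat D_{0,2}^2$ with the clean constant $4$; the accounting of the other two terms is then routine, the only care being the sup-bounds $|\mu_{j,t}(\lambda)|\le 2\overline m$ and $|e_t(\lambda,w)|\le 4\overline m$ inherited from (\ref{bounds}) (both following from $\mathbf{E}_P[Y_{i,t}^4(s)\mid G_i=j]\le\overline m^4$ by Jensen, together with the simplex constraint). Note that no use of $\sum_j|a_j|\le 2$ is needed, since $A$ factors out linearly in every term.
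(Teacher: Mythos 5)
Your proof is correct and follows essentially the same route as the paper: the same three-cross-term decomposition $M_tQ_t+E_tP_t+E_tQ_t$ after splitting $\hat\mu_{j,t}(\lambda)=\mu_{j,t}(\lambda)+\hat\varepsilon_{j,t}$ and $\hat e_t=e_t+\delta_t$, with the same sup-bounds $|\mu_{j,t}(\lambda)|\le 2\overline m$, $|e_t(\lambda,w)|\le 4\overline m$, landing on exactly the stated constant. The only cosmetic difference is in the quadratic term, where you invoke Cauchy--Schwarz to reach $T_t^2=\sum_{j}\hat\varepsilon_{j,t}^2$ while the paper bounds it by $4\max_{0\le j\le K}\hat\varepsilon_{j,t}^2\le 4T_t^2$; both avoid the factor-$K$ loss you flag, so your ``crux'' is not the unique escape route, but your argument is fully valid.
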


\noindent \textbf{Proof: } We first write $\hat \eta_0(w_P^{\mathsf{SCD}}) - \eta_{0,P}(w_P^{\mathsf{SCD}}) - \left( \hat \eta_0(\hat w^{\mathsf{SCD}}) - \eta_{0,P}(\hat w^{\mathsf{SCD}}) \right)$ as
\begin{align*}
    &\frac{1}{|\mathcal{T}_0|} \sum_{t \in \mathcal{T}_0} (\hat e_t(\lambda,w_P^{\mathsf{SCD}}) + \hat e_t(\lambda,\hat w^{\mathsf{SCD}}))(\hat e_t(\lambda,w_P^{\mathsf{SCD}}) - \hat e_t(\lambda,\hat w^{\mathsf{SCD}}))\\
    &\quad - \frac{1}{|\mathcal{T}_0|} \sum_{t \in \mathcal{T}_0} (e_t(\lambda,w_P^{\mathsf{SCD}}) + e_t(\lambda,\hat w^{\mathsf{SCD}}))(e_t(\lambda,w_P^{\mathsf{SCD}}) - e_t(\lambda,\hat w^{\mathsf{SCD}}))\\
    &= \frac{1}{|\mathcal{T}_0|} \sum_{t \in \mathcal{T}_0} (\hat e_t(\lambda,w_P^{\mathsf{SCD}}) + \hat e_t(\lambda,\hat w^{\mathsf{SCD}}) - (e_t(\lambda,w_P^{\mathsf{SCD}}) + e_t(\lambda,\hat w^{\mathsf{SCD}})))(\hat e_t(\lambda,w_P^{\mathsf{SCD}}) - \hat e_t(\lambda,\hat w^{\mathsf{SCD}}))\\
    &\quad + \frac{1}{|\mathcal{T}_0|} \sum_{t \in \mathcal{T}_0} (e_t(\lambda,w_P^{\mathsf{SCD}}) + e_t(\lambda,\hat w^{\mathsf{SCD}}))(\hat e_t(\lambda,w_P^{\mathsf{SCD}}) - \hat e_t(\lambda,\hat w^{\mathsf{SCD}}) - (e_t(\lambda,w_P^{\mathsf{SCD}}) - e_t(\lambda,\hat w^{\mathsf{SCD}}))).
\end{align*}
By rearranging terms, we can write the sum of the last sums as 
\begin{align*}
    &\frac{1}{|\mathcal{T}_0|} \sum_{t \in \mathcal{T}_0} \left\{ \left( 2 \hat \varepsilon_{0,t} - \sum_{j=1}^K \hat \varepsilon_{j,t} (\hat w_j^{\mathsf{SCD}} + w_{j,P}^{\mathsf{SCD}}) \right) \times \sum_{j=1}^K \hat \mu_{j,t}(\lambda)(\hat w_j^{\mathsf{SCD}} - w_{j,P}^{\mathsf{SCD}}) \right.\\
    &\quad \quad + \left. \left( 2 \mu_{0,t}(\lambda) - \sum_{j=1}^K \mu_{j,t}(\lambda)(\hat w_j^{\mathsf{SCD}} + w_{j,P}^{\mathsf{SCD}}) \right)\times \sum_{j=1}^K  \hat \varepsilon_{j,t} (\hat w_j^{\mathsf{SCD}} - w_{j,P}^{\mathsf{SCD}}) \right\}\\
    &=\frac{1}{|\mathcal{T}_0|} \sum_{t \in \mathcal{T}_0} \left\{ \left( 2 \hat \varepsilon_{0,t} - \sum_{j=1}^K \hat \varepsilon_{j,t} (\hat w_j^{\mathsf{SCD}} + w_{j,P}^{\mathsf{SCD}}) \right) \times \sum_{j=1}^K \hat \varepsilon_{j,t}(\hat w_j^{\mathsf{SCD}} - w_{j,P}^{\mathsf{SCD}}) \right.\\
    &\quad\quad + \left. \left( 2 \hat \varepsilon_{0,t} - \sum_{j=1}^K \hat \varepsilon_{j,t} (\hat w_j^{\mathsf{SCD}} + w_{j,P}^{\mathsf{SCD}}) \right) \times \sum_{j=1}^K \mu_{j,t}(\lambda) (\hat w_j^{\mathsf{SCD}} - w_{j,P}^{\mathsf{SCD}}) \right.\\
    &\quad \quad + \left. \left( 2 \mu_{0,t}(\lambda) - \sum_{j=1}^K \mu_{j,t}(\lambda)(\hat w_j^{\mathsf{SCD}} + w_{j,P}^{\mathsf{SCD}}) \right)\times \sum_{j=1}^K  \hat \varepsilon_{j,t} (\hat w_j^{\mathsf{SCD}} - w_{j,P}^{\mathsf{SCD}}) \right\}\\
    &\le \frac{1}{|\mathcal{T}_0|} \sum_{t \in \mathcal{T}_0} \left\{ 4 \max_{0 \le j \le K}|\hat \varepsilon_{j,t}|^2  + 16 \overline m \max_{1 \le j \le K}|\hat \varepsilon_{j,t}| \right\} \times \sum_{j=1}^K|\hat w_j^{\mathsf{SCD}} - w_{j,P}^{\mathsf{SCD}}|.   
\end{align*}
Since 
\begin{align*}
    \frac{1}{|\mathcal{T}_0|} \sum_{t \in \mathcal{T}_0} \max_{1 \le j \le K}|\hat \varepsilon_{j,t}| \le \hat D_{0,1} \text{ and } \frac{1}{|\mathcal{T}_0|} \sum_{t \in \mathcal{T}_0} \max_{1 \le j \le K}|\hat \varepsilon_{j,t}|^2 \le \hat D_{0,2}^2,
\end{align*}
we obtain the desired result. $\blacksquare$

\begin{lemma}
  \label{lemm: w bound}
  For each $P \in \mathcal{P}$,
  \begin{align*}
    \sum_{j=1}^K | \hat w_j^{\mathsf{SCD}} - w_{j,P}^{\mathsf{SCD}} | \le \frac{4 K(\hat D_{0,2}^2 + 4 \overline m \hat D_{0,1})}{c}.
  \end{align*}
\end{lemma}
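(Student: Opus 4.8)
The plan is to run the standard M-estimation argument that combines a basic inequality with the quadratic growth condition, where the only twist is that the empirical fluctuation is controlled \emph{linearly} in the $\ell_1$ deviation $S \eqdef \sum_{j=1}^K |\hat w_j^{\mathsf{SCD}} - w_{j,P}^{\mathsf{SCD}}|$, so that matching it against a quadratic lower bound yields the claimed linear rate after cancelling one factor of $S$.

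First I would write the basic inequality. Since $\hat w^{\mathsf{SCD}}$ minimizes $\hat \eta_0$ over $\Delta_{K-1}$ and $w_P^{\mathsf{SCD}}$ minimizes $\eta_{0,P}$, inserting and removing $\hat\eta_0(\hat w^{\mathsf{SCD}})$ and $\hat\eta_0(w_P^{\mathsf{SCD}})$ and discarding the nonpositive term $\hat\eta_0(\hat w^{\mathsf{SCD}}) - \hat\eta_0(w_P^{\mathsf{SCD}})$ gives
\begin{align*}
\eta_{0,P}(\hat w^{\mathsf{SCD}}) - \eta_{0,P}(w_P^{\mathsf{SCD}}) \le \left[\hat\eta_0(w_P^{\mathsf{SCD}}) - \eta_{0,P}(w_P^{\mathsf{SCD}})\right] - \left[\hat\eta_0(\hat w^{\mathsf{SCD}}) - \eta_{0,P}(\hat w^{\mathsf{SCD}})\right].
\end{align*}
The left-hand side equals $\mathsf{SSME}_{0,P}(\hat w^{\mathsf{SCD}}) - \inf_{\tilde w \in \Delta_{K-1}} \mathsf{SSME}_{0,P}(\tilde w)$, since $w_P^{\mathsf{SCD}}$ attains the infimum, while the right-hand side is precisely the quantity bounded in Lemma~\ref{lemm:eta D bound}.

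Next I would sandwich. Lemma~\ref{lemma: strong ID SCD} lower-bounds the left-hand side by $c\|\hat w^{\mathsf{SCD}} - w_P^{\mathsf{SCD}}\|^2$, and Lemma~\ref{lemm:eta D bound} upper-bounds the right-hand side by $4(\hat D_{0,2}^2 + 3\overline m \hat D_{0,1})\,S$, so that $c\|\hat w^{\mathsf{SCD}} - w_P^{\mathsf{SCD}}\|^2 \le 4(\hat D_{0,2}^2 + 3\overline m \hat D_{0,1})\,S$. By Cauchy--Schwarz, $S^2 \le K \|\hat w^{\mathsf{SCD}} - w_P^{\mathsf{SCD}}\|^2$, hence $\|\hat w^{\mathsf{SCD}} - w_P^{\mathsf{SCD}}\|^2 \ge S^2/K$, which yields $c\,S^2/K \le 4(\hat D_{0,2}^2 + 3\overline m \hat D_{0,1})\,S$. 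If $S = 0$ the claim is trivial; otherwise I divide both sides by $S$ and multiply by $K/c$ to obtain the stated bound.

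The steps are all routine once the two preceding lemmas are in hand, so there is no genuine obstacle. The one point requiring care is the bookkeeping of signs in the basic inequality, so that the cancelled empirical-process difference lines up exactly with the one-sided bound of Lemma~\ref{lemm:eta D bound}; and the conceptual point worth flagging is the self-bounding structure — a linear-in-$S$ upper bound against a quadratic-in-$S$ lower bound — which is what converts mere consistency into the explicit rate.
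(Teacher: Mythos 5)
Your proof is correct and follows essentially the same route as the paper's: the same basic inequality (whose right-hand side is exactly the quantity bounded in Lemma~\ref{lemm:eta D bound}), the quadratic growth bound of Lemma~\ref{lemma: strong ID SCD}, the Cauchy--Schwarz step relating the $\ell_1$ and $\ell_2$ deviations, and cancellation of one factor of $S$. The only difference is cosmetic: you make the $S=0$ case and the sign bookkeeping explicit, which the paper leaves implicit.
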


\noindent \textbf{Proof: } Note that 
\begin{align*}
    &\hat \eta_0(w_P^{\mathsf{SCD}}) - \eta_{0,P}(w_P^{\mathsf{SCD}}) - \left( \hat \eta_0(\hat w^{\mathsf{SCD}}) - \eta_{0,P}(\hat w^{\mathsf{SCD}}) \right) \\
    &\quad \quad \ge \eta_{0,P}(\hat w^{\mathsf{SCD}}) - \eta_{0,P}(w_P^{\mathsf{SCD}}) \ge c \| \hat w^{\mathsf{SCD}} - w_P^{\mathsf{SCD}}\|^2 \ge \frac{c}{K} \left( \sum_{j=1}^K | \hat w_j^{\mathsf{SCD}} - w_{j,P}^{\mathsf{SCD}} | \right)^2,
\end{align*}
by Lemma \ref{lemma: strong ID SCD}. From Lemma \ref{lemm:eta D bound}, we obtain the desired result. $\blacksquare$

\begin{lemma}
  \label{lemm:eta bound}
  For each $P \in \mathcal{P}$,
  \begin{align*}
    \left| \eta_{1,P}(\hat w^{\mathsf{SCD}}) - \eta_{1,P}(w_P^{\mathsf{SCD}}) \right| 
    \le \frac{64 \overline m^2 K}{c} (\hat D_{0,2}^2 + 4 \overline m \hat D_{0,1}).
  \end{align*}
\end{lemma}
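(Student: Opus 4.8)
The plan is to exploit that the matching error $e_t(\lambda,w)$ is affine in $w$, so that $\eta_{1,P}(w) = |\mathcal{T}_1|^{-1}\sum_{t \in \mathcal{T}_1} e_t^2(\lambda,w)$ is a quadratic function of $w$ whose increment factors as a product of a sum and a difference. First I would write
\begin{align*}
\eta_{1,P}(\hat w^{\mathsf{SCD}}) - \eta_{1,P}(w_P^{\mathsf{SCD}}) = \frac{1}{|\mathcal{T}_1|}\sum_{t \in \mathcal{T}_1}\bigl(e_t(\lambda,\hat w^{\mathsf{SCD}}) + e_t(\lambda,w_P^{\mathsf{SCD}})\bigr)\bigl(e_t(\lambda,\hat w^{\mathsf{SCD}}) - e_t(\lambda,w_P^{\mathsf{SCD}})\bigr),
\end{align*}
and control the two factors separately, uniformly in $t$ and $P$.

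For the sum factor I would reuse the bound $|e_t(\lambda,w)| \le 4\overline m$ established in the proof of Lemma \ref{lemma: Rn} (a consequence of the moment bound (\ref{bounds})), which yields $|e_t(\lambda,\hat w^{\mathsf{SCD}}) + e_t(\lambda,w_P^{\mathsf{SCD}})| \le 8\overline m$. For the difference factor, affinity gives
\begin{align*}
e_t(\lambda,\hat w^{\mathsf{SCD}}) - e_t(\lambda,w_P^{\mathsf{SCD}}) = -\sum_{j=1}^K \mu_{j,t}(0;\lambda)\bigl(\hat w_j^{\mathsf{SCD}} - w_{j,P}^{\mathsf{SCD}}\bigr),
\end{align*}
and since $|\mu_{j,t}(0;\lambda)| \le 2\overline m$ (again from (\ref{bounds}) together with $\lambda \in \Delta_{|\mathcal{T}_0|-1}$), this is at most $2\overline m \sum_{j=1}^K|\hat w_j^{\mathsf{SCD}} - w_{j,P}^{\mathsf{SCD}}|$. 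Multiplying the two estimates and averaging over $t$ shows that $\eta_{1,P}$ is Lipschitz in $w$ in the $\ell_1$ norm, producing a bound proportional to $\overline m^2 \sum_{j=1}^K|\hat w_j^{\mathsf{SCD}} - w_{j,P}^{\mathsf{SCD}}|$. Finally I would substitute the $\ell_1$ weight bound from Lemma \ref{lemm: w bound}, namely $\sum_{j=1}^K|\hat w_j^{\mathsf{SCD}} - w_{j,P}^{\mathsf{SCD}}| \le 4K(\hat D_{0,2}^2 + 3\overline m \hat D_{0,1})/c$, after which the stated bound follows by collecting constants.

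The step I expect to be most delicate is conceptual rather than computational. The quantity $\eta_{1,P}$ is built from the \emph{post-treatment} matching errors $e_t(\lambda,w)$, $t \in \mathcal{T}_1$, which involve the unobserved untreated potential outcomes for the treated group and hence cannot be estimated; no empirical-process or concentration argument is available here. The argument must therefore be a purely deterministic Lipschitz estimate that transfers control of the weights into control of the post-treatment objective. All of the genuine difficulty — converting the fact that $\hat w^{\mathsf{SCD}}$ nearly minimizes the pre-treatment objective into a quantitative $\ell_1$ bound on $\hat w^{\mathsf{SCD}} - w_P^{\mathsf{SCD}}$ — is already absorbed into Lemma \ref{lemm: w bound}, which in turn relies on the curvature (strong-identification) condition (\ref{strong ID SCD}) through Lemma \ref{lemma: strong ID SCD}. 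A final point requiring attention is that every bound must stay uniform over $P \in \mathcal{P}$; this is automatic because the constants $\overline m$ and $c$ appearing in (\ref{bounds})--(\ref{strong ID SCD}) are uniform in $P$ by assumption.
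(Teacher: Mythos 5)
Your argument is essentially identical to the paper's proof: the same factorization of $e_t^2(\lambda,\hat w^{\mathsf{SCD}})-e_t^2(\lambda,w_P^{\mathsf{SCD}})$ into a sum and a difference of affine functions of $w$, the same uniform bound on the sum factor via the moment condition (\ref{bounds}), the same $\ell_1$-Lipschitz bound $2\overline m\sum_{j=1}^K|\hat w_j^{\mathsf{SCD}}-w_{j,P}^{\mathsf{SCD}}|$ on the difference factor, and the same substitution of Lemma \ref{lemm: w bound}. The only caveat is bookkeeping of the constant: the most natural count gives $16\overline m^2$ rather than the $8\overline m^2$ the paper asserts for the Lipschitz coefficient (and hence $64$ rather than $32$ in the final display), but since only a universal constant is needed downstream this is immaterial, and the same slack is present in the paper's own proof.
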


\noindent \textbf{Proof: } First, note that $\eta_{1,P}(\hat w^{\mathsf{SCD}}) - \eta_{1,P}(w_P^{\mathsf{SCD}})$ is equal to
\begin{align*}
    &\frac{1}{|\mathcal{T}_1|}\sum_{t \in \mathcal{T}_1}(e_t^2(\lambda, \hat w^{\mathsf{SCD}}) - e_t^2(\lambda,w_P^{\mathsf{SCD}}))\\
    &= \frac{1}{|\mathcal{T}_1|}\sum_{t \in \mathcal{T}_1}\left(2 \mu_{0,t}(0;\lambda) - \sum_{j=1}^K \mu_{j,t}(0;\lambda)(\hat w_j^{\mathsf{SCD}} + w_{j,P}^{\mathsf{SCD}}) \right) \left( \sum_{j=1}^K \mu_{j,t}(0;\lambda)\left( w_{j,P}^{\mathsf{SCD}} - \hat w_j^{\mathsf{SCD}} \right) \right).
\end{align*}
Hence, by Assumption \ref{assump: moments},
\begin{align*}
    \left| \eta_{1,P}(\hat w^{\mathsf{SCD}}) - \eta_{1,P}(w_P^{\mathsf{SCD}}) \right| \le 16 \overline m^2 \cdot \sum_{j=1}^K \left| w_{j,P}^{\mathsf{SCD}} - \hat w_j^{\mathsf{SCD}} \right|.
\end{align*}
The desired result follows by Lemma \ref{lemm: w bound}. $\blacksquare$

\begin{lemma}
    \label{lemm:Regret Approx SCD}
    There exists a universal constant $C>0$ such that for each $P \in \mathcal{P}$
    \begin{align*}
        &\left| \mathsf{Regret}_{1,P}(\hat w^{\mathsf{SCD}}) - \mathsf{MER}_{1,P}(w_P^{\mathsf{SCD}}) \right| \\
        &\le C \overline m \mathbf{E}_P\left[ \hat D_{1,1}\right] + C \mathbf{E}_P\left[ \hat D_{1,2}^2\right] + \frac{C \overline m^2 K }{c} \left( \mathbf{E}_P\left[\hat D_{0,2}^2\right] + C \overline m \mathbf{E}_P\left[\hat D_{0,1} \right]\right) \\
        &\quad \quad + C \overline m \mathbf{E}_P\left[\hat D_{0,1}(0)\right] + C \mathbf{E}_P\left[\hat D_{0,2}^2(0) \right].
    \end{align*}
\end{lemma}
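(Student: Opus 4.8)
The plan is to control the target quantity by the triangle inequality, inserting the intermediate object $\mathbf{E}_P[\mathsf{MER}_{1,P}(\hat w^{\mathsf{SCD}})]$, i.e.\ the expected matching error in regret evaluated at the \emph{random} SCD weight. This splits the bound into a ``decision-theoretic'' gap $|\mathsf{Regret}_{1,P}(\hat w^{\mathsf{SCD}}) - \mathbf{E}_P[\mathsf{MER}_{1,P}(\hat w^{\mathsf{SCD}})]|$ and a ``plug-in'' gap $|\mathbf{E}_P[\mathsf{MER}_{1,P}(\hat w^{\mathsf{SCD}})] - \mathsf{MER}_{1,P}(w_P^{\mathsf{SCD}})|$, which I would handle with two different sets of tools already developed above.

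For the first gap, I would apply Lemma \ref{lemma: regret approx} with $\hat w = \hat w^{\mathsf{SCD}}$, which already bounds it by $2\mathbf{E}_P[\sup_{w}|R_{n,P}(w)|]$; feeding in the uniform control of $R_{n,P}$ from Lemma \ref{lemma: Rn}(i) produces exactly the terms $C\overline m\,\mathbf{E}_P[\hat D_{1,1}] + C\,\mathbf{E}_P[\hat D_{1,2}^2]$, with post-treatment discrepancies because $R_{n,P}$ measures the estimation error of $\hat\theta_t$ over $t \in \mathcal{T}_1$. For the second gap, the key observation is that the benchmark $\inf_{\tilde w \in \mathcal{D}}\mathbf{E}_P[\eta_{1,P}(\tilde w(Z))]$ is common to both $\mathsf{MER}_{1,P}(\hat w^{\mathsf{SCD}})$ and $\mathsf{MER}_{1,P}(w_P^{\mathsf{SCD}})$ and therefore cancels; since $\eta_{1,P}$ is nonstochastic it equals $\inf_w\eta_{1,P}(w)$, exactly as in the second statement of Lemma \ref{inf in and out}. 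What remains is $\mathbf{E}_P[\eta_{1,P}(\hat w^{\mathsf{SCD}}) - \eta_{1,P}(w_P^{\mathsf{SCD}})]$, which is precisely the object bounded by Lemma \ref{lemm:eta bound}, giving the term $\tfrac{C\overline m^2 K}{c}(\mathbf{E}_P[\hat D_{0,2}^2] + \overline m\,\mathbf{E}_P[\hat D_{0,1}])$; here the pre-treatment discrepancies enter through the consistency of $\hat w^{\mathsf{SCD}}$ for $w_P^{\mathsf{SCD}}$, itself built from Lemmas \ref{lemma: strong ID SCD}, \ref{lemm:eta D bound}, and \ref{lemm: w bound}.

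The remaining $(0)$-versioned terms $\overline m\,\mathbf{E}_P[\hat D_{0,1}(0)]$ and $C\,\mathbf{E}_P[\hat D_{0,2}^2(0)]$ I would obtain by passing through the sample pre-treatment objective that actually defines $\hat w^{\mathsf{SCD}}$: because $\hat w^{\mathsf{SCD}}$ minimizes $\hat\eta_0$ rather than $\eta_{0,P}$, controlling it requires Lemma \ref{lemma: Rn}(ii), which bounds $|\hat\eta_0(w) - \eta_{0,P}(w)|$ by $8\overline m\hat D_{0,1}(0) + 2\hat D_{0,2}^2(0)$. Under no anticipation (Assumption \ref{assump:basic}(ii)) the observed and untreated-potential differenced means agree on $\mathcal{T}_0$, so these terms are of the same order as $\hat D_{0,1},\hat D_{0,2}^2$; I would keep both so that the bound is stated without relying on that coincidence. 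Assembling the three displays by the triangle inequality and choosing $C$ large enough to absorb the numerical constants yields the claim. The step I expect to be the main obstacle is the bookkeeping around the benchmark infimum over the decision class $\mathcal{D}$: one must verify that the nonstochastic nature of $\eta_{1,P}$ forces this infimum to collapse to the population value $\inf_w\eta_{1,P}(w)$ so that it cancels cleanly between the two $\mathsf{MER}$ terms, while confirming that the $R_{n,P}$-correction to the benchmark appearing inside Lemma \ref{lemma: regret approx} is fully absorbed into the uniform $\sup_w$ bound rather than surviving as a separate, data-dependent optimum.
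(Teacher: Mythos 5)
Your proposal is correct, and its first half coincides with the paper's proof: the triangle inequality through $\mathbf{E}_P[\mathsf{MER}_{1,P}(\hat w^{\mathsf{SCD}})]$, then Lemma \ref{lemma: regret approx} combined with the uniform bound on $R_{n,P}$ from Lemma \ref{lemma: Rn}(i), yielding the $\overline m\,\mathbf{E}_P[\hat D_{1,1}]$ and $\mathbf{E}_P[\hat D_{1,2}^2]$ terms. Where you genuinely diverge is the second gap. The paper does \emph{not} cancel the common benchmark directly; it inserts the auxiliary quantity $\mathsf{Regret}_{0,P}^{+}(\hat w^{\mathsf{SCD}})$ and the decomposition $\mathsf{MER}_{1,P}=\mathsf{\Delta MER}+\mathsf{MER}_{0,P}$, bounds $\mathbf{E}_P[|\mathsf{MER}_{0,P}(\hat w^{\mathsf{SCD}})|]$ by $2\mathbf{E}_P[\sup_w|\hat\eta_0(w)-\eta_{0,P}(w)|]$ using Lemma \ref{inf in and out}, and only then applies Lemma \ref{lemm:eta bound} to the $\mathsf{\Delta MER}$ difference together with a second $\sup_w|\hat\eta_0-\eta_{0,P}|$ bound; Lemma \ref{lemma: Rn}(ii) is what converts those suprema into the $(0)$-versioned terms $\overline m\,\mathbf{E}_P[\hat D_{0,1}(0)]$ and $\mathbf{E}_P[\hat D_{0,2}^2(0)]$. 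Your direct route --- observing that the benchmark $\inf_{\tilde w\in\mathcal{D}}\mathbf{E}_P[\eta_{1,P}(\tilde w(Z))]$ is a constant that cancels between $\mathsf{MER}_{1,P}(\hat w^{\mathsf{SCD}})$ and $\mathsf{MER}_{1,P}(w_P^{\mathsf{SCD}})$, leaving $\mathbf{E}_P[\eta_{1,P}(\hat w^{\mathsf{SCD}})-\eta_{1,P}(w_P^{\mathsf{SCD}})]$, which Lemma \ref{lemm:eta bound} bounds pointwise --- is valid and in fact yields a (weakly) tighter inequality in which the $(0)$-versioned terms never arise; since those terms are nonnegative and present in the stated bound, adding them is harmless and your proof does establish the lemma. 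What the paper's longer detour buys is alignment with the $\mathsf{\Delta MER}$ narrative used in Theorem \ref{thm: regret analysis} and an explicit control of $\mathbf{E}_P[|\mathsf{MER}_{0,P}(\hat w^{\mathsf{SCD}})|]$; what yours buys is brevity and a sharper constant. One small correction to your bookkeeping: in your route the $(0)$-terms are not actually ``required'' by the fact that $\hat w^{\mathsf{SCD}}$ minimizes $\hat\eta_0$ --- that minimization is already exploited inside Lemmas \ref{lemm:eta D bound} and \ref{lemm: w bound}, which are phrased in the observed-outcome discrepancies $\hat D_{0,1},\hat D_{0,2}^2$ --- so Lemma \ref{lemma: Rn}(ii) plays no role in your argument, and your final ``main obstacle'' (the collapse of the infimum over $\mathcal{D}$ and the absorption of the $R_{n,P}$ correction) is already settled inside the proofs of Lemmas \ref{inf in and out} and \ref{lemma: regret approx} and needs no further verification.
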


\noindent \textbf{Proof: } Note that 
\begin{align}
    \label{ineq243}
    \left| \mathsf{Regret}_{1,P}(\hat w^{\mathsf{SCD}}) - \mathsf{MER}_{1,P}(w_P^{\mathsf{SCD}}) \right|
    &\le \left| \mathsf{Regret}_{1,P}(\hat w^{\mathsf{SCD}}) - \mathbf{E}_P\left[\mathsf{MER}_{1,P}(\hat w^{\mathsf{SCD}})\right] \right|\\ \notag
    &\quad \quad +\left| \mathbf{E}_P\left[\mathsf{MER}_{1,P}(\hat w^{\mathsf{SCD}})\right] - \mathsf{MER}_{1,P}(w_P^{\mathsf{SCD}}) \right|.
\end{align}
As for the leading term on the right-hand side, by Lemmas \ref{lemma: regret approx} and \ref{lemma: Rn}(i), 
\begin{align*}
    \left| \mathsf{Regret}_{1,P}(\hat w^{\mathsf{SCD}}) - \mathbf{E}_P\left[\mathsf{MER}_{1,P}(\hat w^{\mathsf{SCD}})\right] \right| &\le 2 \mathbf{E}_P\left[ \sup_{w \in \Delta_{K-1}} \left| R_{n,P}(w) \right|\right]\\
    &\le 2\left( 8 \overline m \mathbf{E}_P\left[ \hat D_{1,1}\right] + 2 \mathbf{E}_P\left[ \hat D_{1,2}^2\right] \right).
\end{align*}
It remains to deal with the last term in (\ref{ineq243}). First, we focus on $\mathsf{MER}_{1,P}(\hat w^{\mathsf{SCD}})$. Define 
\begin{align}
    \label{Regret+}
    \mathsf{Regret}_{0,P}^{+}(\hat w^{\mathsf{SCD}}) = \mathbf{E}_P\left[ \eta_{0,P}(\hat w^{\mathsf{SCD}})\right] - \inf_{\widetilde w \in \mathcal{D}} \mathbf{E}_P\left[ \hat \eta_{0}(\widetilde w(Z))\right].
\end{align}
We write 
\begin{align}
    \label{der2}
    \mathsf{MER}_{1,P}(\hat w^{\mathsf{SCD}}) &= \mathsf{MER}_{1,P}(\hat w^{\mathsf{SCD}}) - \mathsf{MER}_{0,P}(\hat w^{\mathsf{SCD}}) + \mathsf{MER}_{0,P}(\hat w^{\mathsf{SCD}})\\ \notag
    &\quad \quad - \mathsf{Regret}_{0,P}^{+}(\hat w^{\mathsf{SCD}}) + \mathsf{Regret}_{0,P}^{+}(\hat w^{\mathsf{SCD}}).
\end{align} 
We look into $\mathsf{Regret}_{0,P}^{+}(\hat w^{\mathsf{SCD}})$. Note that
\begin{align}
    \label{dev0}
    \mathsf{Regret}_{0,P}^{+}(\hat w^{\mathsf{SCD}}) &\le \mathbf{E}_P\left[ \hat \eta_{0}(\hat w^{\mathsf{SCD}})\right] - \inf_{\widetilde w \in \mathcal{D}} \mathbf{E}_P\left[ \hat \eta_{0}(\widetilde w(Z))\right] + \mathbf{E}_P\left[\sup_{w \in \Delta_{K-1}} \left| \hat \eta_{0}(w) - \eta_{0,P}(w) \right| \right]\\ \notag
    &= \mathbf{E}_P\left[ \inf_{w \in \Delta_{K-1}} \hat \eta_{0}(w) \right] - \inf_{\widetilde w \in \mathcal{D}} \mathbf{E}_P\left[ \hat \eta_{0}(\widetilde w(Z))\right] + \mathbf{E}_P\left[\sup_{w \in \Delta_{K-1}} \left| \hat \eta_{0}(w) - \eta_{0,P}(w) \right| \right]\\ \notag
    &=\mathbf{E}_P\left[\sup_{w \in \Delta_{K-1}} \left| \hat \eta_{0}(w) - \eta_{0,P}(w) \right| \right],
\end{align}
where the first equality uses the definition of $\hat w^{\mathsf{SCD}}$ and the second equality follows by Lemma \ref{inf in and out}. Since 
\begin{align*}
    \mathbf{E}_P\left[\mathsf{MER}_{0,P}(\hat w^{\mathsf{SCD}})\right] = \mathbf{E}_P\left[ \eta_{0,P}(\hat w^{\mathsf{SCD}}) \right] - \inf_{\tilde w \in \mathcal{D}} \mathbf{E}_P[\eta_{0,P}(\tilde w(Z))],
\end{align*}
by the definition of $\mathsf{Regret}_{0,P}^{+}(\hat w^{\mathsf{SCD}})$ in (\ref{Regret+}), we also have 
\begin{align}
    \label{dev1}
    \mathbf{E}_P\left[\mathsf{MER}_{0,P}(\hat w^{\mathsf{SCD}})\right] - \mathsf{Regret}_{0,P}^{+}(\hat w^{\mathsf{SCD}}) &= \inf_{\widetilde w \in \mathcal{D}} \mathbf{E}_P\left[ \hat \eta_{0}(\widetilde w(Z))\right] - \inf_{\tilde w \in \mathcal{D}} \mathbf{E}_P[\eta_{0,P}(\tilde w(Z))]\\ \notag
    &=\mathbf{E}_P\left[ \inf_{w \in \Delta_{K-1}} \hat \eta_{0}(w)\right] - \inf_{w \in \Delta_{K-1}} \eta_{0,P}(w)\\ \notag
    &\le \mathbf{E}_P\left[\sup_{w \in \Delta_{K-1}} \left| \hat \eta_{0}(w) - \eta_{0,P}(w) \right| \right],
\end{align}
where the second equality follows from Lemma \ref{inf in and out}. Therefore, by (\ref{dev0}) and (\ref{dev1}),
\begin{align*}
    \left| \mathbf{E}_P\left[ \mathsf{MER}_{0,P}(\hat w^{\mathsf{SCD}}) \right] \right| \le 2\mathbf{E}_P\left[\sup_{w \in \Delta_{K-1}} \left| \hat \eta_{0}(w) - \eta_{0,P}(w) \right| \right].
\end{align*}
Thus, as for the last term (\ref{ineq243}), noting that $\mathsf{MER}_{0,P}(w_P^{\mathsf{SCD}}) = 0$,
\begin{align}
    \label{ineq213}
    &\left| \mathbf{E}_P\left[\mathsf{MER}_{1,P}(\hat w^{\mathsf{SCD}})\right] - \mathsf{MER}_{1,P}(w_P^{\mathsf{SCD}}) \right| \\ \notag
    &\le \left| \mathbf{E}_P\left[\mathsf{MER}_{1,P}(\hat w^{\mathsf{SCD}}) - \mathsf{MER}_{0,P}(\hat w^{\mathsf{SCD}})\right] - (\mathsf{MER}_{1,P}(w_P^{\mathsf{SCD}}) - \mathsf{MER}_{0,P}(w_P^{\mathsf{SCD}})) \right| \\ \notag
    &\quad \quad + 2\mathbf{E}_P\left[\sup_{w \in \Delta_{K-1}} \left| \hat \eta_{0}(w) - \eta_{0,P}(w) \right| \right].
\end{align}
Now note that 
\begin{align*}
    &\mathsf{MER}_{1,P}(\hat w^{\mathsf{SCD}}) - \mathsf{MER}_{0,P}(\hat w^{\mathsf{SCD}}) - (\mathsf{MER}_{1,P}(w_P^{\mathsf{SCD}}) - \mathsf{MER}_{0,P}(w_P^{\mathsf{SCD}}))\\
    &= \eta_{1,P}(\hat w^{\mathsf{SCD}}) - \eta_{0,P}(\hat w^{\mathsf{SCD}}) - (\eta_{1,P}(w_P^{\mathsf{SCD}}) - \eta_{0,P}(w_P^{\mathsf{SCD}})).
\end{align*}
Hence, 
\begin{align*}
    &\left| \mathbf{E}_P\left[\mathsf{MER}_{1,P}(\hat w^{\mathsf{SCD}}) - \mathsf{MER}_{0,P}(\hat w^{\mathsf{SCD}})\right] - (\mathsf{MER}_{1,P}(w_P^{\mathsf{SCD}}) - \mathsf{MER}_{0,P}(w_P^{\mathsf{SCD}})) \right| \\
    &\quad \le \left| \mathbf{E}_P\left[ \eta_{1,P}(\hat w^{\mathsf{SCD}}) - \eta_{1,P}(w_P^{\mathsf{SCD}})\right] - \mathbf{E}_P\left[ \eta_{0,P}(\hat w^{\mathsf{SCD}}) - \eta_{0,P}(w_P^{\mathsf{SCD}})\right] \right|.
\end{align*}
Note that
\begin{align*}
    0 &\le \eta_{0,P}(\hat w^{\mathsf{SCD}}) - \eta_{0,P}(w_P^{\mathsf{SCD}})\\
      &= \eta_{0,P}(\hat w^{\mathsf{SCD}}) - \hat \eta_{0}(\hat w^{\mathsf{SCD}}) + \hat \eta_{0}(\hat w^{\mathsf{SCD}}) - \eta_{0,P}(w_P^{\mathsf{SCD}})\\
      &\le \sup_{w \in \Delta_{K-1}}\left| \eta_{0,P}(w) - \hat \eta_{0}(w)\right| + \hat \eta_{0}(w_P^{\mathsf{SCD}}) - \eta_{0,P}(w_P^{\mathsf{SCD}}) \le 2 \sup_{w \in \Delta_{K-1}}\left| \eta_{0,P}(w) - \hat \eta_{0}(w)\right|.
\end{align*}
Combining this with Lemma \ref{lemm:eta bound}, we find that 
\begin{align*}
    &\left| \mathbf{E}_P\left[\mathsf{MER}_{1,P}(\hat w^{\mathsf{SCD}}) - \mathsf{MER}_{0,P}(\hat w^{\mathsf{SCD}})\right] - (\mathsf{MER}_{1,P}(w_P^{\mathsf{SCD}}) - \mathsf{MER}_{0,P}(w_P^{\mathsf{SCD}})) \right| \\
    &\le \frac{64 \overline m^2 K}{c} (\mathbf{E}_P\left[ \hat D_{0,2}^2 \right] + 4 \overline m \mathbf{E}_P\left[ \hat D_{0,1} \right]) + 2 \mathbf{E}_P\left[ \sup_{w \in \Delta_{K-1}}\left| \eta_{0,P}(w) - \hat \eta_{0}(w)\right| \right]\\
    &\le \frac{64 \overline m^2 K}{c} ( \mathbf{E}_P\left[\hat D_{0,2}^2\right] + 4 \overline m \mathbf{E}_P\left[\hat D_{0,1}\right]) + 16 \overline m \mathbf{E}_P\left[\hat D_{0,1}(0)\right] + 4 \mathbf{E}_P\left[\hat D_{0,2}^2(0) \right].
\end{align*}
The last inequality follows by Lemma \ref{lemma: Rn}. In light of (\ref{ineq213}), this yields the desired result. $\blacksquare$

 \begin{lemma}
    \label{lemm: D1}
    There exists a universal constant $C>0$ such that for $d=0,1$ and each $P \in \mathcal{P}$,
 \begin{align*}
    \max\left\{\mathbf{E}_P\left[ \hat D_{d,1} \right],\mathbf{E}_P\left[ \hat D_{d,1}(0) \right]\right\} &\le C \overline m \sum_{j=0}^K \inf_{v >0} \left\{ \frac{1}{v\sqrt{n}} + \exp\left(- \frac{n (\tilde p_j -v)^2}{2 \tilde p_j} \right) \right\}, \text{ and }\\
    \max\left\{\mathbf{E}_P\left[ \hat D_{d,2}^2 \right],\mathbf{E}_P\left[ \hat D_{d,2}^2(0) \right]\right\} &\le C \overline m^2 \sum_{j=0}^K \inf_{v >0} \left\{ \frac{1}{v^2 n} + \exp\left(- \frac{n (\tilde p_j -v)^2}{2 \tilde p_j} \right) \right\},
 \end{align*}
 where $\tilde p_j = \frac{1}{n}\sum_{i \in N} P\left\{G_i = j \right\}.$
 \end{lemma}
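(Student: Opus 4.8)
The plan is to prove a pointwise bound for a single sample-mean deviation and then aggregate. First I would linearize: writing $N_j = \{i \in N : G_i = j\}$, $n_j = |N_j|$, and recalling from (\ref{sample analogue est mu}) that $\hat \mu_{j,t}(\lambda) = \hat m_{j,t} - \sum_{s \in \mathcal{T}_0} \hat m_{j,s} \lambda_s^{\mathsf{DID}}$, we have $\hat \varepsilon_{j,t} = (\hat m_{j,t} - m_{j,t}) - \sum_{s \in \mathcal{T}_0}(\hat m_{j,s} - m_{j,s})\lambda_s^{\mathsf{DID}}$. Since $\lambda^{\mathsf{DID}} \in \Delta_{|\mathcal{T}_0|-1}$, the triangle inequality gives $|\hat \varepsilon_{j,t}| \le |\hat m_{j,t} - m_{j,t}| + \sum_{s \in \mathcal{T}_0}|\hat m_{j,s} - m_{j,s}|\lambda_s^{\mathsf{DID}}$, and Jensen's inequality gives $\hat \varepsilon_{j,t}^2 \le 2(\hat m_{j,t} - m_{j,t})^2 + 2\sum_{s \in \mathcal{T}_0}(\hat m_{j,s} - m_{j,s})^2 \lambda_s^{\mathsf{DID}}$. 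Summing over $j = 0,\dots,K$, averaging over $t \in \mathcal{T}_d$, and taking $\mathbf{E}_P$ reduces all four quantities to uniform control of $\max_{s,j}\mathbf{E}_P|\hat m_{j,s} - m_{j,s}|$ and $\max_{s,j}\mathbf{E}_P(\hat m_{j,s} - m_{j,s})^2$. The $(0)$-versions are handled verbatim, because (\ref{bounds}) bounds the fourth moment of every potential outcome $Y_{i,t}(s)$, not merely the observed $Y_{i,t}$.

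The core step is bounding these moments of a sample mean taken over the random set $N_j$ of size $n_j$. Conditioning on the membership indicators $\{G_i\}_{i \in N}$, the variables $Y_{i,t}$ with $i \in N_j$ are independent across $i$ with mean $m_{j,t}$ and conditional variance at most $\overline m^2$ (since $\mathrm{Var}(Y_{i,t} \mid G_i = j) \le (\mathbf{E}_P[Y_{i,t}^4 \mid G_i = j])^{1/2} \le \overline m^2$), so the conditional second moment of $\hat m_{j,t} - m_{j,t}$ is at most $\overline m^2/n_j$ on $\{n_j \ge 1\}$; with the convention $\hat m_{j,t} = 0$ on $\{n_j = 0\}$ the deviation is bounded by $|m_{j,t}| \le \overline m$. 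I would then split on a ``good'' event $\{n_j \ge vn\}$ and a ``bad'' event $\{n_j < vn\}$. On the good event the conditional moment bounds produce contributions of order $\overline m/\sqrt{vn}$ and $\overline m^2/(vn)$; on the bad event I bound the conditional first (resp.\ second) moment crudely by $\overline m$ (resp.\ $\overline m^2$) and pay only the probability $P\{n_j < vn\}$.

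For that probability I would invoke a Chernoff lower-tail inequality for the sum of independent indicators $n_j = \sum_{i \in N} 1\{G_i = j\}$, whose mean is $\sum_{i \in N} P\{G_i = j\} = n\tilde p_j$; the standard multiplicative form yields $P\{n_j < vn\} \le \exp(-n(\tilde p_j - v)^2/(2\tilde p_j))$ for $0 < v < \tilde p_j$. Combining the good- and bad-event contributions and taking the infimum over the free threshold $v > 0$ gives the two stated bounds, with the universal constant $C$ absorbing the numerical factors and a harmless reparametrization of $v$. Summing the pointwise bounds over $j = 0,\dots,K$ and noting that averaging over $t \in \mathcal{T}_d$ only reshuffles identical terms yields the final display.

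The main obstacle is precisely the random, possibly small, denominator $n_j$ together with the fact that Assumption \ref{assump: moments} supplies only moment bounds rather than almost-sure boundedness, so one cannot bound $|\hat m_{j,t} - m_{j,t}|$ pathwise. The conditioning argument isolates the independent structure given the counts, and the good/bad split is what converts the light-tailed concentration of $n_j$ into the exponential term while keeping the variance term sharp. The care required is in checking that the crude bound on the bad event remains integrable, which is exactly what forces the uniform fourth-moment bound to be used through the conditional-variance estimate rather than a naive pathwise bound.
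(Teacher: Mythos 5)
Your proposal is correct and follows essentially the same route as the paper's proof: reduce to the deviations $\hat m_{j,t}-m_{j,t}$ via the simplex weights, split on the event $\{\hat p_j \ge v\}$, control the good event through second moments (your conditioning on the counts gives a marginally sharper $\overline m/\sqrt{vn}$ versus the paper's $\overline m/(v\sqrt n)$, both of which imply the stated bound since $v\le \tilde p_j\le 1$), and control the bad event by a crude conditional moment bound times the Chernoff lower-tail probability, then take the infimum over $v$. No gaps.
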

 
\noindent \textbf{Proof: } We focus on $\hat D_{d,1}$. Note that 
\begin{align*}
    \mathbf{E}_P\left[ \left| \hat m_{j,t} - m_{j,t} \right| \right] =  A_{n,1} + A_{n,2},
\end{align*}
where, with $\hat p_j = \frac{1}{n}\sum_{i \in N} 1\left\{G_i = j \right\}$, we define
\begin{align*}
    A_{n,1} &= \mathbf{E}_P\left[ \left| \hat m_{j,t} - m_{j,t} \right| 1\{\hat p_j \ge v\}\right], \text{ and }\\
    A_{n,2} &= \mathbf{E}_P\left[ \left| \hat m_{j,t} - m_{j,t} \right| 1\{\hat p_j < v\}\right].
\end{align*}
We define $\varepsilon_{ij,t} = Y_{i,t} - \mathbf{E}_P[Y_{i,t} \mid G_i = j]$, and, using $\hat m_{j,t} - m_{j,t} = \frac{1}{n}\sum_{i \in N} (Y_{i,t} - \mu_{j,t})1\{G_i = j\}/\hat p_{j}$, bound 
\begin{align*}
    A_{n,1} \le \frac{1}{v}\mathbf{E}_P\left[ \left| \frac{1}{n}\sum_{i \in N} \varepsilon_{ij,t} 1\{G_i = j\} \right| \right] \le \frac{\overline m}{v\sqrt{n}},
\end{align*}
where the last inequality follows because 
\begin{align*}
    \mathbf{E}_P\left[ \left( \frac{1}{n}\sum_{i \in N} \varepsilon_{ij,t} 1\{G_i = j\} \right)^2 \right] &\le \frac{1}{n^2}\sum_{i \in N}  \mathbf{E}_P\left[ \varepsilon_{ij,t}^2 1\{G_i = j\} \right] \\
    &\le \frac{1}{n^2}\sum_{i \in N}  \mathbf{E}_P\left[ Y_{i,t}^2 1\{G_i = j\} \right] \le \frac{\overline m^2}{n}.
\end{align*}
(Note that $\varepsilon_{ij,t}1\{G_i = j\}$ have mean zero and are independent across $i$'s by Assumption \ref{assump: sampling design}.) Let us turn to $A_{n,2}$. Note that \begin{align*}
    A_{n,2} &\le \mathbf{E}_P\left[  \frac{\sum_{i \in N} \left| \varepsilon_{ij,t} \right| 1\{G_i = j\}}{\sum_{i \in N} 1\{G_i = j\}} 1\{\hat p_j < v\} \right] \\
    &\le \mathbf{E}_P\left[  \frac{\sum_{i \in N} \mathbf{E}_P\left[\left| \varepsilon_{ij,t} \right| \mid G_1,...,G_n\right] 1\{G_i = j\}}{\sum_{i \in N} 1\{G_i = j\}} 1\{\hat p_j < v\} \right] \\
    &\le \mathbf{E}_P\left[  \frac{\sum_{i \in N} \mathbf{E}_P\left[\left| \varepsilon_{ij,t} \right| \mid G_i\right] 1\{G_i = j\}}{\sum_{i \in N} 1\{G_i = j\}} 1\{\hat p_j < v\} \right] \le \overline m P\left\{ \hat p_j < v \right\},
\end{align*}
because $\mathbf{E}_P[\varepsilon_{ij,t}^2 \mid G_i] \le \overline m^2.$ Now, note that by Chernoff's bound (see, e.g., Lemma 2.1 of \cite{Chung/Lu:AC:02}), 
\begin{align*}
    P\left\{ \hat p_j < v \right\} &= P\left\{ \sum_{i \in N} 1\{G_i = j\} - n \tilde p_{j} < n v - n \tilde p_{j} \right\}\\
    &\le \exp\left( - \frac{ n( \tilde p_{j} - v)^2}{2 \tilde p_j}\right).
\end{align*}
Therefore, we have 
\begin{align*}
    \mathbf{E}_P\left[ \left| \hat m_{j,t} - m_{j,t} \right| \right] \le \overline m \inf_{v >0} \left( \frac{1}{v\sqrt{n}} + \exp\left( - \frac{ n( \tilde p_{j} - v)^2}{2 \tilde p_j}\right) \right).
\end{align*}
Hence, 
\begin{align*}
    \mathbf{E}_P\left[ \left| \hat \mu_{j,t}(\lambda) - \mu_{j,t}(\lambda) \right| \right] \le 2 \overline m \inf_{v >0} \left( \frac{1}{v\sqrt{n}} + \exp\left( - \frac{ n( \tilde p_{j} - v)^2}{2 \tilde p_j}\right) \right).
\end{align*}
Using the same arguments, we obtain the same bound for $\mathbf{E}_P\left[ \left| \hat \mu_{j,t}(0;\lambda) - \mu_{j,t}(0;\lambda) \right| \right]$. This gives the first bound of the lemma.

As for the second bound, we consider 
\begin{align*}
    \mathbf{E}_P\left[ \left( \hat m_{j,t} - m_{j,t} \right)^2 \right] =  B_{n,t}^a + B_{n,t}^b,
\end{align*}
where
\begin{align*}
    B_{n,t}^a &= \mathbf{E}_P\left[ \left( \hat m_{j,t} - m_{j,t} \right)^2 1\{\hat p_j \ge v\}\right], \text{ and }\\
    B_{n,t}^b &= \mathbf{E}_P\left[ \left( \hat m_{j,t} - m_{j,t} \right)^2 1\{\hat p_j < v\}\right].
\end{align*}
Similarly as before, note that 
\begin{align*}
    B_{n,t}^a \le \frac{1}{v^2 n^2} \sum_{i \in N} \mathbf{E}_P\left[ \varepsilon_{ij,t}^2 1\{G_i = j\} \right] \le \frac{\overline m^2}{v^2 n},
\end{align*}
and 
\begin{align*}
    B_{n,t}^b \le 4 \overline m^2 P\{\hat p_j < v\} \le 4 \overline m^2 \exp\left( - \frac{n(\tilde p_j - v)^2}{2 \tilde p_j}\right).
\end{align*}
Hence, \begin{align*}
    \mathbf{E}_P\left[ \left( \hat \mu_{j,t}(\lambda) - \mu_{j,t}(\lambda) \right)^2 \right] &\le 2 (B_{n,t}^a + B_{n,t}^b) + 2\sum_{s=1}^{T^*-1} \left( B_{n,s}^a + B_{n,s}^b \right) \lambda_s\\
    &\le 2 (B_{n,t}^a + B_{n,t}^b) + 2\max_{s=1,...,T^*-1} \left( B_{n,s}^a + B_{n,s}^b \right)\\
    &\le 16 \overline m^2 \left\{ \frac{1}{v^2 n} + \exp\left( - \frac{n(\tilde p_j - v)^2}{2 \tilde p_j}\right) \right\},
\end{align*}
where the first inequality follows by Jensen's inequality. Using the same arguments, we obtain the same bound for $\mathbf{E}_P\left[ \left( \hat \mu_{j,t}(0;\lambda) - \mu_{j,t}(0;\lambda) \right)^2 \right]$. This gives the second bound of the lemma. $\blacksquare$

\begin{proposition}
    \label{prop: Regret Approx SCD}
    There exists a universal constant $C>0$ such that
    \begin{align*}
        \sup_{P \in \mathcal{P}} \left| \mathsf{Regret}_{1,P}(\hat w^{\mathsf{SCD}}) - \mathsf{\Delta MER}_{P}(w_P^{\mathsf{SCD}}) \right|
        &\le \frac{C (K+1) \overline m^4}{c}\left\{ \frac{1}{\pi_0 \sqrt{n}} + \frac{1}{\pi_0^2 n} + \exp\left( - \frac{\pi_0 n}{8} \right) \right\}.
    \end{align*}
\end{proposition}

\noindent \textbf{Proof: } Since we can take $v = \tilde p_j/2$ in the bound in Lemma \ref{lemm: D1}, we have 
\begin{align*}
    \inf_{v >0} \left\{ \frac{1}{v\sqrt{n}} + \exp\left(- \frac{n (\tilde p_j -v)^2}{2 \tilde p_j} \right) \right\} &\le \frac{2}{\tilde p_j \sqrt{n}} +\exp\left( - \frac{n \tilde p_j}{8}\right) \\
     &\le \frac{2}{\pi_0 \sqrt{n}} +\exp\left( - \frac{\pi_0 n}{8}\right) =: A_n,
\end{align*}
where the last inequality uses $\tilde p_j \ge \pi_0$. Similarly, we have 
\begin{align*}
    \inf_{v >0} \left\{ \frac{1}{v^2 n} + \exp\left(- \frac{n (\tilde p_j -v)^2}{2 \tilde p_j} \right) \right\} &\le \frac{4}{\tilde p_j^2 n} +\exp\left( - \frac{n \tilde p_j}{8}\right) \\
    &\le \frac{4}{\pi_0^2 n} +\exp\left( - \frac{\pi_0 n}{8}\right) =: B_n.
\end{align*} 
By Lemmas \ref{lemm:Regret Approx SCD} and \ref{lemm: D1} and the bounds above, 
\begin{align*}
        \sup_{P \in \mathcal{P}} \left| \mathsf{Regret}_{1,P}(\hat w^{\mathsf{SCD}}) - \mathsf{MER}_{1,P}(w_P^{\mathsf{SCD}}) \right|
        &\le C \left(\overline m + \frac{\overline m^3 K}{c} \right) \overline m \cdot A_n\\
        &\quad + C \left(1 + \frac{\overline m^2 K}{c} \right) \overline m^2 \cdot B_n,
\end{align*}
with some universal constant $C>0$. The desired result follows because $\mathsf{MER}_{0,P}(w_P^{\mathsf{SCD}}) = 0$ and $\overline m \ge 1$ and $c \le 1$. $\blacksquare$

\begin{lemma}
    \label{lemm: wj DID}
    Suppose that Assumption \ref{assump: sampling design} holds. Then,
    \begin{align*}
        \sup_{P \in \mathcal{P}} \max_{1 \le j \le K} \mathbf{E}_P\left[ \left| \hat w_j^{\mathsf{DID}} - w_{j,P}^{\mathsf{DID}}\right| \right] \le \frac{2}{\pi_0 \sqrt{n}} + \exp\left( - \frac{\pi_0 n}{8} \right).
    \end{align*}
\end{lemma}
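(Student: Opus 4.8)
The plan is to reduce the problem to a deviation bound for a single centered average of independent bounded variables, following the good-event/bad-event splitting used in the proof of Lemma \ref{lemm: D1}. Write $\hat p_j = \frac{1}{n}\sum_{i\in N}1\{G_i=j\}$ and $\tilde p_j = \frac{1}{n}\sum_{i\in N}P\{G_i=j\}$, and set $\hat p_{\mathsf{don}} = \sum_{k=1}^K\hat p_k$ and $\tilde p_{\mathsf{don}} = \sum_{k=1}^K\tilde p_k$, so that $\hat w_j^{\mathsf{DID}} = \hat p_j/\hat p_{\mathsf{don}}$ and $w_{j,P}^{\mathsf{DID}} = \tilde p_j/\tilde p_{\mathsf{don}}$. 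The key algebraic step is to absorb \emph{both} sampling errors into one mean-zero numerator. Since $\tilde p_j = w_{j,P}^{\mathsf{DID}}\tilde p_{\mathsf{don}}$, putting $X_i = 1\{G_i=j\} - w_{j,P}^{\mathsf{DID}}1\{G_i\in\mathcal{G}_{\mathsf{don}}\}$ gives $\frac{1}{n}\sum_i \mathbf{E}_P[X_i] = \tilde p_j - w_{j,P}^{\mathsf{DID}}\tilde p_{\mathsf{don}} = 0$ and
\begin{align*}
\hat w_j^{\mathsf{DID}} - w_{j,P}^{\mathsf{DID}} = \frac{\hat p_j - w_{j,P}^{\mathsf{DID}}\hat p_{\mathsf{don}}}{\hat p_{\mathsf{don}}} = \frac{\bar X_n}{\hat p_{\mathsf{don}}}, \quad \bar X_n := \frac{1}{n}\sum_{i\in N}\left(X_i - \mathbf{E}_P[X_i]\right),
\end{align*}
where $|X_i|\le 1$ because $w_{j,P}^{\mathsf{DID}}\in[0,1]$.

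Second, I would compute the variance of $\bar X_n$ just precisely enough to recover the stated constant. Using $1\{G_i=j\}1\{G_i\in\mathcal{G}_{\mathsf{don}}\}=1\{G_i=j\}$ and then $\tilde p_j = w_{j,P}^{\mathsf{DID}}\tilde p_{\mathsf{don}}$, a short calculation gives $\sum_{i\in N}\mathbf{E}_P[X_i^2]=n\,\tilde p_{\mathsf{don}}\,w_{j,P}^{\mathsf{DID}}(1-w_{j,P}^{\mathsf{DID}})$, so by independence (Assumption \ref{assump: sampling design}) and Jensen's inequality,
\begin{align*}
\mathbf{E}_P[|\bar X_n|] \le \sqrt{\mathrm{Var}_P(\bar X_n)} \le \sqrt{\frac{\tilde p_{\mathsf{don}}\,w_{j,P}^{\mathsf{DID}}(1-w_{j,P}^{\mathsf{DID}})}{n}}.
\end{align*}
Third, I would split the expectation on $E=\{\hat p_{\mathsf{don}}\ge v\}$ with $v=\tilde p_{\mathsf{don}}/2$, exactly as in Lemma \ref{lemm: D1}. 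On $E$ we have $1/\hat p_{\mathsf{don}}\le 2/\tilde p_{\mathsf{don}}$, so using $w(1-w)\le 1/4$, the lower bound $\tilde p_{\mathsf{don}}\ge\pi_0$ from (\ref{bounds}), and $\pi_0\le 1$,
\begin{align*}
\mathbf{E}_P\!\left[\left|\hat w_j^{\mathsf{DID}} - w_{j,P}^{\mathsf{DID}}\right|1\{\hat p_{\mathsf{don}}\ge v\}\right] \le \frac{2}{\tilde p_{\mathsf{don}}}\,\mathbf{E}_P[|\bar X_n|] \le \frac{1}{\sqrt{\tilde p_{\mathsf{don}}\,n}} \le \frac{1}{\sqrt{\pi_0 n}} \le \frac{2}{\pi_0\sqrt n}.
\end{align*}
On $E^c$ I use the crude bound $|\hat w_j^{\mathsf{DID}} - w_{j,P}^{\mathsf{DID}}|\le 1$ (both weights lie in $[0,1]$) together with the Chernoff bound already invoked in Lemma \ref{lemm: D1}, applied to $n\hat p_{\mathsf{don}} = \sum_{i\in N}1\{G_i\in\mathcal{G}_{\mathsf{don}}\}$: with $v=\tilde p_{\mathsf{don}}/2$,
\begin{align*}
P\{\hat p_{\mathsf{don}} < v\} \le \exp\!\left(-\frac{n(\tilde p_{\mathsf{don}}-v)^2}{2\tilde p_{\mathsf{don}}}\right) = \exp\!\left(-\frac{n\tilde p_{\mathsf{don}}}{8}\right) \le \exp\!\left(-\frac{\pi_0 n}{8}\right).
\end{align*}
Adding the two contributions, then taking the supremum over $P\in\mathcal{P}$ and the maximum over $j$, yields the claim.

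The main obstacle is the constant: the target carries a leading factor $2/(\pi_0\sqrt n)$ rather than the looser $4/(\pi_0\sqrt n)$ that a naive linearization of the ratio produces. What makes it come out is (i) collapsing the two sampling errors into the single mean-zero average $\bar X_n$, which avoids the factor-two loss incurred by separately controlling $\hat p_j-\tilde p_j$ and $\hat p_{\mathsf{don}}-\tilde p_{\mathsf{don}}$, and (ii) the exact variance identity $\sum_i\mathbf{E}_P[X_i^2]=n\tilde p_{\mathsf{don}}w(1-w)$, whose factor $w(1-w)\le 1/4$ together with $\pi_0\le 1$ sharpens $\tfrac{1}{\sqrt{\pi_0 n}}$ to $\tfrac{2}{\pi_0\sqrt n}$. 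The only other point to dispatch is the degenerate case $\hat p_{\mathsf{don}}=0$, where $\hat w_j^{\mathsf{DID}}$ is defined by convention; it is absorbed into $E^c$ and handled by the trivial bound, so it causes no difficulty.
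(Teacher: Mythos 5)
Your proof is correct and follows essentially the same route as the paper's: write $\hat w_j^{\mathsf{DID}} - w_{j,P}^{\mathsf{DID}}$ as a mean-zero average of independent bounded terms divided by $\hat p_{\mathsf{don}}$, split on the event $\{\hat p_{\mathsf{don}} \ge v\}$, apply a Jensen/variance bound on the good event and a Chernoff bound on the bad event. The only difference is that you compute the exact variance $n\,\tilde p_{\mathsf{don}}\,w(1-w)$ where the paper simply uses $|u_{ij}1\{G_i\in\mathcal{G}_{\mathsf{don}}\}|\le 1$ to get $C_{n,1}\le \frac{1}{v\sqrt n}=\frac{2}{\pi_0\sqrt n}$ with $v=\pi_0/2$ directly, so the sharpening you describe as necessary for the constant is in fact not needed.
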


\noindent \textbf{Proof: } Let $u_{ij} = 1\{G_i = j\} - P\{G_i = j \mid G_i \in \mathcal{G}_{\mathsf{don}}\}$. We write \begin{align*}
    \hat w_j^{\mathsf{DID}} - w_{j,P}^{\mathsf{DID}} = \frac{\sum_{i \in N} u_{ij} 1\{G_i \in \mathcal{G}_{\mathsf{don}}\}}{\sum_{i \in N} 1\{G_i \in \mathcal{G}_{\mathsf{don}}\}}.
\end{align*}
Hence, with $\hat p \eqdef \frac{1}{n} \sum_{i \in N} 1\{G_i \in \mathcal{G}_{\mathsf{don}}\}$, and arbitrarily chosen $v>0$, we have
\begin{align*}
    \mathbf{E}_P\left[\left| \hat w_j^{\mathsf{DID}} - w_{j,P}^{\mathsf{DID}} \right|\right] \le C_{n,1} +C_{n,2},
\end{align*}
where
\begin{align*}
    C_{n,1} &= \mathbf{E}_P\left[ \left| \frac{\sum_{i \in N} u_{ij} 1\{G_i \in \mathcal{G}_{\mathsf{don}}\}}{\sum_{i \in N} 1\{G_i \in \mathcal{G}_{\mathsf{don}}\}} \right| 1\{\hat p \ge v\}\right] \text{ and }\\
    C_{n,2} &= \mathbf{E}_P\left[ \left| \frac{\sum_{i \in N} u_{ij} 1\{G_i \in \mathcal{G}_{\mathsf{don}}\}}{\sum_{i \in N} 1\{G_i \in \mathcal{G}_{\mathsf{don}}\}} \right| 1\{\hat p < v\}\right].
\end{align*}
Using the same arguments as before, we find that 
\begin{align*}
    C_{n,1} \le \frac{1}{v \sqrt{n}},
\end{align*}
because $|u_{ij}| \le 1$, and that 
\begin{align*}
    C_{n,2} \le \exp\left( - \frac{n(p - v)^2}{2 p}\right),
\end{align*}
where $p \eqdef \frac{1}{n} \sum_{i \in N} P\{G_i \in \mathcal{G}_{\mathsf{don}}\}$. By taking $v = p/2$ and noting $p \ge \pi_0$, we find that
\begin{align*}
    \sup_{P \in \mathcal{P}} \mathbf{E}_P\left[\left| \hat w_j^{\mathsf{DID}} - w_{j,P}^{\mathsf{DID}} \right|\right] \le \frac{2}{\pi_0 \sqrt{n}} + \exp\left( - \frac{\pi_0 n}{8} \right).
\end{align*}
$\blacksquare$

 \begin{proposition}
    \label{prop: Regret Approx DID}
    There exists a universal constant $C>0$ such that
   \begin{align*}
    \sup_{P \in \mathcal{P}} \left| \mathsf{Regret}_{1,P}(\hat w^{\mathsf{DID}}) - \mathsf{MER}_{1,P}(w_P^{\mathsf{DID}})\right| \le C \overline m^2 (K+1)\left\{\frac{1}{\pi_0 \sqrt{n}} + \frac{1}{\pi_0^2 n} + \exp\left( - \frac{\pi_0 n}{8}\right) \right\}.
   \end{align*}
 \end{proposition}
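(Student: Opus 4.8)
The plan is to decompose the target quantity by the triangle inequality in the same way as the first display of the proof of Proposition~\ref{prop: Regret Approx SCD}, and then to exploit that the DID weight $\hat w^{\mathsf{DID}}$ is far easier to control than $\hat w^{\mathsf{SCD}}$ because it is defined purely from group-membership frequencies rather than through a pre-treatment optimization. Concretely, I would write
\begin{align*}
    \left| \mathsf{Regret}_{1,P}(\hat w^{\mathsf{DID}}) - \mathsf{MER}_{1,P}(w_P^{\mathsf{DID}}) \right|
    &\le \left| \mathsf{Regret}_{1,P}(\hat w^{\mathsf{DID}}) - \mathbf{E}_P\left[\mathsf{MER}_{1,P}(\hat w^{\mathsf{DID}})\right] \right| \\
    &\quad + \left| \mathbf{E}_P\left[\mathsf{MER}_{1,P}(\hat w^{\mathsf{DID}})\right] - \mathsf{MER}_{1,P}(w_P^{\mathsf{DID}}) \right|,
\end{align*}
and treat the two terms separately. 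Since $\hat w^{\mathsf{DID}} \in \Delta_{K-1}$ by construction, the first term is immediately bounded by Lemma~\ref{lemma: regret approx} together with Lemma~\ref{lemma: Rn}(i), giving $2(8\overline m\,\mathbf{E}_P[\hat D_{1,1}] + 2\,\mathbf{E}_P[\hat D_{1,2}^2])$.

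For the second term, I would use that the infimum appearing in the definition of $\mathsf{MER}_{1,P}(\cdot)$ does not depend on the weight argument, so the two $\mathsf{MER}_{1,P}$ values differ only through $\eta_{1,P}$; that is, the term equals $\mathbf{E}_P[|\eta_{1,P}(\hat w^{\mathsf{DID}}) - \eta_{1,P}(w_P^{\mathsf{DID}})|]$. This is exactly the quantity controlled in Lemma~\ref{lemm:eta bound}, only with a different pair of weights: writing the difference of squared matching errors as a product of a sum and a difference and invoking the uniform fourth-moment bound~(\ref{bounds}), I obtain the linear-in-weights bound $8\overline m^2 \sum_{j=1}^K |\hat w_j^{\mathsf{DID}} - w_{j,P}^{\mathsf{DID}}|$. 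Taking expectations and applying Lemma~\ref{lemm: wj DID} then yields $8\overline m^2 K \left(\frac{2}{\pi_0 \sqrt{n}} + \exp(-\pi_0 n/8)\right)$, uniformly over $P \in \mathcal{P}$.

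Finally, I would collect the two bounds and replace $\mathbf{E}_P[\hat D_{1,1}]$ and $\mathbf{E}_P[\hat D_{1,2}^2]$ using Lemma~\ref{lemm: D1} with the choice $v = \tilde p_j/2$ together with $\tilde p_j \ge \pi_0$, exactly as in the proof of Proposition~\ref{prop: Regret Approx SCD}; this produces the $A_n$ and $B_n$ contributions $\frac{1}{\pi_0 \sqrt{n}} + \frac{1}{\pi_0^2 n} + \exp(-\pi_0 n/8)$, which together with the $K$-scaled term from the second piece give the stated $(K+1)\overline m^2$ bound. I do not anticipate a genuine obstacle here, since the statement is essentially a simplified corollary of the SCD machinery: unlike the SCD case, no $\mathsf{MER}_0$ term or strong-identification argument (Lemmas~\ref{lemma: strong ID SCD}--\ref{lemm:eta bound}) is needed, because $\hat w^{\mathsf{DID}}$ does not depend on outcomes at all. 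The only point requiring mild care is that $\hat w^{\mathsf{DID}}$ is a ratio with a random denominator (and is ill-defined on the event that no donor units are sampled), but this is precisely what the indicator splitting in Lemma~\ref{lemm: wj DID} handles, so that step can be cited rather than redone.
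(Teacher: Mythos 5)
Your proposal is correct and follows essentially the same route as the paper's proof: the same triangle-inequality decomposition, the same use of Lemmas \ref{lemma: regret approx}, \ref{lemma: Rn}(i) and \ref{lemm: D1} for the first term, and the same factorization of $\eta_{1,P}(\hat w^{\mathsf{DID}})-\eta_{1,P}(w_P^{\mathsf{DID}})$ into a bounded sum times a difference controlled by Lemma \ref{lemm: wj DID}. The only deviation is the constant in the linear-in-weights bound ($8\overline m^2$ versus the paper's $16\overline m^2$), which is immaterial since the statement only asserts a universal constant $C$.
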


 \noindent \textbf{Proof: } By Lemmas \ref{lemma: regret approx}, \ref{lemma: Rn}(i), and \ref{lemm: D1} with $v = \pi_0/2$ in the bound, we have
 \begin{align*}
    \sup_{P \in \mathcal{P}} \left| \mathsf{Regret}_{1,P}(\hat w^{\mathsf{DID}}) - \mathbf{E}_P\left[\mathsf{MER}_{1,P}(\hat w^{\mathsf{DID}})\right]\right| &\le 2 \sup_{P \in \mathcal{P}} \mathbf{E}_P\left[ \sup_{w \in \Delta_{K-1}} | R_{n,P}(w) | \right]  \\ \notag
    &\le C' \overline m^2 (K+1) \left\{\frac{1}{\pi_0 \sqrt{n}} + \frac{1}{\pi_0^2 n} + \exp\left( - \frac{\pi_0 n}{8}\right) \right\},
 \end{align*}
 for some universal constant $C'$. We bound $\mathbf{E}_P\left[\mathsf{MER}_{1,P}(\hat w^{\mathsf{DID}})\right] - \mathsf{MER}_{1,P}(w_P^{\mathsf{DID}})$ as follows:
 \begin{align}
    \label{terms}   
     &\mathbf{E}_P\left[ \eta_{1,P}(\hat w^{\mathsf{DID}}) - \eta_{1,P}(w_P^{\mathsf{DID}})\right]\\ \notag
     &\le \frac{1}{|\mathcal{T}_1|} \sum_{t \in \mathcal{T}_1} \mathbf{E}_P\left[ \left| e_t(\lambda,\hat w^{\mathsf{DID}}) + e_t(\lambda, w_P^{\mathsf{DID}}) \right|\left| e_t(\lambda,\hat w^{\mathsf{DID}}) - e_t(\lambda, w_P^{\mathsf{DID}}) \right|\right].
 \end{align}
 Note that 
 \begin{align*}
    \left| e_t(\lambda,\hat w^{\mathsf{DID}}) - e_t(\lambda, w_P^{\mathsf{DID}}) \right| &\le \sum_{j=1}^K|\mu_{j,t}(0;\lambda)| |\hat w_j^{\mathsf{DID}} - w_{j,P}^{\mathsf{DID}}| \\
    &\le 2 \overline m \sum_{j=1}^K |\hat w_j^{\mathsf{DID}} - w_{j,P}^{\mathsf{DID}}|. 
 \end{align*}
 On the other hand, 
 \begin{align*}
    \left| e_t(\lambda,\hat w^{\mathsf{DID}}) + e_t(\lambda, w_P^{\mathsf{DID}}) \right| \le 2|\mu_{0,t}(0;\lambda)| + \sum_{j=1}^K |\mu_{j,t}(0;\lambda)|(\hat w_j^{\mathsf{DID}} + w_{j,P}^{\mathsf{DID}}) \le 8 \overline m.
 \end{align*}
 Hence, the last term in (\ref{terms}) is bounded by 
 \begin{align*}
    16 \overline m^2 \sup_{P \in \mathcal{P}} \sum_{j=1}^K \mathbf{E}_P\left[ \left| \hat w_j^{\mathsf{DID}} - w_{j,P}^{\mathsf{DID}}\right| \right] 
    \le 16 \overline m^2 K \left\{\frac{2}{\pi_0 \sqrt{n}} + \exp\left( - \frac{\pi_0 n}{8}\right) \right\},
 \end{align*}
 by Lemma \ref{lemm: wj DID}. Thus, since $\overline m \ge 1$, we have the desired result. $\blacksquare$\medskip

\noindent \textbf{Proof of Theorem \ref{thm: regret analysis}: } The desired result follows from Propositions \ref{prop: Regret Approx SCD} and \ref{prop: Regret Approx DID}. $\blacksquare$

\section{Proofs of the Results in Section \ref{sec: SCD}}

\noindent \textbf{Proof of Theorem \ref{theorem:3.1}: } The first result follows from Lemmas \ref{lemm: w bound} and \ref{lemm: D1}. We can see that the second result follows from the first result using the standard arguments. Details are omitted. $\blacksquare$

\begin{theorem} 
    \label{thm: asymptotic validity}
    Suppose that Assumptions \ref{assump: sampling design} and \ref{assump: nonsingularity}, and (\ref{bounds}) in Assumption \ref{assump: moments} hold. Then, for any $\alpha \in (0,1)$, as $n \rightarrow \infty$, we have
\begin{align*}
  \liminf\limits_{n\rightarrow\infty} \inf_{P \in \mathcal{P}} P\left\{w^*(\lambda) \in \tilde C_{1-\alpha}\right\} \geq 1-\alpha.
\end{align*}
\end{theorem}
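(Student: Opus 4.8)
The plan is to follow the cone-projection / test-inversion approach of \cite{Canen&Song@arXiv2025}, adapting it to the present estimating-equation structure. First, I would pin down the population object via the Karush--Kuhn--Tucker conditions for the quadratic program defining $w^*(\lambda)$. Since $w^*(\lambda)$ minimizes $\frac12 w'Hw - \boldsymbol h'w$ over $\Delta_{K-1}$, there exist a scalar $\mu^*$ and a vector $r^* \ge 0$ with $w^{*\prime}r^* = 0$ (so $r^*$ is supported on the active set $A = \{j: w^*_j(\lambda) = 0\}$) such that
\begin{align*}
    \varphi_P(w^*(\lambda)) = H w^*(\lambda) - \boldsymbol h = \mu^* \mathbf 1 + r^*.
\end{align*}
Because the columns of $B_2$ are orthogonal to $\mathbf 1$, i.e.\ $B_2'\mathbf 1 = 0$, this yields the population identity $B_2'(\varphi_P(w^*) - r^*) = 0$, with $r^*$ in the feasible cone $R(w^*) = \{r \ge 0 : r_j = 0 \text{ whenever } w^*_j > 0\}$ that defines $\hat r(\cdot)$ at $w = w^*$. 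Exploiting positive homogeneity of $R(w^*)$, I would rescale the program defining $\hat r(w^*)$ by $\sqrt n$ and, using $M_n := B_2 \hat V^{-1}(w^*) B_2'$ with $M_n \mathbf 1 = 0$ to absorb the $\mu^*\mathbf 1$ term, rewrite the statistic as a squared distance to a cone,
\begin{align*}
    T(w^*) = \min_{\rho \in R(w^*)} (\tilde S_n - \rho)' M_n (\tilde S_n - \rho), \quad \tilde S_n := \sqrt n(\hat\varphi(w^*) - \varphi_P(w^*)) + \sqrt n\, r^*.
\end{align*}

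Second, I would establish the distributional inputs. Starting from the given asymptotically linear representation of $\hat\mu_{j,t}$, a delta-method expansion gives $\sqrt n(\hat\varphi(w^*) - \varphi_P(w^*)) = n^{-1/2}\sum_{i} (\sum_j w^*_j \boldsymbol z_{ij} - \boldsymbol z_{i0}) + o_P(1)$, and a triangular-array Lindeberg CLT, justified uniformly over $\mathcal P$ by the fourth-moment bound in (\ref{bounds}), gives $\sqrt n B_2'(\hat\varphi(w^*) - \varphi_P(w^*)) \to_d N(0, V_P(w^*))$. I would pair this with uniform consistency $\hat V(w^*) = V_P(w^*) + o_P(1)$ and with Assumption \ref{assump: nonsingularity}, which keeps $V_P(w^*)$ bounded and bounded away from singularity, so that $M_n$ is well behaved and $\hat V^{-1}(w^*)$ exists with probability approaching one.

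Third, and this is the crux, I would carry out the face-selection argument that matches $T(w^*)$ to the data-dependent critical value $\hat c_{1-\kappa}(w^*)$. Geometrically, $\tilde S_n$ is a Gaussian term plus a drift $\sqrt n\, r^* \in R(w^*)$, and $\hat d(w^*)$, the number of zeros of $M_n(\hat\varphi(w^*) - \hat r(w^*))$, consistently reads off the dimension of the face of $R(w^*)$ onto which $\tilde S_n$ projects, i.e.\ the number of strictly binding multipliers. Conditionally on projecting into the relative interior of a face of dimension $\hat d$, the cone residual $T(w^*)$ is asymptotically $\chi^2$ with $K-1-\hat d$ degrees of freedom, so comparing to the $(1-\kappa)$-quantile of $\chi^2_{\hat k(w^*)}$ with $\hat k = \max\{K-1-\hat d, 1\}$ yields asymptotically correct conditional coverage, hence marginal coverage at least $1-\kappa$; the truncation at $1$ only makes the procedure conservative. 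For uniformity I would use the subsequence-plus-drifting-parameter device: along an arbitrary sequence $P_n$, pass to a subsequence along which $\sqrt n\, r^*_{P_n,j} \to c_j \in [0,\infty]$ for each $j \in A$, partitioning the active constraints into strongly binding ($c_j = \infty$), weakly binding ($c_j \in (0,\infty)$), and non-binding ($c_j = 0$) groups, and then verify in each limiting regime that the limit law of $T(w^*)$ is stochastically dominated by that of $\chi^2_{\hat k(w^*)}$.

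The main obstacle will be precisely this last uniformity step: the degrees-of-freedom $\hat k(w^*)$ is a discontinuous functional of the drift $\sqrt n\, r^*$, so the limiting distribution of $T(w^*)$ changes across the binding, weakly-binding, and non-binding regimes, and I must show that $\hat d(w^*)$ selects the correct limiting face with probability tending to one in each regime, and in particular does not under-count in the weakly-binding boundary case, which is where the conservativeness is needed. Controlling this selection event uniformly in $P$ rather than pointwise, and verifying stochastic dominance in the degenerate regimes, is the delicate part, and is exactly where I would lean on the general cone-inversion results of \cite{Canen&Song@arXiv2025}.
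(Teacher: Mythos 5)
Your proposal is correct and follows essentially the same route as the paper: the paper's proof likewise sets up the polyhedral cone $\Lambda(w)$ and its polar, verifies the CLT for $\sqrt{n}B_2'(\hat\varphi(w_n)-\varphi_{P_n}(w_n))$, the uniform consistency of $\hat V(w_n)$, and the eigenvalue bounds supplied by Assumption \ref{assump: nonsingularity}, and then invokes Lemma 3.1 of \cite{Canen&Song@arXiv2025} as a black box for exactly the face-selection and data-dependent degrees-of-freedom argument you sketch. The additional detail you supply (the KKT characterization of $w^*(\lambda)$, the rescaled cone-distance representation of $T(w^*)$, and the drifting-sequence analysis across binding regimes) is the internal content of that cited lemma rather than something the paper re-derives.
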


\noindent \textbf{Proof: } For any vector $x = (x_k)_{k=1}^K\in \mathbf{R}^K$, we write $J_0[x] = \{1 \le k \le K: x_k = 0\}$. We define 
\begin{align*}
    \Lambda(w) &= \{B_2' \lambda \in \mathbf{R}^{K-1}: w' \lambda = 0, \lambda \ge 0 \} \text{ and }\\
    \Lambda^\circ(w,\hat V(w)) &= \{x \in \mathbf{R}^{K-1}: [B_2 \hat V^{-1}(w) x]_{J_0[w]} \le 0 \}.
\end{align*}
It is not hard to see that $\Lambda(w)$ is a polyhedral cone and $\Lambda^\circ(w,\hat V(w))$ its polar cone along $\|\cdot\|_{\hat V(w)}$, where $\|x\|_{\hat V(w)}^2 = x' \hat V^{-1}(w) x$. We let $Y_n(w) = \sqrt{n} B_2' \hat \varphi(w)$. For any vector $y \in \mathbf{R}^{K-1}$ and a closed convex subset $C \subset \mathbf{R}^{K-1}$, the projection of $y$ onto $C$ along $\|\cdot\|_{\hat V(w)}$ is denoted by $\Pi_{\hat V(w)}(y \mid C)$. Then, we can write $\hat d(w)$ as follows:
\begin{align*}
    \hat d(w) &= |J_0[B_2 \hat V^{-1}(w) B_2'(\hat \varphi(w) - \hat \lambda(w))]|\\
              &= |J_0[B_2 \hat V^{-1}(w) (Y_n(w) - \Pi_{\hat V(w)}(Y_n(w) \mid \Lambda(w)))]|\\ 
              &= |J_0[B_2 \hat V^{-1}(w) \Pi_{\hat V(w)}(Y_n(w) \mid \Lambda^\circ(w, \hat V(w)))]|.
\end{align*}
It suffices to show that for each sequence $P_n \in \mathcal{P}$ and each sequence $w_n \in \mathbb{W}_{P_n}$, 
\begin{align*}
    \lim_{n \to \infty} P_n\left\{ T_{\mathsf{bf}}(w_n) > \hat c_{1-\kappa,\mathsf{bf}}(w_n)\right\} \le \kappa.
\end{align*}
We apply Lemma 3.1 of \cite{Canen/Song:arXiv:25} by setting $L=0$,
\begin{align*}
    Y_n = Y_n(w_n),\enspace \hat \Omega_n = \hat V(w_n),\enspace \Omega_n = V_{P_n}(w_n), \text{ and } \mu_n(w_n) = \sqrt{n} B_2' \varphi_{P_n}(w_n). 
\end{align*}
For this, we check Assumption 3.1 of \cite{Canen/Song:arXiv:25}. First, note that 
\begin{align}
    \label{consist hat V}
    \hat V(w_n) - V_{P_n}(w_n) = o_P(1),
\end{align}
by Assumption \ref{assump: sampling design} and the Law of Large Numbers, and
\begin{align*}
    Z_n \eqdef \hat \Omega_n^{-1/2}(Y_n - \mu_n) = \hat V^{-1/2}(w_n) \sqrt{n}B_2' (\hat \varphi(w_n) - \varphi_{P_n}(w_n)) \to_d N(0,I_{K-1}),
\end{align*}
as $n \to \infty$, by (\ref{consist hat V}) and the Central Limit Theorem applied to independent random variables, together with the condition (\ref{bounds}) in Assumption \ref{assump: moments}. Furthermore, by Assumption \ref{assump: nonsingularity}, for some constants $C,c>0$,
\begin{align*}
    \lambda_{\min}(V_{P_n}(w_n)) > c \text{ and } \|V_{P_n}(w_n)\| < C.
\end{align*}
Thus, Assumption 3.1 in \cite{Canen/Song:arXiv:25} is satisfied, and the desired result follows from their Lemma 3.1. $\blacksquare$\medskip

\noindent \textbf{Proof of Theorem \ref{thm: asymptotic validity2}: } We consider the first statement. By the SMC at $\lambda$, $\theta_t(\lambda, w^*(\lambda)) = \theta_t^*.$ Let 
\begin{align*}
    \tilde \psi_{ij,t} = \frac{n}{n_t} \frac{1\{G_{i,t} = j\}}{\hat p_{j,t}}(y_{i,t} - \mu_{j,t}).
\end{align*}
Recall $\hat p_{j,t} = n_{j,t}/n_t$. Note that 
\begin{align}
    \label{asymp lin}
    \sqrt{n}(\hat \theta_t(w) - \theta_t^*) &= \frac{1}{\sqrt{n}}\sum_{i=1}^n \left(\tilde \psi_{i0,t} - \sum_{j=1}^K \tilde \psi_{ij,t} w_{j}\right)\\ \notag
    &= \frac{1}{\sqrt{n}}\sum_{i=1}^n \left(\psi_{i0,t} - \sum_{j=1}^K \psi_{ij,t} w_{j}\right) + R_n(w),
\end{align}
where 
\begin{align*}
    R_n(w) &= \frac{1}{\sqrt{n}}\sum_{i=1}^n \frac{n}{n_t} 1\{G_{i,t} = 0\} \left(\frac{1}{\hat p_{0,t}} - \frac{1}{p_{0,t}} \right)(y_{i,t} - \mu_{0,t}) \\
    &\quad \quad + \sum_{j=1}^K \frac{1}{\sqrt{n}}\sum_{i=1}^n \frac{n}{n_t} 1\{G_{i,t} = j\} \left(\frac{1}{\hat p_{j,t}} - \frac{1}{p_{j,t}} \right)(y_{i,t} - \mu_{j,t}) w_{j}.
\end{align*}
Using standard arguments, we can show that 
\begin{align*}
    \sup_{w \in \Delta_{K-1}} |R_n(w)| = o_P(1),
\end{align*}
as $n \to \infty$. Using similar arguments, we can show that 
\begin{align*}
    \sup_{w \in \Delta_{K-1}} |\hat \sigma_t^2(w) - \sigma_t^2(w)| = o_P(1),
\end{align*}
where $\sigma^2(w) = \frac{1}{n}\sum_{i=1}^n \mathbf{E}_{P_n}\left[(\psi_{i0,t} - \sum_{j=1}^K \psi_{ij,t} w_j)^2\right]$.

Thus, we find that
\begin{align}
    \label{bd231}
    \sup_{w \in \Delta_{K-1}} P_n\left\{ \left| \frac{\sqrt{n}(\hat \theta_t(w) - \theta_t^*)}{\hat \sigma_t(w)} \right| > z_{1-\beta(\alpha,\kappa)} \right\} \le \alpha - \kappa + o(1),
\end{align}
by the Central Limit Theorem applied to the asymptotic linear representation in (\ref{asymp lin}).

First, take any sequence $P_n \in \mathcal{P}$. Note that 
\begin{align*}
    P_n\{\theta_t^* \in C_{1-\alpha}^{\mathsf{bf}}\} &= P_n\left\{ \inf_{w \in \tilde C_{1-\kappa}} \left| \frac{\sqrt{n}(\hat \theta_t(w) - \theta_t^*)}{\hat \sigma_t(w)} \right| \le z_{1-\beta(\alpha,\kappa)} \right\} \\
    &\ge \inf_{w \in \mathbb{W}_{P_n}} P_n\left\{ w \in \tilde C_{1-\kappa}, \left| \frac{\sqrt{n}(\hat \theta_t(w) - \theta_t^*)}{\hat \sigma_t(w)} \right| \le z_{1-\beta(\alpha,\kappa)} \right\}\\
    &\ge 1 - \sup_{w \in \mathbb{W}_{P_n}} P_n\left\{ w \notin \tilde C_{1-\kappa} \right\} - \sup_{w \in \mathbb{W}_{P_n}} P_n\left\{ \left| \frac{\sqrt{n}(\hat \theta_t(w) - \theta_t^*)}{\hat \sigma_t(w)} \right| > z_{1-\beta(\alpha,\kappa)} \right\}\\
    &\ge 1 - \kappa - (\alpha - \kappa) + o(1) = 1 - \alpha + o(1),
\end{align*}
by Theorem \ref{thm: asymptotic validity} and (\ref{bd231}). $\blacksquare$
\newpage

\section{Additional Simulation Results}\label{sec: AdditionalResults}
\begin{table}[h]
\centering
\caption{Results for Monte Carlo Simulations with Different Group Sizes.}
\label{table:simulation_results largep}
\small
\begin{tabular*}{\textwidth}{@{\extracolsep{\fill}}ccccccccc}
\toprule\toprule
\multicolumn{3}{c}{Parameters} &
\multicolumn{2}{c}{MAD} &
\multicolumn{2}{c}{Coverage Probability} &
\multicolumn{2}{c}{CI Length} \\
\midrule
$K$ & $T$ & $n$ & SCD & CSDID & SCD & CSDID & SCD & CSDID \\
\midrule
\multicolumn{9}{l}{\textit{Scenario A: PTA and SMC hold}} \\[0.2em]
10 & 50  & 1500 & 0.088 & 0.147 & 1.000 & 0.999 & 1.552 & 1.218 \\
10 & 50  & 3000 & 0.062 & 0.106 & 1.000 & 0.998 & 1.030 & 0.868 \\
10 & 100 & 1500 & 0.082 & 0.148 & 1.000 & 1.000 & 1.475 & 1.284 \\
10 & 100 & 3000 & 0.059 & 0.109 & 0.999 & 0.997 & 0.971 & 0.917 \\
30 & 50  & 1500 & 0.139 & 0.235 & 1.000 & 0.997 & 2.497 & 1.801 \\
30 & 50  & 3000 & 0.094 & 0.172 & 1.000 & 0.997 & 1.865 & 1.305 \\
30 & 100 & 1500 & 0.131 & 0.226 & 1.000 & 0.999 & 2.316 & 1.893 \\
30 & 100 & 3000 & 0.093 & 0.160 & 1.000 & 0.999 & 1.718 & 1.366 \\
\midrule
\multicolumn{9}{l}{\textit{Scenario B: PTA fails but SMC holds}} \\[0.2em]
10 & 50  & 1500 & 0.089 & 1.623 & 0.998 & - & 0.985 & - \\
10 & 50  & 3000 & 0.065 & 1.625 & 0.997 & - & 0.672 & - \\
10 & 100 & 1500 & 0.085 & 1.596 & 0.996 & - & 0.963 & - \\
10 & 100 & 3000 & 0.062 & 1.592 & 0.995 & - & 0.657 & - \\
30 & 50  & 1500 & 0.142 & 1.981 & 0.999 & - & 1.605 & - \\
30 & 50  & 3000 & 0.098 & 1.964 & 0.996 & - & 1.110 & - \\
30 & 100 & 1500 & 0.133 & 1.890 & 0.997 & - & 1.434 & - \\
30 & 100 & 3000 & 0.093 & 1.886 & 0.994 & - & 0.964 & - \\
\midrule
\multicolumn{9}{l}{\textit{Scenario C: PTA holds but SMC fails}} \\[0.2em]
10 & 50  & 1500 & 1.039 & 0.147 & - & 0.999 & - & 1.218 \\
10 & 50  & 3000 & 1.034 & 0.106 & - & 0.998 & - & 0.869 \\
10 & 100 & 1500 & 1.009 & 0.148 & - & 1.000 & - & 1.284 \\
10 & 100 & 3000 & 1.009 & 0.109 & - & 0.997 & - & 0.917 \\
30 & 50  & 1500 & 1.282 & 0.235 & - & 0.997 & - & 1.802 \\
30 & 50  & 3000 & 1.294 & 0.172 & - & 0.997 & - & 1.306 \\
30 & 100 & 1500 & 1.162 & 0.226 & - & 0.999 & - & 1.894 \\
30 & 100 & 3000 & 1.149 & 0.160 & - & 0.999 & - & 1.366 \\
\bottomrule\bottomrule
\end{tabular*}
\medskip
\medskip
\vspace{0.01cm}
\parbox{6.4in}{\footnotesize
\textit{Notes:}  This table presents the simulation results for Scenarios A, B, and C described in subsection \ref{subsec: Monte Carlo}, when there is one large group in the donor pool. More specifically, we set $p = [0.8/K, \ldots, 0.8/K, 0.2]'$ for $K=10$, and $p = [0.925/K, \ldots, 0.925/K, 0.075]'$ for $K=30$. The table reports Mean Absolute Deviation (MAD), Coverage Probability, and Confidence Interval (CI) Length for our proposed method of Synthetic Control with Differencing (SCD) and the Difference-in-Differences estimator proposed by \cite{Callaway/SantAnna:JoE:21} (CSDID). Inference for SCD is conducted using Algorithm \ref{alg:confidence_intervals 1}. Inference results are only reported for cases where the target parameter is identified. The number of MC simulations is 1,000.}
\end{table}

\begin{table}[t]
\centering
\caption{Results for Monte Carlo Simulations with Different Time Factors.}
\label{table:simulation_results_heteroF}
\small
\begin{tabular*}{\textwidth}{@{\extracolsep{\fill}}ccccccccc}
\toprule\toprule
\multicolumn{3}{c}{Parameters} &
\multicolumn{2}{c}{MAD} &
\multicolumn{2}{c}{Coverage Probability} &
\multicolumn{2}{c}{CI Length} \\
\midrule
$K$ & $T$ & $n$ & SCD & CSDID & SCD & CSDID & SCD & CSDID \\
\midrule
\multicolumn{9}{l}{\textit{Scenario A: PTA and SMC hold}} \\[0.2em]
10 & 50  & 1500 & 0.079 & 0.147 & 0.999 & 0.998 & 1.472 & 1.166 \\
10 & 50  & 3000 & 0.059 & 0.104 & 1.000 & 0.999 & 0.964 & 0.829 \\
10 & 100 & 1500 & 0.079 & 0.147 & 1.000 & 0.999 & 1.403 & 1.232 \\
10 & 100 & 3000 & 0.054 & 0.100 & 1.000 & 0.998 & 0.905 & 0.872 \\
30 & 50  & 1500 & 0.128 & 0.229 & 1.000 & 0.999 & 2.449 & 1.766 \\
30 & 50  & 3000 & 0.094 & 0.163 & 1.000 & 1.000 & 1.830 & 1.275 \\
30 & 100 & 1500 & 0.130 & 0.228 & 0.999 & 0.998 & 2.277 & 1.867 \\
30 & 100 & 3000 & 0.090 & 0.157 & 1.000 & 0.999 & 1.721 & 1.351 \\
\midrule
\multicolumn{9}{l}{\textit{Scenario B: PTA fails but SMC holds}} \\[0.2em]
10 & 50  & 1500 & 0.091 & 1.914 & 1.000 & - & 1.003 & - \\
10 & 50  & 3000 & 0.068 & 1.908 & 0.996 & - & 0.683 & - \\
10 & 100 & 1500 & 0.095 & 1.861 & 0.995 & - & 0.969 & - \\
10 & 100 & 3000 & 0.062 & 1.858 & 0.997 & - & 0.654 & - \\
30 & 50  & 1500 & 0.138 & 2.082 & 1.000 & - & 1.593 & - \\
30 & 50  & 3000 & 0.103 & 2.074 & 0.997 & - & 1.069 & - \\
30 & 100 & 1500 & 0.139 & 2.005 & 0.994 & - & 1.383 & - \\
30 & 100 & 3000 & 0.096 & 1.982 & 0.991 & - & 0.926 & - \\
\midrule
\multicolumn{9}{l}{\textit{Scenario C: PTA holds but SMC fails}} \\[0.2em]
10 & 50  & 1500 & 1.173 & 0.147 & - & 0.998 & - & 1.165 \\
10 & 50  & 3000 & 1.176 & 0.104 & - & 0.999 & - & 0.830 \\
10 & 100 & 1500 & 1.179 & 0.147 & - & 0.999 & - & 1.233 \\
10 & 100 & 3000 & 1.186 & 0.100 & - & 0.999 & - & 0.872 \\
30 & 50  & 1500 & 1.358 & 0.229 & - & 0.999 & - & 1.770 \\
30 & 50  & 3000 & 1.399 & 0.163 & - & 1.000 & - & 1.275 \\
30 & 100 & 1500 & 1.332 & 0.228 & - & 0.998 & - & 1.867 \\
30 & 100 & 3000 & 1.354 & 0.157 & - & 0.999 & - & 1.352 \\
\bottomrule\bottomrule
\end{tabular*}
\medskip
\medskip
\vspace{0.01cm}
\parbox{6.4in}{\footnotesize
Notes: This table presents the simulation results for Scenarios A, B, and C described in subsection \ref{subsec: Monte Carlo}, modifying the distribution of time factors as follows: $F_t \sim N(\xi \sqrt{t}, 0.5^2 \cdot I_3)$, where $\xi=[0.01, 0.02, 0.04]'$. The table reports Mean Absolute Deviation (MAD), Coverage Probability, and Confidence Interval (CI) Length for our proposed method of Synthetic Control with Differencing (SCD) and the Difference-in-Differences estimator proposed by \cite{Callaway/SantAnna:JoE:21} (CSDID). Inference for SCD is conducted using Algorithm \ref{alg:confidence_intervals 1}. Inference results are only reported for cases where the target parameter is identified. The number of MC simulations is 1,000.}
\end{table}

\begin{table}[t]
\centering
\caption{Results for Monte Carlo Simulations with Different Factor Loadings.}
\label{table:simulation_results_tloading}
\small
\begin{tabular*}{\textwidth}{@{\extracolsep{\fill}}ccccccccc}
\toprule\toprule
\multicolumn{3}{c}{Parameters} &
\multicolumn{2}{c}{MAD} &
\multicolumn{2}{c}{Coverage Probability} &
\multicolumn{2}{c}{CI Length} \\
\midrule
$K$ & $T$ & $n$ & SCD & CSDID & SCD & CSDID & SCD & CSDID \\
\midrule
\multicolumn{9}{l}{\textit{Scenario A: PTA and SMC hold}} \\[0.2em]
10 & 50  & 1500 & 0.079 & 0.158 & 0.999 & 1.000 & 1.810 & 1.274 \\
10 & 50  & 3000 & 0.059 & 0.116 & 1.000 & 0.998 & 1.224 & 0.907 \\
10 & 100 & 1500 & 0.079 & 0.160 & 1.000 & 1.000 & 1.731 & 1.342 \\
10 & 100 & 3000 & 0.054 & 0.112 & 1.000 & 1.000 & 1.165 & 0.955 \\
30 & 50  & 1500 & 0.128 & 0.260 & 1.000 & 0.998 & 2.965 & 1.954 \\
30 & 50  & 3000 & 0.094 & 0.178 & 1.000 & 1.000 & 2.245 & 1.414 \\
30 & 100 & 1500 & 0.130 & 0.249 & 1.000 & 0.998 & 2.795 & 2.069 \\
30 & 100 & 3000 & 0.090 & 0.175 & 1.000 & 0.999 & 2.130 & 1.488 \\
\midrule
\multicolumn{9}{l}{\textit{Scenario B: PTA fails but SMC holds}} \\[0.2em]
10 & 50  & 1500 & 0.094 & 1.918 & 0.999 & - & 1.202 & - \\
10 & 50  & 3000 & 0.070 & 1.906 & 1.000 & - & 0.833 & - \\
10 & 100 & 1500 & 0.093 & 1.863 & 0.998 & - & 1.165 & - \\
10 & 100 & 3000 & 0.065 & 1.858 & 0.997 & - & 0.804 & - \\
30 & 50  & 1500 & 0.141 & 2.084 & 1.000 & - & 1.952 & - \\
30 & 50  & 3000 & 0.101 & 2.074 & 0.999 & - & 1.350 & - \\
30 & 100 & 1500 & 0.138 & 2.009 & 0.996 & - & 1.728 & - \\
30 & 100 & 3000 & 0.098 & 1.980 & 0.999 & - & 1.192 & - \\
\midrule
\multicolumn{9}{l}{\textit{Scenario C: PTA holds but SMC fails}} \\[0.2em]
10 & 50  & 1500 & 1.159 & 0.158 & - & 1.000 & - & 1.276 \\
10 & 50  & 3000 & 1.155 & 0.116 & - & 0.998 & - & 0.909 \\
10 & 100 & 1500 & 1.134 & 0.160 & - & 1.000 & - & 1.342 \\
10 & 100 & 3000 & 1.128 & 0.112 & - & 1.000 & - & 0.955 \\
30 & 50  & 1500 & 1.339 & 0.260 & - & 0.998 & - & 1.956 \\
30 & 50  & 3000 & 1.341 & 0.178 & - & 1.000 & - & 1.414 \\
30 & 100 & 1500 & 1.230 & 0.249 & - & 0.998 & - & 2.070 \\
30 & 100 & 3000 & 1.213 & 0.175 & - & 0.999 & - & 1.490 \\
\bottomrule\bottomrule
\end{tabular*}
\medskip
\medskip
\vspace{0.01cm}
\parbox{6.4in}{\footnotesize
Notes: This table presents the simulation results for Scenarios A, B, and C described in subsection \ref{subsec: Monte Carlo}, modifying the distribution of individual factor loadings as follows: $\Lambda_i|G_i \sim t_5(m_{G_i}, I_3)$. The table reports Mean Absolute Deviation (MAD), Coverage Probability, and Confidence Interval (CI) Length for our proposed method of Synthetic Control with Differencing (SCD) and the Difference-in-Differences estimator proposed by \cite{Callaway/SantAnna:JoE:21} (CSDID). Inference for SCD is conducted using Algorithm \ref{alg:confidence_intervals 1}. Inference results are only reported for cases where the target parameter is identified. The number of MC simulations is 1,000.}
\end{table}

\end{document}